\newcommand\sign {\ensuremath{\operatorname{\mathrm{sign}}}}
\newcommand{\indep}{\rotatebox[origin=c]{90}{$\models$}}
\DeclareMathOperator*{\argmin}{arg\,min}
\def\Holder{{H\"{o}lder}}
\newcommand{\pa}{\mathrm{\pa}}
\newcommand{\RN}[1]{%
  \textup{\uppercase\expandafter{\romannumeral#1}}%
}
\newcommand{\kibitz}[2]{\ifnum\Comments=1\textcolor{#1}{#2}\fi}
\def\Holder{{H\"{o}lder}}
\theoremstyle{plain}
\newtheorem{theorem}{Theorem}[section]
\newtheorem{proposition}[theorem]{Proposition}
\newtheorem{lemma}[theorem]{Lemma}
\theoremstyle{definition}
\newtheorem{definition}[theorem]{Definition}
\newtheorem{assumption}[theorem]{Assumption}
\theoremstyle{remark}
\title{Triple/Debiased Lasso for Statistical Inference of Conditional Average Treatment Effects}
\author{Masahiro Kato}
\affil{Department of Basic Science, the University of Tokyo}
\begin{document}

\maketitle

\begin{abstract}
This study investigates the estimation and the statistical inference about \emph{Conditional Average Treatment Effects} (CATEs), which have garnered attention as a metric representing individualized causal effects. In our data-generating process, we assume linear models for the outcomes associated with binary treatments and define the CATE as a difference between the expected outcomes of these linear models. This study allows the linear models to be high-dimensional, and our interest lies in consistent estimation and statistical inference for the CATE. In high-dimensional linear regression, one typical approach is to assume sparsity. However, in our study, we do not assume sparsity directly. Instead, we consider sparsity only in the difference of the linear models. We first use a doubly robust estimator to approximate this difference and then regress the difference on covariates with Lasso regularization. Although this regression estimator is consistent for the CATE, we further reduce the bias using the techniques in double/debiased machine learning (DML) and debiased Lasso, leading to $\sqrt{n}$-consistency and confidence intervals. We refer to the debiased estimator as the triple/debiased Lasso (TDL), applying both DML and debiased Lasso techniques. We confirm the soundness of our proposed method through simulation studies.
\end{abstract}

\section{Introduction}
The estimation of causal effects from observational data, especially in the context of binary treatments, is a crucial task in various disciplines, including economics \citep{Wager2018}, medicine \citep{Assmann2000}, and online advertising \citep{Bottou13}. Our focus is on estimating Conditional Average Treatment Effects \citep[CATEs,][]{hahn1998role,Heckman1997,Abrevaya2015}, the difference between expected outcomes of binary treatments conditioned on covariates. CATEs play an important role as a metric that represents individual heterogeneous treatment effects.

In this study, given observations generated from a distribution, we consider estimating the CATE. There are two main challenges in CATE estimation. First, we can only observe the outcomes of the treatments that are actually assigned, while the outcomes of the unassigned treatments remain unknown. This issue requires us to deal with the unobserved outcomes in some manner to estimate the CATE accurately. Second, estimating the CATE necessitates regression models to approximate the conditional expected outcomes of the treatments, which demands specific estimation methods depending on the regression models employed.

In this problem, we assume linear regression models for CATE estimation with high-dimensional covariates. We regress the covariates on estimated counterfactual differences of outcomes to estimate the CATE. To handle high dimensionality, we employ Lasso regularization, which leverages the sparsity of linear regression models.

Throughout this study, we assume sparsity for linear CATE regression models without imposing sparsity on regression models for each treatment's outcome; that is, we impose sparsity only for the difference of outcomes, not for each outcome. In this setting, we propose a three-step estimation method: (i) estimating nuisance parameters with some estimation methods with a convergence rate guarantee ; (ii) estimating the difference in outcomes and regressing covariates on them with Lasso regularization; and (iii) debiasing the bias introduced by the use of Lasso regularization using the debiased Lasso. The first and second steps depend on double/debiased machine learning (DML) technique \citep{klaassen1987,ZhengWenjing2011CTME,ChernozhukovVictor2018Dmlf}, while the third step is based on the debiased Lasso \citep{vandeGeer2014}. Therefore, we refer to our estimation method as the Triple/Debiased Lasso (TDL) for CATE estimation. We demonstrate the obtained estimator's consistency and asymptotic normality.

The challenge in parameter estimation in high-dimensional regression scenarios, particularly when the parameter space exceeds the sample size, is significant. Our approach adopts high-dimensional but sparse linear regression models for the CATE and applies Lasso regression \citep{tibshirani96regression, zhao06a, vandeGeer2008}. Following existing studies, such as \citet{kato2023cate}, we do not assume sparsity for regression models between covariates and each treatment's outcome. Instead, we directly assume sparsity for the regression models of the difference between the outcomes.

Since we cannot observe the outcomes of the unassigned treatments, we approximate them with a doubly robust (DR) estimator \citep{bang2005drestimation} with nuisance parameters estimated via the cross-fitting in DML. This method guarantees the first convergence of the bias caused by the DR estimation. 

Then, we regress covariates on the estimated difference with Lasso regularization. Since Lasso regularization also yields bias, we debias it using the debiased (desparsified) Lasso technique \citep{vandeGeer2014}. This technique leads us to construct a confidence interval for each regression coefficient. 

Furthermore, in Lasso regression, we utilize the weighted least squares to reduce the asymptotic variance of the regression coefficient estimators due to heterogeneous variances.

The use of DR estimation prior to regression has existing studies, such as \citet{vanderLaan2006}, \citet{Lee2017}, \citet{ninomiya2021selective}, \citet{Fan2022}, and \citet{tan2024combining}. Such approaches are summarized as the DR-learner by \citet{Kunzel2019}. Among methods categorized as variants of the DR-learner, \citet{Fan2022} and \citet{ninomiya2021selective} consider DR estimator with confidence intervals. While \citet{Fan2022} employs the nonparametric regression with multiplier-bootstrap, \citet{ninomiya2021selective} assumes linear models and employs Lasso regression with the selective inference. Our method is also categorized into the DR-learner with linear models and DML, but we construct a confidence interval using the debiased Lasso.

Our basic estimation strategy is similar to that in \citet{Fan2022}, which also employs DML to debias the bias caused by the first-step machine learning estimator. The difference between our estimator and their estimator lies in the use of weighted least squares and debiased Lasso. While \citet{Fan2022} does not take into account the existence of the heterogeneous variances, the conditional variance given covariates differ across the covariates due to the covariate-dependent propensity score; that is, the variance is heterogeneous, and under the standard arguments in linear regression, we can minimize the asymptotic variance of regression coefficient estimators by using the weighted least squares. Following this argument, we add the weighted least squares between the procedures of DML and inference, which has not been incorporated in \citet{Fan2022}. Furthermore, we utilize the debiased Lasso to obtain a confidence interval, following \citet{vandeGeer2014}, in contrast to the multiplier-bootstrap in \citet{Fan2022}. 

\textbf{Organization.} Our problem setting is formulated in Section~\ref{sec:problem}. For linear regression models, we introduce Lasso-based estimators with the DR estimation, DML, and debiased Lasso in Section~\ref{sec:cateLasso}, called the TDL estimator. For the TDL estimator, we show the consistency and asymptotic normality in Section~\ref{sec:theoretical}. Section~\ref{sec:related} introduces related work.

\section{Problem Setting}
\label{sec:problem}
This section formulates the problem of CATE estimation. In Section~\ref{sec:potentialoutcome}, potential outcomes are introduced, following the Neyman-Rubin causal model \citep{Neyman1923,Rubin1974}. In Section~\ref{sec:obs}, the observations are defined. Subsequently, in Section~\ref{sec:cate}, the CATE, the target of our estimation using the observations, is defined.

\subsection{Potential Outcomes}
\label{sec:potentialoutcome}
In our problem, two treatments $\{1, 0\}$ are given.\footnote{For example, we often regard treatment $1$ as a treatment of interest, and treatment $0$ as the control treatment.} Each treatment yields a corresponding outcome. Following the Neyman-Rubin causal model \citep{Neyman1923,Rubin1974}, for each treatment $d\in\{1, 0\}$, we introduce potential outcome random variables $Y(d) \in\mathbb{R}$. We assume that $Y(1)$ and $Y(0)$ are bounded random variables.
\begin{assumption}
    \label{asm:bounded_output}
    For each $d\in\{1, 0\}$, $Y(d)$ is bounded.
\end{assumption}

We characterize individuals who receive treatments by $p$-dimensional covariates $X\in\mathcal{X}$, where $\mathcal{X}\subset\mathbb{R}^p$ is the covariate space. 

\subsection{Observations}
\label{sec:obs}
In the previous section, we introduced the potential outcomes $Y(1)$ and $Y(0)$. However, we can only observe one of the outcomes corresponding to a treatment actually assigned, which is denoted by $D\in\{1, 0\}$. Let $D \in \{1, 0\}$. Then, an observable outcome $Y\in\mathbb{R}$ is defined as  
\begin{align*}
Y = D Y(1) + (1 - D) Y(0),
\end{align*}
Here, if $D = 1$, $Y = Y(1)$ holds; if $D = 0$, $Y = Y(0)$ holds.

Let $P_0\in\mathcal{P}$ be the ``true'' distribution that generates $(Y, D, X)$, where $\mathcal{P}$ is the set of all possible distributions. Let us denote the sample size by $n\in\mathbb{N}$. For each $i \in [n] \coloneqq \{1,2,\dots,n\}$, we define a triple $(Y_i, D_i, X_i)$ as an independent and identically distributed (i.i.d.) copy of $(Y, D, X)$.
Then, we define our observations as the following dataset $\mathcal{D}$:
\begin{align*}
\mathcal{D} \coloneqq \left\{(Y_i, D_i, X_i)\right\}^n_{i=1}.
\end{align*}

\subsection{CATE}
\label{sec:cate}
For $P\in\mathcal{P}$, let $\mathbb{E}_P[W]$, $\mathrm{Var}_P[W]$, and $\mathbb{P}_P[W]$ be the expectation operator, the variance operator, and the probability law a distribution $P$ of a random variable $W$. For each $d\in\{1, 0\}$ and any $x\in\mathcal{X}$, let us denote the conditional expected outcome of $Y(d)$ by $\mu_P(1)(x) \coloneqq  \mathbb{E}_P[Y(d)\mid X = x]$ and the propensity score by  $\pi_0(d\mid x) \coloneqq \mathbb{P}_{P_0}[D = d\mid X]$. 

In this study, our interest lies in CATEs at $X = x \in \mathcal{X}$ under $P_0$, defined as
\begin{align*}
    f_0(x) \coloneqq f_{P_0}(x) \coloneqq  \mathbb{E}_{P_0}[Y(1) - Y(0) \mid X = x] = \mu_0(1)(x) - \mu_0(0)(x).
\end{align*}
This quantity has been widely used in empirical studies of various fields, such as epidemiology, economics, and political science, because it captures heterogeneous treatment effects for each individual represented by a characteristic $x\in\mathcal{X}$. 

For identification of $f_0(x)$, we make the following assumptions, called the \emph{unconfoundedness} and the \emph{overlap of assignment support}, respectively.
\begin{assumption}[Unconfoundedness] \label{asmp:unconfounded}
The potential outcomes $(Y(1), Y(0))$ are independent of the treatment indicator $D$ conditional on $X$:
\[(Y(1), Y(0))\indep ~D \ |\ X\]
\end{assumption}

\begin{assumption}[Overlap of assignment support] \label{asmp:coherent}
For some universal constant $0 < \varphi < 0.5$, for each $d\in\{1, 0\}$ we have
\begin{align*}
    \varphi < \pi_0(d\mid X) < 1 - \varphi,\qquad \mathrm{a.s.}
\end{align*}
\end{assumption}

Under the Assumption \ref{asmp:unconfounded}, assignment $D$ is independent of potential outcomes $(Y(1), Y(0))$ conditioned on covariates.
Assuming the unconfoundedness is the standard approach in treatment effect estimation \citep{Rosenbaum1983}. Assumption~\ref{asmp:coherent} is made to avoid disjoint support between treatment and control groups.
These assumptions guarantee the identifiability of the CATE \citep{imbens_rubin_2015}.

\paragraph{Notation.} 
Let $[p] \coloneqq \{1,\dots,p\}$. 
Let us define a ($n\times p$)-matrix 
$
   \bm{X} \coloneqq \begin{pmatrix}
    X^\top_{1} \\
    X^\top_{2} \\
    \vdots \\
    X^\top_{n} 
\end{pmatrix} = \begin{pmatrix}
    X_{1, 1} & X_{1, 2} & \cdots & X_{1, p}\\
    X_{2, 1} & X_{2, 2} & \cdots & X_{2, p}\\
    \vdots & \vdots & \ddots & \vdots\\
    X_{n, 1} & X_{n, 2} & \cdots & X_{n, p}
\end{pmatrix}$,  and a $n$-dimensional vector $
\mathbb{D} \coloneqq (D_1,D_2,...,D_n)^\top
$.

Let us define $n$-dimensional column vectors ${\mathbb{Y}}$ and ${\mathbb{X}}_j$ for each $j\in[p]$ as 
$
    {\mathbb{Y}} = (Y_1\ Y_2\ \cdots\ Y_n)^\top$, $
    {\mathbb{X}}_j = (X_{1,j}\ X_{2,j}\ \cdots\ X_{n,j})^\top_{i\in[n]}
$. 
Here, we can also denote $\bm{X}$ by $    \bm{X} = \begin{pmatrix}
        {\mathbb{X}}_1 & {\mathbb{X}}_2 & \cdots & {\mathbb{X}}_p
    \end{pmatrix}$. 
For a vector $v = (v_1,\dots, v_p)\in\mathbb{R}^p$, let us denote $\ell_q$ norm by $\|v\|_q \coloneqq (\sum^p_{i=1}|v_i|^q)$ for $q \geq 1$. Let $\mathcal{M}$ be the set of measurable functions $\mu:\{1, 0\}\times \mathcal{X}\to \mathbb{R}$, and $\Pi$ be the set of measurable functions $\pi:\{1, 0\}\times \mathcal{X} \to [0, 1]$ such that $\pi(1\mid x) + \pi(0\mid x) = 1$ for any $x\in\mathcal{X}$. Let $\mathrm{Var}(\cdot)$ be the variance operator. 

\section{TDL for the CATE Estimation}
\label{sec:cateLasso}
This section introduces high-dimensional linear regression models with sparsity for estimating the CATE.

\subsection{High-Dimensional Linear Regression with Sparsity}
\label{sec:ind_common}
In this study, under $P\in\mathcal{P}$, we posit the following linear regression models between an individual treatment effect $Y(1) - Y(0)$ and covariates $X$:
\begin{align}
\label{eq:linear}
    Y(1) - Y(0) = X^\top\bm{\beta} + \varepsilon(d), 
\end{align}
where $\bm{\beta} \in \mathbb{R}^p$ is a $p$-dimensional parameter, while $\varepsilon$ is an independent noise variable with a zero mean and finite variance. Let us define $\varepsilon(d) \coloneqq Y(d) - \mu_P(d)(X)$ for each $d\in\{1, 0\}$. In our analysis, we make the following assumptions among $\varepsilon$, $\varepsilon^1$, and $\varepsilon^0$. 
\begin{assumption}
\label{asm:eror}
For the error term $\epsilon$, $\mathbb{E}_{P_0}[\varepsilon \mid X] = \mathbb{E}_{P_0}[\varepsilon^1 - \varepsilon^0 \mid X] = 0$ holds a.s. Furthermore, we assume that $\varepsilon^1$ and $\varepsilon^0$ are independent. Additionally, the variance is 
finite; that is, $\sigma^2_{\bar{\varepsilon}} \coloneqq \mathbb{E}_{P_0}[(\varepsilon^1 - \varepsilon^0)^2] < C < \infty$ for some universal constant $C$. 
\end{assumption}

Let $\varepsilon_i$ and $\varepsilon_i(d)$ be i.i.d. copies of $\varepsilon$ and $\varepsilon(d)$, and $\bm{\varepsilon}$ and $\bm{\varepsilon}(d)$ be $(\varepsilon_1\ \cdots\ \varepsilon_n)^\top$ and $(\varepsilon(d)_1\ \cdots\ \varepsilon(d)_n)^\top$, respectively. 

We also allow for our linear regression models to be high-dimensional; that is, $p > n$. In such a situation, a common approach to obtaining a consistent estimator is to assume that $\bm{\beta}$ has \textit{sparsity}, that is, most of the elements of $\bm{\beta}$ are zero.

Let $\bm{\beta}_0$ be the \emph{true} parameter under $P_0$. We denote the active set of regression coefficients by 
\[\mathcal{S}_0 \coloneqq \{j \in [p] : \beta_{0, j} = 0\},\]
where $\beta_{0, j}$ ($j\in[p]$) is $j$-th element of $\bm{\beta}_0$. 
Let $s_0 \coloneqq |\mathcal{S}_0|$.

\subsection{Weighted TDL for CATE Esimation with Lasso}
\label{sec:weight_tdl}
This section introduces our proposed estimator, referred to as the Weighted TDL (WTDL) estimator for CATE estimation with Lasso. Our estimator comprises the following components:
\begin{enumerate}
\item Estimation of nuisance parameters ($\mu_0$ and $\pi_0$) using cross-fitting;
\item Application of weighted least squares to estimate the CATE regression coefficients $\bm{\beta}_0$ employing the Lasso regularization;
\item Debiasing the bias introduced by the weighted least squares with Lasso through the debiased Lasso technique.
\end{enumerate}
Our methodology incorporates DML and the debiased Lasso, leading us to term our approach TDL. The derivation and detailed explanation of the WTDL are presented in Appendix~\ref{sec:det_est_strategy}.

\subsection{Approximation of the Difference $Y(1) - Y(0)$ using Cross-Fitting}
\label{sec:diff_approx}
\paragraph{Estimation with the true nuisance parameters ($\mu_0$ and $\pi_0$).}
As well as the standard DR estimator \citep{bang2005drestimation}, we first define a function $Q:\mathcal{Y}\times \{1, 0\}, \times \mathcal{X}\times \mathcal{M} \times \Pi \to \mathbb{R}$ as
\begin{align*}
    Q(Y, D, X; \mu, \pi) \coloneqq \frac{\mathbbm{1}\big[D = 1\big]\big(Y - \mu(1)(x)\big)}{\pi(1\mid x)} - \frac{\mathbbm{1}\big[D = 0\big]\big(Y - \mu(0)(x)\big)}{\pi(0\mid x)} + \mu(1)(x) - \mu(0)(x).
\end{align*}
Here, $Q(Y, D, X; \mu_0, \pi_0)$ works as an unbiased estimator of $Y(1) - Y(0)$. Here, the following linear regression model holds under $P_0$:
\begin{align}
\label{eq:cate_linear_model}
   \overline{\mathbb{Q}} = \bm{X}\bm{\beta}_0 + \overline{\bm{\varepsilon}},
\end{align}
where $\overline{\bm{\varepsilon}}$ is an unobservable $n$-dimensional vector whose $i$-th element $\overline{\varepsilon}_i$ is equal to 
\begin{align*}
    \overline{\varepsilon}_i &\coloneqq \overline{u}_i + \varepsilon_i(1) - \varepsilon_i(0),\\
    \overline{u}_i &\coloneqq \frac{\mathbbm{1}\big[D_i \coloneqq 1\big]\big(Y_i - \mu_0(1)(X_i)\big)}{\pi_0(1\mid X_i)} - \frac{\mathbbm{1}\big[D_i = 0\big]\big(Y_i - \mu_0(0)(X_i)\big)}{\pi_0(0\mid X_i)} - \Big\{Y_i(1) - \mu(1)(X_i)\Big\} + \Big\{Y_i(0) - \mu(0)(X_i)\Big\}.
\end{align*}
Note that $\mathbb{E}_{P_0}\left[\overline{\varepsilon}_i \mid X_i\right] = 0$ since
\begin{align*}
    \mathbb{E}_{P_0}\left[\overline{\varepsilon}_i \mid X_i\right] = \mathbb{E}_{P_0}\left[\frac{\mathbbm{1}\big[D_i \coloneqq 1\big]\big(Y_i - \mu_0(1)(X_i)\big)}{\pi_0(1\mid X_i)} - \frac{\mathbbm{1}\big[D_i = 0\big]\big(Y_i - \mu_0(0)(X_i)\big)}{\pi_0(0\mid X_i)}\mid X_i\right] = 0.
\end{align*}
Let us also define 
\begin{align*}
    \sigma^2_{\bar{\varepsilon}}(x) \coloneqq \mathrm{Var}_{P_0}\big[\overline{\varepsilon}_i  \mid X_i = x\big] = \mathbb{E}_{P_0}\big[\overline{\varepsilon}^2_i  \mid X_i = x\big] = \frac{\sigma^2_{\varepsilon}(1)(x)}{\pi(1\mid x)} + \frac{\sigma^2_{\varepsilon}(0)(x)}{\pi(0\mid x)},
\end{align*}
where $\sigma^2_{\varepsilon}(d)(x)$ denotes the variance of $\varepsilon(d)$ conditioned on $x \in \mathcal{X}$.

\paragraph{Estimation with estimated nuisance parameters.}
Since we do not know $\mu_0$ and $\pi_0$, we replace them with their estimators, which cause bias in the difference estimator $Q(Y, D, X; \mu_0, \pi_0)$. We employ the cross-fitting in DML to debias the bias caused in the estimation of $\mu_0$ and $\pi_0$. The cross-fitting is a variant of the sample-splitting technique that has been utilized in the semiparametric analysis, such as ..., and refined by \citet{ChernozhukovVictor2018Dmlf}. In this technique, we first split samples into $m$ subgroups with sample sizes $n_\ell = n / m$. For simplicity, without loss of generality, we assume that $n/m$ is an integer. For each $\ell \in [m]$, we define $\mathcal{D}^\ell \coloneqq \left\{(Y_{\ell, (i)}, D_{\ell, (i)}, X_{\ell, (i)})\right\}^{n_\ell}_{i=1}$ and $\overline{\mathcal{D}}^\ell \coloneqq \mathcal{D} \backslash \mathcal{D}^\ell$, where $(Y_{\ell, (i)}, D_{\ell, (i)}, X_{\ell, (i)})$ denotes an $i$-th element in the $\ell$-th subgroup. Let $\widehat{\mu}_{\ell, n}$ and $\widehat{\pi}_{\ell, n}$ be some estimators constructed only using $\overline{\mathcal{D}}^\ell$, without using $\mathcal{D}^\ell$. The estimators are assumed to satisfy a convergence rate condition described in Assumption~\ref{asm:conv_rate}. Then, we construct a difference estimator as
\begin{align}
\label{eq:dml_est}
    \widehat{Q}^{\mathrm{DML}}_i \coloneqq Q\big(Y_{\ell(i), i}, D_{\ell(i), i}, X_{\ell(i), i}; \widehat{\mu}_{\ell(i), n}, \widehat{\pi}_{\ell(i), n}\big)\ \ \mathrm{and}\ \ \widehat{\mathbb{Q}}^{\mathrm{DML}} \coloneqq \Big(\widehat{Q}^{\mathrm{DML}}_1\ \widehat{Q}^{\mathrm{DML}}_2\ \cdots \widehat{Q}^{\mathrm{DML}}_n\Big)^\top,
\end{align}
where $\ell(i)$ denotes a subgroup $\ell \in [m]$ such that sample $i$ belongs to it. 
Here, the following linear regression model holds under $P_0$:
\begin{align*}
   \widehat{\mathbb{Q}}^{\mathrm{DML}} = \bm{X}\bm{\beta}_0 + \widehat{\bm{\varepsilon}}^{\mathrm{DML}},
\end{align*}
where $\widehat{\bm{\varepsilon}}^{\mathrm{DML}}$ is an unobservable $n$-dimensional vector whose $i$-th element $\widehat{{\varepsilon}}^{\mathrm{DML}}_i$ is equal to 
\begin{align*}
     \widehat{{\varepsilon}}^{\mathrm{DML}}_i &\coloneqq \widehat{u}^{\mathrm{DML}}_i + \varepsilon_i(1) - \varepsilon_i(0),\\
     \widehat{u}^{\mathrm{DML}}_i & \coloneqq \frac{\mathbbm{1}\big[D_i = 1\big]\big(Y_i - \widehat{\mu}_{\ell(i), n}(1)(X_i)\big)}{\widehat{\pi}_{\ell(i), n}(1\mid X_i)} - \frac{\mathbbm{1}\big[D_i = 0\big]\big(Y_i - \widehat{\mu}_{\ell(i), n}(0)(X_i)\big)}{\widehat{\pi}_{\ell(i), n}(0\mid X_i)}\\
     &\ \ \ \ \ \ \ \ \  - \Big\{Y_i(1) - \widehat{\mu}_{\ell(i), n}(1)(X_i)\Big\}+ \Big\{Y_i(0) - \widehat{\mu}_{\ell(i), n}(0)(X_i)\Big\} 
\end{align*}
In Lemma~\ref{lem:conv_Q}, we show that for each $i\in[n]$, $\sqrt{n}| \widehat{u}^{\mathrm{DML}}_i - \overline{u}_i \mid = o_P(1)$ holds, which plays an important role for deriving a confidence interval. We refer to this estimator as the DML-DR CATELasso (DML-CATELasso) estimator. Note that this estimator does not have the asymptotic normality due to the bias caused by the use of the Lasso regularization. Furthermore, if we are only interested in consistency, we do not have to apply the cross-fitting.

\subsection{Weighted Least Squares with the Lasso Regularization}
\label{sec:wtdl_reg}
We later show that the mean of $\widehat{{\varepsilon}}^{\mathrm{DML}}_i$ conditioned on $\bm{X}_i$ converges to zero as $n\to\infty$, and the variance of $\widehat{{\varepsilon}}^{\mathrm{DML}}_i$ conditioned on $\bm{X}_i$ converges to 
\begin{align*}
    \sigma^2_{\bar{\varepsilon}}(x) \coloneqq \mathrm{Var}_{P_0}\big[\overline{\varepsilon}_i  \mid X_i = x\big] = \mathbb{E}_{P_0}\big[\overline{\varepsilon}^2_i  \mid X_i = x\big] = \frac{\sigma^2_{\varepsilon}(1)(x)}{\pi(1\mid x)} + \frac{\sigma^2_{\varepsilon}(0)(x)}{\pi(0\mid x)},
\end{align*}
where $\sigma^2_{\varepsilon}(d)(x)$ denotes the variance of $\varepsilon(d)$ conditioned on $x \in \mathcal{X}$. Due to the dependency of $\sigma^2_{\bar{\varepsilon}}(x)$ on $x\in\mathcal{X}$, we can consider the weighted least squares using $\sigma^2_{\bar{\varepsilon}}(x)$ to reduce the variance of estimators $\bm{\beta}_0$. By using the weighted least sqaures and the Lasso regularization, we estimate $\bm{\beta}_0$ as
\begin{align}
\label{eq:WDML-CATELasso}
&\widehat{\bm{\beta}}^{\mathrm{WDML}}_n \in \argmin_{\bm{\beta}\in\mathbb{R}^p} \left\{\left\|\widehat{\mathbb{Q}}^{\mathrm{WDML}} - \bm{W}\bm{\beta} \right\|^2_2 / n+ \lambda\left\|\bm{\beta}\right\|_1\right\},
\end{align}
where 
\begin{align*}
    \widehat{\mathbb{Q}}^{\mathrm{WDML}} & \coloneqq  \Big( \widehat{Q}^{\mathrm{DML}}_1 / \widehat{\sigma}_{\bar{\varepsilon}}(\bm{X}_1) \ \cdots\ \widehat{Q}^{\mathrm{DML}}_n / \widehat{\sigma}_{\bar{\varepsilon}}(\bm{X}_n) \Big)^\top\\
    \bm{W} &\coloneqq \Big( \bm{X}_1 / \widehat{\sigma}_{\bar{\varepsilon}}(\bm{X}_1) \ \cdots\ \bm{X}_n / \widehat{\sigma}_{\bar{\varepsilon}}(\bm{X}_n) \Big)^\top,
\end{align*}
and $\widehat{\sigma}^2_{\bar{\varepsilon}}$ is a consistent estimator of $\sigma^2_{\bar{\varepsilon}}(x)$. We refer to this estimator as the Weighted DML-DR CATELasso (WDML-CATELasso) estimator. Let us also define 
\begin{align*}
    \widehat{\bm{\varepsilon}}^{\mathrm{WDML}} & \coloneqq  \Big( \widehat{{\varepsilon}}^{\mathrm{DML}}_1 / \widehat{\sigma}_{\bar{\varepsilon}}(\bm{X}_1) \ \cdots\ \widehat{{\varepsilon}}^{\mathrm{DML}}_n / \widehat{\sigma}_{\bar{\varepsilon}}(\bm{X}_n) \Big)^\top,\\
    \overline{\bm{\varepsilon}}^{\mathrm{Weight}} & \coloneqq  \Big( \overline{\varepsilon}_1 / \widehat{\sigma}_{\bar{\varepsilon}}(\bm{X}_1) \ \cdots\ \overline{\varepsilon}_n / \widehat{\sigma}_{\bar{\varepsilon}}(\bm{X}_n) \Big)^\top,\\
    \overline{\bm{v}} & \coloneqq  \Big( \widetilde{{u}}_1 / \widehat{\sigma}_{\bar{\varepsilon}}(\bm{X}_1) \ \cdots\ \widetilde{{u}}_n / \widehat{\sigma}_{\bar{\varepsilon}}(\bm{X}_n) \Big)^\top,\\
    \widehat{\bm{v}} & \coloneqq  \Big( \widehat{{u}}^{\mathrm{DML}}_1 / \widehat{\sigma}_{\bar{\varepsilon}}(\bm{X}_1) \ \cdots\ \widehat{{u}}^{\mathrm{DML}}_n / \widehat{\sigma}_{\bar{\varepsilon}}(\bm{X}_n) \Big)^\top.
\end{align*}

\subsection{Debiased Lasso for the WDML-CATELasso Estimator} 
The WDML-CATELasso estimator has a bias caused by Lasso. We then debias the bias using the debiased Lasso.

The estimator $\widehat{\bm{\beta}}^{\mathrm{WDML}}_n$ satisfies the following KKT conditions:
\begin{align}
\label{eq:WDML-CATELasso_kkt}
&\bm{W}^\top\left(\widehat{\mathbb{Q}}^{\mathrm{WDML}} - \bm{W}\widehat{\bm{\beta}}^{\mathrm{WDML}}_n\right) / n = \lambda \widehat{\bm{\kappa}}^{\mathrm{WDML}},
\end{align}
where $\widehat{\bm{\kappa}}^{\mathrm{WDML}} \coloneqq \sign\left(\widehat{\bm{\beta}}^{\mathrm{WDML}}_n\right) = \Big(\sign\left(\widehat{\beta}^{\mathrm{WDML}}_{n, 1}\right),\cdots, \sign\left(\widehat{\beta}^{\mathrm{WDML}}_{n, p}\right)\Big)^\top$, and for each $j\in[p]$, 
\begin{align}
\label{eq:KKTcond}
    \widehat{\kappa}^{\mathrm{WDML}}_j = \sign\left(\widehat{\beta}^{\mathrm{WDML}}_{n, j}\right) \in \begin{cases}
    1 & \mathrm{if}\quad \widehat{\beta}^{\mathrm{WDML}}_{n, 1} > 0\\
    [-1, 1] & \mathrm{if}\quad \widehat{\beta}^{\mathrm{WDML}}_{n, 1} = 0\\
    - 1 & \mathrm{if}\quad \widehat{\beta}^{\mathrm{WDML}}_{n, 1} < 0
    \end{cases}.
\end{align}

Let $\widehat{\Sigma} = \bm{X}^\top\bm{X} / n$. Let us also define $\widehat{\Theta}$ as a reasonable approximation for the inverse of $\widehat{\Sigma}$. Then, we consider debiasing the bias caused by the Lasso regularization in the WDML-CATELasso estimator to obtain a confidence interval about $\bm{\beta}_0$. From the KKT condition in \eqref{eq:WDML-CATELasso_kkt}, we have 
\begin{align*}
&\widehat{\Sigma} \left(\widehat{\bm{\beta}}^{\mathrm{WDML}}_n - \bm{\beta}_0\right)
+ \lambda \widehat{\bm{\kappa}}^{\mathrm{WDML}} =  \bm{W}^\top \widehat{\bm{\varepsilon}}^{\mathrm{WDML}} / n,
\end{align*}
where we used $\widehat{\mathbb{Q}}^{\mathrm{WDML}} = \bm{W}\bm{\beta}_0 + \widehat{\bm{\varepsilon}}^{\mathrm{WDML}}$, and 
\begin{align*}
&\bm{W}^\top\left(\widehat{\mathbb{Q}}^{\mathrm{WDML}}  - \bm{W}\widehat{\bm{\beta}}^{\mathrm{WDML}}_n\right)/n = \bm{W}^\top\left(\bm{W}\bm{\beta}_0 + \widehat{\bm{\varepsilon}}^{\mathrm{WDML}}- \bm{W}\widehat{\bm{\beta}}^{\mathrm{WDML}}_n\right) / n = \widehat{\Sigma}\left(\bm{\beta}_0 - \widehat{\bm{\beta}}^{\mathrm{WDML}}_n\right) + \bm{X}^\top\widehat{\bm{\varepsilon}}^{\mathrm{WDML}}/ n.
\end{align*}
Then, it holds that
\begin{align*}
\widehat{\bm{\beta}}^{\mathrm{WDML}}_n + \widehat{\Theta} \lambda \widehat{\bm{\kappa}}^{\mathrm{WDML}} - \bm{\beta}_0 & = \widehat{\Theta} \bm{X}^\top \widehat{\bm{\varepsilon}}^{\mathrm{WDML}} / n - \widehat{\Delta}^{\mathrm{WDML}} / \sqrt{n},
\end{align*}
where $\widehat{\Delta}^{\mathrm{WDML}} \coloneqq \sqrt{n}\widehat{\Theta}\widehat{\Sigma} \left( \widehat{\bm{\beta}}^{\mathrm{WDML}}_n - \bm{\beta}_0\right)$, which is the bias term that is expected to approaches zero as $n \to \infty$. Then, we define the Weighted TML (WTML) estimator as
\begin{align}
\label{eq:wtdl_cate_Lasso}
\widehat{\bm{b}}^{\mathrm{WTDL}}_n
&\coloneqq \widehat{\bm{\beta}}^{\mathrm{WDML}}_n + \widehat{\Theta} \lambda \widehat{\bm{\kappa}}^{\mathrm{WDML}} = \widehat{\bm{\beta}}^{\mathrm{WDML}}_n + \widehat{\Theta} \bm{W}^\top\left(\widehat{\mathbb{Q}}^{\mathrm{WDML}} - \bm{W}\widehat{\bm{\beta}}^{\mathrm{WDML}}_n\right) / n.
\end{align}
Here, it holds that 
\begin{align*}
\sqrt{n}\left(\widehat{\bm{b}}^{\mathrm{WTDL}}_n - \bm{\beta}_0\right) = \widehat{\Theta} \bm{W}^\top \widehat{\bm{\varepsilon}}^{\mathrm{WDML}} / \sqrt{n} - \widehat{\Delta}^{\mathrm{WDML}} + o_P(1),
\end{align*}

\paragraph{Nodewise Lasso regression for constructing $\widehat{\Theta}$.}
The remaining problem is the construction of an approximate inverse $\widehat{\Theta}$. Since $\widehat{\Theta}$ is necessary to construct confidence intervals, its construction has been well explored by existing studies, such as \citet{vandeGeer2014}, \citet{Javanmard14a}, \citet{Cai2017}, and \citet{Javanmard2018}. Our formulation follows the arguments in  \citet{Meinshausen2006} and \citet{vandeGeer2014}. 

Let $\bm{W}_{(-j)}$ be the design submatrix excluding the $j$-th column. Let us also define $n$-dimensional column vectors ${\mathbb{W}}_j$ for each $j\in[p]$ as ${\mathbb{X}}_j = (X_{1,j}/\widehat{\sigma}_{\bar{\varepsilon}}(\bm{X}_1)\ X_{2,j}/\widehat{\sigma}_{\bar{\varepsilon}}(\bm{X}_2)\ \cdots\ X_{n,j}/\widehat{\sigma}_{\bar{\varepsilon}}(\bm{X}_n))^\top$. 

In our construction, for each $j\in[p]$, we carry out the nodewise Lasso regression as follows:
\begin{align*}
\widehat{\bm{\gamma}}_j = \argmin_{\bm{\gamma}\in \mathbb{R}^{p-1}} &\Bigg\{\left\| \mathbb{W}_j - \bm{W}_{(-j)} \bm{\gamma} \right\|^2_2 /n + 2\lambda_j \|\bm{\gamma}\|_1 \Bigg\}, 
\end{align*}
with components of $\widehat{\bm{\gamma}}_j = \{\widehat{\gamma}_{j,k}; k \in [p], k\neq j\}$. Denote by 
\begin{align*}
\widehat{C} \coloneqq \begin{pmatrix}
    1 & - \widehat{\gamma}_{1,2} & \cdots & -\widehat{\gamma}_{1,p} \\
    - \widehat{\gamma}_{2,1} & 1 & \cdots & - \widehat{\gamma}_{2,p}\\
    \vdots & \vdots & \ddots & \vdots\\
    - \widehat{\gamma}_{p,1} & - \widehat{\gamma}_{p,2} & \cdots & 1
\end{pmatrix}.
\end{align*}
Let $\sign\left(\widehat{\bm{\gamma}}_j\right) = \left(\sign\left(\widehat{\gamma}_{j, 1}\right),\cdots, \sign\left(\widehat{\gamma}_{j, p}\right)\right)^\top$. 
Then, we define
\begin{align*}
    \widehat{T}^{-2} \coloneqq \mathrm{diag}\left(\widehat{\tau}^2_1,\dots, \widehat{\tau}^2_p\right)
\end{align*}
where for each $j\in[p]$ and $d\in\{1, 0\}$,
\begin{align*}
    \widehat{\tau}^2_j \coloneqq \left( \mathbb{W}_{j} - \bm{W}_{(-j)}\widehat{\bm{\gamma}}_j \right)^2 / n   + \lambda_j \left\|\widehat{\bm{\gamma}}_j\right\|_1. 
\end{align*}
By using $\widehat{C}$ and $\widehat{T}^{-2}$, we construct the approximate inverses $\widehat{\Theta}$ by 
\[\widehat{\Theta}_{\mathrm{nodewise}} \coloneqq \widehat{T}^{-2}\widehat{C}. \]
Finally, we obtain the WTDL estimator with $\widehat{\Theta}_{\mathrm{nodewise}}$ as
\begin{align}
\label{decate_nodewise2}
\widehat{\bm{b}}^{\mathrm{WTDL}}_{n, \mathrm{nodewise}} \coloneqq \widehat{\bm{\beta}}^{\mathrm{WDML}}_n+ \widehat{\Theta}_{\mathrm{nodewise}} \bm{W}^\top \widehat{\bm{\varepsilon}}^{\mathrm{WDML}} / n.
\end{align}

We refer to this estimator as the WTDL estimator since we apply the DML and debiased Lasso together in weighted least squares. Here, it holds that 
\begin{align*}
\sqrt{n}\left(\widehat{\bm{b}}^{\mathrm{WTDL}}_{n, \mathrm{nodewise}} - \bm{\beta}_0\right) = \widehat{\Theta} \bm{W}^\top \widehat{\bm{\varepsilon}}^{\mathrm{WDML}} / \sqrt{n} - \widehat{\Delta}^{\mathrm{WDML}} + o_P(1),
\end{align*}
The detailed statement of this results are shown in Lemma~\ref{lem:conv_Q} with its proof. 
The term $\widehat{\Theta} \lambda \widehat{\bm{\kappa}} $ debiases the orignal Lasso estimator. The term comes from the KKT conditions. This term is similar to one in the debiased Lasso in \citet{vandeGeer2014} but different from it because we penalizes the difference $\bm{\beta}$.
As shown in Section~\ref{sec:theoretical}, we can develop the confidence intervals for the debiased CATE Lasso estimator if the inverses $\widehat{\Theta}$ is appropriately approximated.

\subsection{Summary: WTDL estimator with the nodewise Lasso Regression}
Lastly, we summarize the above arguments. First, in Section~\ref{sec:ind_common}, we assume sparsity directly to CATEs. In Section~\ref{sec:weight_tdl}, we discuss our estimation strategy. In Section~\ref{sec:diff_approx}, we define the DML-CATELasso estimator. Then, by using the debiased Lasso as in Section~\ref{sec:wtdl_reg}, we obtain the confidence interval about $\bm{\beta}_0$ by using an approximated inverse constructed using the node-wise regression. The pseudo-code is shown in Algorithm~\ref{alg}. Our main estimator is the WTDL estimator. In the following sections, we investigate the theoretical properties of the WTDL estimator.   

\begin{algorithm}[tb]
   \caption{WTDL}
   \label{alg}
\begin{algorithmic}
   \STATE {\bfseries Parameter:} The number of sample splitting $m \in  \mathbb{N}$ such that $m \geq 2$ and the regularization paramter $\lambda > 0$.
   \STATE Split $\mathcal{D}$ into $m$-sub-datasets $\{\mathcal{D}^\ell\}^m_{\ell = 1}$. We also construct $\{\overline{\mathcal{D}}^\ell\}^m_{\ell = 1}$.
   \STATE Obtain the estimator $\widehat{\mu}_{\ell, n}$ and $\widehat{\pi}_{\ell, n}$ of $\mu_0$ and $\pi_0$ using each sub-dataset $\overline{\mathcal{D}}^{\ell}$.
   \STATE Construct $\{\widehat{Q}^{\mathrm{DML}}_i\}^n_{i=1}$ and $\widehat{\sigma}_{\bar{\varepsilon}}(x)$, as in \eqref{eq:dml_est}.
   \STATE Construct $\{\widehat{Q}^{\mathrm{WDML}}_i\}^n_{i=1}$.
   \STATE Regress $\bm{W}$ on $\widehat{\mathbb{Q}}^{\mathrm{WDML}}$ with the Lasso regularization, and obtain an estimator $\widehat{\bm{\beta}}^{\mathrm{WDML}}_n$ of $\bm{\beta}_0$.
   \STATE Debias $\widehat{\bm{\beta}}^{\mathrm{WDML}}_n$ by using the debiased Lasso and nodewise regression, and obtain the WTDL estimator $\widehat{\bm{b}}^{\mathrm{WTDL}}_n$, as in \eqref{eq:wtdl_cate_Lasso}.
\end{algorithmic}
\end{algorithm}

\section{Theoretical Properties}
\label{sec:theoretical}
This section provides several theoretical properties of our WDML-CATELasso and WTDL estimators. 

\subsection{Consistency of the WDML-CATELasso Estimator}
\label{sec:res_cons}
First, we provides the consistency of the WDML estimator $\widehat{\bm{\beta}}^{\mathrm{WDML}}_n$. 
In our analysis, we a $\beta_{0, j}$
Our analysis starts from the introduction of the compatibility condition \citep{Buhlmann2011}, which plays an important role in identifiability of $\bm{\beta}_0$.
Let us denote the $j$-th element of $\bm{\beta}_0$. 
For the index set $\mathcal{S}_0 \subset [p]$ by $\beta_{0, j}$. Let us define $\bm{\beta}_{\mathcal{S}_0}$ and $\bm{\beta}_{\mathcal{S}^c_0}$ as vectors whose $j$-th elements are 
\begin{align*}
    \beta_{j, \mathcal{S}_0} \coloneqq \beta_{j}\mathbbm{1}[j\in \mathcal{S}_0],\qquad \beta_{j, \mathcal{S}^c_0} \coloneqq \beta_{j}\mathbbm{1}[j\notin \mathcal{S}_0],
\end{align*}
respectively. Under this definition, we have $\bm{\beta} = \bm{\beta}_{\mathcal{S}_0} + \bm{\beta}_{\mathcal{S}^c_0}$. Then, we define the compatibility condition. 
\begin{definition}[Compatibility condition. From (6.4) in \citet{Buhlmann2011}]
    We say that the compatibility condition holds for the set $S_0$, if there is a positive constant $\phi_0 > 0$ such that for all $\bm{\beta}$ satisfying $\|\bm{\beta}_{\mathcal{S}^c_0}\|_1 \leq 3 \|\bm{\beta}_{\mathcal{S}_0}\|_1$, we have
    \begin{align*}
        \|\bm{\beta}_{\mathcal{S}_0}\|^2_1 \leq s_0 \bm{\beta}^\top \widehat{\Sigma} \bm{\beta} / \phi^2_0. 
    \end{align*}
    We refer to $\phi^2_0$ as the compatibility constant.
\end{definition}

We denote the diagonal elements of a ($p\times p$)-matrix $A$ by $A_j$ for $j\in[p]$. 
Based on the compatibility condition, we assume the following. 
\begin{assumption}[From (A1) in \citet{vandeGeer2014}]
\label{asm:a1_vandegeer}
    The covariance $\widehat{\Sigma}$ satisfies the compatibility condition with compatibility constant $\phi^2_0 > 0$. Furthermore, $\max_{j, d} \widehat{\Sigma}^2_j \leq M^2$ holds for some $0 < M < \infty$. 
\end{assumption}

Let $\Sigma$ be the population of $\widehat{\Sigma}$. Then, in the following theorem, we show the consistency of the WDML-CATELasso estimator $\widehat{\bm{\beta}}^{\mathrm{WDML}}$ in a case where the covariates $\bm{W}_i$, treatment assignments $D_i$ are random and $p = p_n \to \infty$ as $n\to \infty$, and $\widetilde{W}$ is sub-Gaussian.
\begin{theorem}
\label{thm:nongaussian}
Assume that Assumptions~\ref{asmp:unconfounded}--\ref{asmp:coherent} hold. 
Suppose that a linear model in \eqref{eq:linear} holds with Assumptions~\ref{asm:bounded_output}--\ref{asmp:coherent}, \ref{asm:eror}, and \ref{asm:a1_vandegeer}. If $\bm{W}$ is sub-Gaussian and ${\Sigma}$ has a strictly positive smallest eigenvalue $\Lambda^2_{\min}$, satisfying $\max \Sigma_{j,j} = O(1)$  and $1/\Lambda^2_{\min} = O(1)$ as $n\to \infty$. Consider the WDML-CATELasso estimator $\widehat{\bm{\beta}}^{\mathrm{WDML}}$ and $\lambda \asymp \sqrt{\log(p) / n}$. If $s_0 = o(\sqrt{\log(p)/n})$ holds and either $\left\|\widehat{\mu}_\ell(d) - \mu_0(d)\right\|_2 = o_P(1)$ or $ \left\|\widehat{\pi}_\ell - \pi_0\right\|_2 = o_P(1)$ holds as $n\to \infty$ for all $\ell \in [m]$, then we have the following as $n \to \infty$:
        \begin{align*}
            &\left\| \widehat{\bm{\beta}}^{\mathrm{WDML}}_n- \bm{\beta}_0 \right\|_1 = o_P\left(s_0\sqrt{\log(p) / n}\right),\mbox{~~and}\\
            &\left\|\bm{W}\left(\widehat{\bm{\beta}}^{\mathrm{WDML}}_n- \bm{\beta}_0\right) \right\|^2_2 / n = o_P\left(s_0\log(p) / n\right).
        \end{align*}
\end{theorem}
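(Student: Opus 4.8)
The plan is to adapt the classical Lasso oracle-inequality argument (as in Theorem~6.1 of \citet{Buhlmann2011} and the analysis of \citet{vandeGeer2014}) to the weighted regression model $\widehat{\mathbb{Q}}^{\mathrm{WDML}} = \bm{W}\bm{\beta}_0 + \widehat{\bm{\varepsilon}}^{\mathrm{WDML}}$, the essential extra work being the control of the noise vector $\widehat{\bm{\varepsilon}}^{\mathrm{WDML}}$, which is contaminated by the plug-in estimation of the nuisances $\mu_0,\pi_0$. First I would write the decomposition
\begin{align*}
\widehat{\bm{\varepsilon}}^{\mathrm{WDML}} = \overline{\bm{\varepsilon}}^{\mathrm{Weight}} + \left(\widehat{\bm{v}} - \overline{\bm{v}}\right),
\end{align*}
separating the oracle noise built from the true nuisances, which is conditionally mean-zero by Assumption~\ref{asm:eror} and the computation showing $\mathbb{E}_{P_0}[\overline{\varepsilon}_i \mid X_i]=0$, from the DML remainder induced by replacing $(\mu_0,\pi_0)$ by their cross-fitted estimates. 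Setting $\widehat{\bm{\delta}} \coloneqq \widehat{\bm{\beta}}^{\mathrm{WDML}}_n - \bm{\beta}_0$, optimality of $\widehat{\bm{\beta}}^{\mathrm{WDML}}_n$ in \eqref{eq:WDML-CATELasso} yields the basic inequality
\begin{align*}
\left\|\bm{W}\widehat{\bm{\delta}}\right\|^2_2/n + \lambda\left\|\widehat{\bm{\beta}}^{\mathrm{WDML}}_n\right\|_1 \leq 2\left(\widehat{\bm{\varepsilon}}^{\mathrm{WDML}}\right)^\top\bm{W}\widehat{\bm{\delta}}/n + \lambda\left\|\bm{\beta}_0\right\|_1,
\end{align*}
and Hölder's inequality bounds the cross term by $2\|\bm{W}^\top\widehat{\bm{\varepsilon}}^{\mathrm{WDML}}/n\|_\infty\,\|\widehat{\bm{\delta}}\|_1$.

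The crux is then to show that the empirical-process term is dominated by the penalty, i.e.\ that the event $\mathcal{T} \coloneqq \{\|\bm{W}^\top\widehat{\bm{\varepsilon}}^{\mathrm{WDML}}/n\|_\infty \leq \lambda/2\}$ holds with probability tending to one for $\lambda \asymp \sqrt{\log(p)/n}$. I would handle this coordinatewise via the decomposition above. For the oracle part $\bm{W}^\top\overline{\bm{\varepsilon}}^{\mathrm{Weight}}/n$, conditioning on $\bm{X}$ and the treatment assignments, each coordinate is an average of mean-zero terms with sub-Gaussian tails; since the weights $1/\widehat{\sigma}_{\bar{\varepsilon}}$ are bounded away from $0$ and $\infty$ under Assumptions~\ref{asmp:coherent} and \ref{asm:eror}, a sub-Gaussian maximal inequality together with a union bound over $j\in[p]$ gives $\|\bm{W}^\top\overline{\bm{\varepsilon}}^{\mathrm{Weight}}/n\|_\infty = O_P(\sqrt{\log(p)/n})$. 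For the DML remainder $\bm{W}^\top(\widehat{\bm{v}} - \overline{\bm{v}})/n$ I would invoke Lemma~\ref{lem:conv_Q}, which controls $\sqrt{n}\,|\widehat{u}^{\mathrm{DML}}_i - \overline{u}_i|$; cross-fitting is essential, as it makes the nuisance error independent of the fold on which it is evaluated and prevents overfitting bias. Bounding $\left|\left(\bm{W}^\top(\widehat{\bm{v}} - \overline{\bm{v}})/n\right)_j\right| \leq \max_i|\widehat{v}_i - \overline{v}_i|\cdot n^{-1}\sum_i|W_{i,j}|$ and using $n^{-1}\sum_i|W_{i,j}| = O_P(1)$ uniformly in $j$ shows this term is $o_P(1/\sqrt{n}) = o_P(\sqrt{\log(p)/n})$, hence negligible relative to $\lambda$.

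On $\mathcal{T}$ the basic inequality reduces to the standard chain: it first forces the cone condition $\|\widehat{\bm{\delta}}_{\mathcal{S}^c_0}\|_1 \leq 3\|\widehat{\bm{\delta}}_{\mathcal{S}_0}\|_1$, so the compatibility condition applies to $\widehat{\bm{\delta}}$. Before invoking it, I must reconcile Assumption~\ref{asm:a1_vandegeer}, which posits compatibility for $\widehat{\Sigma}=\bm{X}^\top\bm{X}/n$, with the \emph{weighted} Gram matrix $\bm{W}^\top\bm{W}/n$ actually appearing in the objective; since the weights are bounded above and below, compatibility transfers up to constants, and the sub-Gaussianity of $\bm{W}$ with $1/\Lambda^2_{\min}=O(1)$ lets me pass from the population $\Sigma$ to its empirical counterpart with probability tending to one through a restricted-eigenvalue/concentration argument. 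Combining compatibility with the basic inequality then gives $\|\bm{W}\widehat{\bm{\delta}}\|^2_2/n = O_P(s_0\lambda^2) = O_P(s_0\log(p)/n)$ and $\|\widehat{\bm{\delta}}\|_1 = O_P(s_0\lambda) = O_P(s_0\sqrt{\log(p)/n})$, and the sparsity scaling $s_0\sqrt{\log(p)/n}\to 0$ then yields the consistency claimed at the stated orders.

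The main obstacle I anticipate is the uniform (over the $p$ coordinates) control of the DML remainder $\bm{W}^\top(\widehat{\bm{v}}-\overline{\bm{v}})/n$ together with the transfer of the compatibility/restricted-eigenvalue condition from $\Sigma$ to the empirical weighted Gram matrix. Both require carefully coupling the sub-Gaussian design, the estimated weights $\widehat{\sigma}_{\bar{\varepsilon}}$, and the cross-fitted nuisance errors; in particular, guaranteeing that $\widehat{\sigma}_{\bar{\varepsilon}}$ stays bounded away from zero uniformly, so that the weighting neither destroys sub-Gaussianity nor inflates the remainder, is the delicate technical point underlying every step.
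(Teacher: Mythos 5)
Your proof skeleton matches the paper's almost exactly: the paper also proceeds via the basic inequality (Lemma~\ref{lem:basic_ineq}), splits the empirical-process term into an oracle part driven by $\overline{\bm{\varepsilon}}$ and a remainder caused by plugging in $(\widehat{\mu},\widehat{\pi})$, controls the oracle part by a sub-Gaussian union bound over the $p$ coordinates (Lemma~\ref{lem:concent}), closes the deterministic argument with the cone condition and the compatibility constant (Lemmas~\ref{thm:oracle} and \ref{lem:oracle}), and transfers compatibility from the population $\Sigma$ to the sample Gram matrix via a restricted-eigenvalue result for sub-Gaussian designs. Your explicit remark about reconciling the compatibility condition stated for $\bm{X}^\top\bm{X}/n$ with the weighted Gram matrix $\bm{W}^\top\bm{W}/n$ is a point the paper glosses over, and your treatment of it is sound.

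The one genuine gap is in how you dispose of the nuisance-estimation remainder. You invoke Lemma~\ref{lem:conv_Q} to get $\|\bm{W}^\top(\widehat{\bm{v}}-\overline{\bm{v}})/n\|_\infty = o_P(\sqrt{\log(p)/n})$, but Lemma~\ref{lem:conv_Q} is proved under Assumption~\ref{asm:conv_rate}, which requires \emph{both} $\|\widehat{\mu}_\ell(d)-\mu_0(d)\|_2 = o_P(1)$ and $\|\widehat{\pi}_\ell-\pi_0\|_2 = o_P(1)$ together with the product-rate condition $\|\widehat{\mu}_\ell(d)-\mu_0(d)\|_2\,\|\widehat{\pi}_\ell-\pi_0\|_2 = o_P(n^{-1/2})$. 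Theorem~\ref{thm:nongaussian} only assumes the doubly robust ``either/or'' condition, so your argument as written establishes the conclusion under strictly stronger hypotheses than the theorem grants; under the either/or condition alone the pointwise difference $\widehat{u}^{\mathrm{DML}}_i - \overline{u}_i$ need not vanish at any $n^{-1/2}$ rate, and your bound $\max_i|\widehat{v}_i-\overline{v}_i|\cdot n^{-1}\sum_i|W_{i,j}| = o_P(1/\sqrt{n})$ fails. The paper instead carries the discrepancy as an explicit additive term $S = 2\|(\overline{\bm{\varepsilon}}-\widehat{\bm{\varepsilon}})^\top\bm{W}(\widehat{\bm{\beta}}_n-\bm{\beta}_0)/n\|_\infty$ all the way through the oracle inequality of Lemma~\ref{thm:oracle} and only at the end argues (in Lemma~\ref{lem:nongaussian}) that $S/\lambda = o_P(1)$ using consistency of the nuisance estimators, which is a weaker demand than a $\sqrt{n}$-rate on the remainder; for a consistency statement (as opposed to the asymptotic normality of Lemma~\ref{lem:asymp_normal}) this is the right level of precision. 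To repair your argument without strengthening the hypotheses, you would need to either adopt the paper's device of keeping $S$ explicit in the oracle inequality, or show directly that the doubly robust structure makes the cross term $(\widehat{\bm{v}}-\overline{\bm{v}})^\top\bm{W}\widehat{\bm{\delta}}/n$ of smaller order than $\lambda\|\widehat{\bm{\delta}}\|_1$ under the either/or condition alone.
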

This result implies consistency of $\widehat{\bm{\beta}}$; that is, $\widehat{\bm{\beta}}_n\xrightarrow{\mathrm{p}}\bm{\beta}_0$. The proof is provided in Appendix~\ref{sec:theoretical_fix}.

\subsection{Asymptotic Normality}
This section provides the asymptotic normality of the TDL estimator $\widehat{\bm{b}}^{\mathrm{TDL}}_n$. Before showing the result, we make the following assumption.
\begin{assumption}
\label{asm:conv_rate}
    For each $\ell \in [m]$ and each $d\in\{1, 0\}$, the following holds:
\begin{align*}
    &\left\|\widehat{\mu}_\ell(d) - \mu_0(d)\right\|_2 = o_P(1),\qquad \left\|\widehat{\pi}_\ell - \pi_0\right\|_2 = o_P(1),\\
    &\left\|\widehat{\mu}_\ell(d) - \mu_0(d)\right\|_2\left\|\widehat{\pi}_\ell - \pi_0\right\|_2 = o_P\left(1/n^{1/2}\right)),
\end{align*}
as $n \to \infty$.
\end{assumption}
\begin{assumption}[From (B1) in \citet{vandeGeer2014}]
\label{asm:b1_vandegeer}
    The variable $\bm{W}$ has either
    \begin{itemize}
        \item i.i.d. sub-Gaussian rows, i.e., $\max_{i\in\mathcal{N}}\sup_{\|v\|_2 \leq 1} \mathbb{E}_{P_0}\left[\exp\left(\left|\sum_{j\in[p]} v_j W_{i, j}\right|^2 / L^2\right)\right] = O(1)$ for some fixed constant $L > 0$.
        \item i.i.d. rows and for some $K \geq 1$, $\|\bm{W}\|_\infty = \max_{i, j}|W_{i, j}| = O(K)$ as $n\to\infty$.
    \end{itemize}
    The latter is referred to as the bounded case. The strongly bounded case assumes in addition that $\max_{j\in[p]}\|\widetilde{\bm{W}}_{(-j)}\widehat{\bm{\gamma}}_j\|_\infty = O(K)$. 
\end{assumption}
\begin{assumption}[From (B2) in \citet{vandeGeer2014}]
\label{asm:b2_vandegeer}
One of the followings holds:
        \begin{itemize}
        \item in the sub-Gaussian case, it holds that $\max_{j\in[p]} \sqrt{s_j \log(p) / n} = o(1)$ as $n\to\infty$.
        \item in the (strongly) bounded case, we assume that $\max_{j\in[p]}K^2s_j\sqrt{\log(p) / n} = o(1)$ as $n\to\infty$. 
    \end{itemize}
\end{assumption}
\begin{assumption}[From (B3) in \citet{vandeGeer2014}]
\label{asm:b3_vandegeer}
    The smallest eigenvalue $\Lambda^2_{\min}$ of $\Sigma$ is strictly positive and $1/\Lambda^2_{\min} = O(1)$ as $n\to\infty$. Moreover, $\max_{j}\Sigma_{j, j} = O(1)$ as $n\to\infty$.
\end{assumption}
\begin{assumption}[From (B4) in \citet{vandeGeer2014}]
\label{asm:b4_vandegeer}
    In the bounded case, it holds that $\max_{j}\mathbb{E}_{P_0}\left[\eta^4_{j}\right] = O(K^4)$ as $n\to\infty$. 
\end{assumption}

Then, we show the following lemmas. The first lemma is about the fast convergence of the asymptotic bias of the DML-CATELasso estimator. The proof of this lemma is shown in Appendix~\ref{appdx:lem:conv_Q}.
\begin{lemma}
    \label{lem:conv_Q}
    Suppose that Assumption~\ref{asm:conv_rate} holds. Then, we have
    \begin{align*}
        \sqrt{n}\left\|\widehat{\bm{v}} - \overline{\bm{v}}\right\|_{1} = o_P(1).
    \end{align*}
\end{lemma}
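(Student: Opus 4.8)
The plan is to control $\widehat u^{\mathrm{DML}}_i-\overline u_i$ uniformly and in aggregate, since the $i$-th entry of $\widehat{\bm v}-\overline{\bm v}$ equals $(\widehat u^{\mathrm{DML}}_i-\overline u_i)/\widehat\sigma_{\bar\varepsilon}(\bm X_i)$. First I would expand this per-sample difference and exploit the doubly robust (Neyman-orthogonal) structure of $Q$. Writing the nuisance errors $\Delta^\mu_{\ell,i}(d)\coloneqq \widehat\mu_{\ell(i),n}(d)(X_i)-\mu_0(d)(X_i)$ and $\Delta^\pi_{\ell,i}\coloneqq \widehat\pi_{\ell(i),n}(1\mid X_i)-\pi_0(1\mid X_i)$, and using $Y_i=D_iY_i(1)+(1-D_i)Y_i(0)$ together with $\varepsilon_i(d)=Y_i(d)-\mu_0(d)(X_i)$, I would reorganize the $d=1$ contribution to $\widehat u^{\mathrm{DML}}_i-\overline u_i$ (and symmetrically for $d=0$) into a centered ``noise'' piece $\mathbbm{1}[D_i=1]\varepsilon_i(1)\big(1/\widehat\pi_{\ell(i),n}(1\mid X_i)-1/\pi_0(1\mid X_i)\big)$ and a ``bias'' piece $\Delta^\mu_{\ell,i}(1)\big(1-\mathbbm{1}[D_i=1]/\widehat\pi_{\ell(i),n}(1\mid X_i)\big)$. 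Conditioning on $X_i$ and the out-of-fold data $\overline{\mathcal D}^{\ell(i)}$, the noise piece has mean zero, whereas the bias piece has conditional mean equal to the product $\Delta^\mu_{\ell,i}(1)\,\Delta^\pi_{\ell,i}/\widehat\pi_{\ell(i),n}(1\mid X_i)$, the familiar second-order structure responsible for robustness.

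Next I would bound the two resulting contributions to $\sqrt n\,\|\widehat{\bm v}-\overline{\bm v}\|_1$ separately, throughout using the overlap bound $\varphi<\widehat\pi,\pi_0<1-\varphi$ (Assumption~\ref{asmp:coherent}) to control the inverse-propensity factors, and the consistency and strict positivity of $\widehat\sigma_{\bar\varepsilon}$ (from Assumption~\ref{asm:eror} and overlap) to control the weights $1/\widehat\sigma_{\bar\varepsilon}(\bm X_i)$. For the bias piece the natural route is Cauchy--Schwarz over the $n$ samples, yielding a bound of order $\sqrt n$ times $\|\widehat\mu_\ell(d)-\mu_0(d)\|_2\,\|\widehat\pi_\ell-\pi_0\|_2$; the product-rate condition in Assumption~\ref{asm:conv_rate}, namely $\|\widehat\mu_\ell(d)-\mu_0(d)\|_2\|\widehat\pi_\ell-\pi_0\|_2=o_P(n^{-1/2})$, then cancels the $\sqrt n$ factor and renders this contribution $o_P(1)$.

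For the centered noise piece the essential device is cross-fitting: because $\widehat\mu_{\ell,n}$ and $\widehat\pi_{\ell,n}$ are built only from $\overline{\mathcal D}^\ell$, they are independent of the fold-$\ell$ residuals $\varepsilon_i(d)$, so conditional on $\overline{\mathcal D}^\ell$ each noise term is centered with second moment controlled by $\mathbb{E}_{P_0}[\varepsilon(d)^2\mid X]$ (finite by Assumption~\ref{asm:eror}, with $Y(d)$ bounded by Assumption~\ref{asm:bounded_output}) times a factor of order $\|\widehat\pi_\ell-\pi_0\|_2^2$. I would first establish the per-coordinate rate $\sqrt n\,|\widehat u^{\mathrm{DML}}_i-\overline u_i|=o_P(1)$ by a fold-wise conditional second-moment (Chebyshev) argument, and then aggregate the fold-wise conditional moment bounds, applying Markov's inequality to pass to the $\ell_1$ statement.

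I expect the centered noise piece to be the main obstacle. In the $\ell_1$ norm the centering of these terms produces no cancellation across samples, so their smallness cannot be harvested from a law-of-large-numbers effect but must be extracted entirely from the nuisance error $\|\widehat\pi_\ell-\pi_0\|_2$ via the cross-fitting independence. The delicate point is ensuring that, after the $\sqrt n$ scaling, the fold-wise conditional moment estimates aggregate tightly enough, which is precisely where the separation of the bias and noise pieces and the rate conditions of Assumption~\ref{asm:conv_rate} must be deployed in concert; the bias piece, by contrast, is routine once the Neyman-orthogonal decomposition is in hand.
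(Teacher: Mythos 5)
Your orthogonal decomposition of $\widehat u^{\mathrm{DML}}_i-\overline u_i$ into a centered noise piece and a second-order bias piece, with the bias controlled by the product-rate condition and the noise by cross-fitting independence plus conditional second moments and Chebyshev, is exactly the skeleton of the paper's argument (Lemma~\ref{lem:nuisance}): the paper splits $\sqrt{n}\sum_{i\in\mathcal{D}^\ell}\bigl(\overline u_i-\widehat u^{\mathrm{DML}}_i\bigr)$ into an empirical-process term whose conditional variance is bounded by $C_1\bigl\|\widehat\mu_{\ell,n}-\mu_0\bigr\|_2^2+C_2\bigl\|\widehat\pi_{\ell,n}-\pi_0\bigr\|_2^2=o_P(1)$, and a conditional-expectation term shown to be $o_P(n^{-1/2})$ by \Holder's inequality and the product rate in Assumption~\ref{asm:conv_rate}. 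Up to that point your plan and the paper's coincide.

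There is, however, a genuine gap in your aggregation step, and you have put your finger on it yourself. Both bounds you invoke are bounds on \emph{signed} sums, not on sums of absolute values. The bias piece only acquires the product structure $\Delta^\mu_{\ell,i}(d)\,\Delta^\pi_{\ell,i}$ after taking the conditional expectation of the signed quantity $\Delta^\mu_{\ell,i}(d)\bigl(1-\mathbbm{1}[D_i=d]/\widehat\pi_{\ell(i),n}(d\mid X_i)\bigr)$; term by term in absolute value it is of order $\bigl|\Delta^\mu_{\ell,i}(d)\bigr|$, so $\sqrt{n}\sum_{i=1}^n|\cdot|$ is of order $n^{3/2}$ times $\bigl\|\widehat\mu_\ell(d)-\mu_0(d)\bigr\|_2$ at best, which Assumption~\ref{asm:conv_rate} does not send to zero. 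Likewise the centered noise piece has absolute value of order $|\varepsilon_i(d)|\,\bigl|\Delta^\pi_{\ell,i}\bigr|$, and its $\ell_1$ aggregate scaled by $\sqrt n$ is of order $n^{3/2}\bigl\|\widehat\pi_\ell-\pi_0\bigr\|_2$; as you observe, no cancellation is available in the $\ell_1$ norm, and the cross-fitting independence only becomes useful through the variance of the \emph{sum}, i.e., through exactly the law-of-large-numbers effect you propose to forgo. Consequently the route you describe cannot deliver $\sqrt{n}\|\widehat{\bm v}-\overline{\bm v}\|_1=o_P(1)$ from Assumption~\ref{asm:conv_rate} alone. The paper's proof in effect establishes the claim only for the signed fold-wise sums $\sqrt{n}\sum_{i\in\mathcal{D}^\ell}\bigl(\overline u_i-\widehat u^{\mathrm{DML}}_i\bigr)$, which is also the form actually needed in Lemma~\ref{lem:asymp_normal}, where the nuisance error enters only through the weighted signed sum $\bm{W}^\top\bigl(\widehat{\bm v}-\overline{\bm v}\bigr)$. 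To prove the lemma as literally stated you would need far stronger nuisance rates; the workable fix is to restate the target as a bound on the signed (or fold-averaged) sums, after which both of your estimates go through.
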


The second lemma is about the form of the TDL estimator. The proof is shown in Appendix~\ref{appdx:nongaussian}.
\begin{lemma}
    \label{lem:asymp_normal}
Suppose that Assumptions~\ref{asmp:unconfounded}--\ref{asmp:coherent} and \ref{asm:b1_vandegeer}--\ref{asm:b4_vandegeer} hold. 
Assume a linear model in \eqref{eq:linear} with Assumption~\ref{asm:eror}. If $\bm{W}$ is sub-Gaussian and the errors satisfy $\mathbb{E}_{P_0}\left[\exp(|\varepsilon| / L)\right] = O(1)$ for some fixed $L$ (sub-exponential). Consider the WTDL estimator with nodewise regression (\eqref{decate_nodewise2}) with $\lambda \asymp K_0 \sqrt{\log(p) / n}$. Assume that $K_0 s_0\log(p) / \sqrt{n} = o(1)$ and $\max_j K_0 s^d_j \sqrt{\log(p) / n} = o(1)$ for $d\in\{1, 0\}$ as $n\to\infty$. Then, we have
    \begin{align*}
    \sqrt{n}\left(\widehat{\bm{b}}^{\mathrm{WTDL}}_{n, \mathrm{nodewise}} - \bm{\beta}_0\right) = \widehat{\Theta} \bm{W}^\top \overline{\bm{\varepsilon}}^{\mathrm{Weight}} / \sqrt{n} - \widehat{\Delta}^{\mathrm{WDML}} + R,
    \end{align*}
    where $R \coloneqq \sqrt{n}\widehat{\Theta} \bm{W}^\top\Big(\widehat{\bm{v}} - \overline{\bm{v}}\Big)$, 
    and recall that $\widehat{\Delta}^{\mathrm{WDML}} = \sqrt{n}\widehat{\Theta}\widehat{\Sigma} \left( \widehat{\bm{\beta}}^{\mathrm{DML}}_n - \bm{\beta}_0\right)$. 
    Additionally, $\|R\|_1 = o_P(1)$ and $\|\Delta^{\mathrm{DML}}\|_1 = o_P(1)$  hold as $n\to \infty$. 
\end{lemma}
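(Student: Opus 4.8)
The plan is to establish the displayed identity as an exact algebraic consequence of the KKT condition and then to control the two remainder terms $R$ and $\widehat{\Delta}^{\mathrm{WDML}}$ separately, following the desparsified-Lasso template of \citet{vandeGeer2014} adapted to the cross-fitted doubly robust pseudo-outcomes. For the \emph{decomposition}, I would start from the KKT condition \eqref{eq:WDML-CATELasso_kkt} and the definition of the debiased estimator in \eqref{eq:wtdl_cate_Lasso}--\eqref{decate_nodewise2}, substitute $\widehat{\mathbb{Q}}^{\mathrm{WDML}} = \bm{W}\bm{\beta}_0 + \widehat{\bm{\varepsilon}}^{\mathrm{WDML}}$, and rearrange to obtain the exact identity $\sqrt{n}(\widehat{\bm{b}}^{\mathrm{WTDL}}_{n,\mathrm{nodewise}} - \bm{\beta}_0) = \widehat{\Theta}\bm{W}^\top\widehat{\bm{\varepsilon}}^{\mathrm{WDML}}/\sqrt{n} - \widehat{\Delta}^{\mathrm{WDML}}$, with the bias term $\widehat{\Delta}^{\mathrm{WDML}} = \sqrt{n}(\widehat{\Theta}\widehat{\Sigma} - I)(\widehat{\bm{\beta}}^{\mathrm{WDML}}_n - \bm{\beta}_0)$ (the identity-subtracted form that makes the remainder negligible). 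The key observation is the error splitting: because $\widehat{\varepsilon}^{\mathrm{DML}}_i$ and $\overline{\varepsilon}_i$ share the common term $\varepsilon_i(1) - \varepsilon_i(0)$, their difference is exactly $\widehat{u}^{\mathrm{DML}}_i - \overline{u}_i$, so after dividing by the common weight $\widehat{\sigma}_{\bar{\varepsilon}}(\bm{X}_i)$ one gets $\widehat{\bm{\varepsilon}}^{\mathrm{WDML}} - \overline{\bm{\varepsilon}}^{\mathrm{Weight}} = \widehat{\bm{v}} - \overline{\bm{v}}$. Substituting into the first term gives $\widehat{\Theta}\bm{W}^\top\widehat{\bm{\varepsilon}}^{\mathrm{WDML}}/\sqrt{n} = \widehat{\Theta}\bm{W}^\top\overline{\bm{\varepsilon}}^{\mathrm{Weight}}/\sqrt{n} + R$ with $R = \widehat{\Theta}\bm{W}^\top(\widehat{\bm{v}} - \overline{\bm{v}})/\sqrt{n}$, which is the claimed form.

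To \emph{control $R$} I would invoke Lemma~\ref{lem:conv_Q}, which gives $\sqrt{n}\|\widehat{\bm{v}} - \overline{\bm{v}}\|_1 = o_P(1)$. Working coordinate-wise and applying H\"{o}lder's inequality, $|R_j| \leq n^{-1/2}\|\widehat{\Theta}_j\|_1\,\|\bm{W}\|_\infty\,\|\widehat{\bm{v}} - \overline{\bm{v}}\|_1$, where $\widehat{\Theta}_j$ is the $j$-th row of $\widehat{\Theta}$ and $\|\bm{W}\|_\infty = \max_{i,k}|W_{i,k}|$. The nodewise Lasso construction under Assumptions~\ref{asm:b1_vandegeer}--\ref{asm:b4_vandegeer} yields $\max_j\|\widehat{\Theta}_j\|_1 = O_P(1)$, and in the sub-Gaussian case a maximal inequality gives $\|\bm{W}\|_\infty = O_P(\sqrt{\log(np)})$; the overlap condition (Assumption~\ref{asmp:coherent}) together with consistency of $\widehat{\sigma}_{\bar{\varepsilon}}$ keeps the weights bounded away from zero so they do not inflate these bounds. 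Combining, $\max_j|R_j| = n^{-1/2}\cdot O_P(1)\cdot O_P(\sqrt{\log(np)})\cdot o_P(n^{-1/2}) = o_P(1)$, which delivers the claimed negligibility of $R$ (coordinate-wise, as needed for the per-coordinate confidence intervals).

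To \emph{control $\widehat{\Delta}^{\mathrm{WDML}}$}, H\"{o}lder gives $\|\widehat{\Delta}^{\mathrm{WDML}}\|_\infty \leq \sqrt{n}\,\|\widehat{\Theta}\widehat{\Sigma} - I\|_\infty\,\|\widehat{\bm{\beta}}^{\mathrm{WDML}}_n - \bm{\beta}_0\|_1$. The KKT conditions of the nodewise Lasso force $\|\widehat{\Theta}\widehat{\Sigma} - I\|_\infty = O_P(\max_j\lambda_j) = O_P(\sqrt{\log(p)/n})$, while Theorem~\ref{thm:nongaussian} supplies the $\ell_1$-consistency $\|\widehat{\bm{\beta}}^{\mathrm{WDML}}_n - \bm{\beta}_0\|_1 = o_P(s_0\sqrt{\log(p)/n})$. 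Multiplying, $\|\widehat{\Delta}^{\mathrm{WDML}}\|_\infty = O_P(s_0\log(p)/\sqrt{n}) = o_P(1)$ under the assumed rate $K_0 s_0\log(p)/\sqrt{n} = o(1)$.

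The delicate step is the control of $R$: routing the first-stage nuisance and estimated-weight errors through the $p$-dimensional operator $\widehat{\Theta}\bm{W}^\top/\sqrt{n}$ without losing the $\sqrt{n}$-gain of Lemma~\ref{lem:conv_Q}. This hinges on simultaneously establishing (i) the fast rate $\sqrt{n}\|\widehat{\bm{v}} - \overline{\bm{v}}\|_1 = o_P(1)$, which itself exploits the rate-double-robustness in Assumption~\ref{asm:conv_rate} and the cross-fitting, and (ii) the boundedness $\max_j\|\widehat{\Theta}_j\|_1 = O_P(1)$ of the nodewise approximate inverse. A further subtlety is that the estimated weights $\widehat{\sigma}_{\bar{\varepsilon}}$ enter both $\bm{W}$ and $\widehat{\bm{v}} - \overline{\bm{v}}$, so one must verify, via overlap, that they stay bounded below and that replacing $\sigma_{\bar{\varepsilon}}$ by $\widehat{\sigma}_{\bar{\varepsilon}}$ contributes only a negligible term; the $\widehat{\Delta}^{\mathrm{WDML}}$ bound then follows from the standard debiased-Lasso argument.
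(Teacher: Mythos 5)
Your proposal is correct and follows essentially the same route as the paper: the exact KKT-based decomposition, the observation that $\widehat{\bm{\varepsilon}}^{\mathrm{WDML}} - \overline{\bm{\varepsilon}}^{\mathrm{Weight}} = \widehat{\bm{v}} - \overline{\bm{v}}$, Lemma~\ref{lem:conv_Q} to kill $R$, and the standard nodewise-KKT/$\ell_1$-consistency product bound giving $o_P(s_0\log(p)/\sqrt{n})$ for $\widehat{\Delta}^{\mathrm{WDML}}$. You are in fact more explicit than the paper in two places where it is loose: you correctly write the bias term with the $\widehat{\Theta}\widehat{\Sigma} - I$ factor (the paper's displayed definition drops the $-I$, without which the negligibility claim would fail), and you deliver sup-norm control of $R$ and $\widehat{\Delta}^{\mathrm{WDML}}$, which is what the per-coordinate asymptotic normality actually requires, even though the lemma nominally states $\ell_1$ bounds.
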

This result implies that under the raw-sparsity condition of $\bm{X}$, the bias term in the debiased CATE Lasso estimator converges to zero with the $1/\sqrt{n}$ order. Therefore, we can develop the asymptotic distribution with ignoring the bias term.

Then, from the central limit theorem, the confidence interval is given as follows.
\begin{theorem}[Asymptotic normality of the WTDL estimator]
Suppose that $\bm{W}$ is random, and the conditions in Lemma~\ref{lem:asymp_normal} hold. 
    Then, for each $j\in[p]$, the following holds as $n\to\infty$:
    \begin{align*}
       \frac{\sqrt{n}\left(\widehat{\bm{b}}^{\mathrm{WTDL}}_{n, \mathrm{nodewise}, j} - \bm{\beta}_{0, j}\right)}{\sqrt{\left(\widehat{\Theta}^\top_{\mathrm{nodewise}}  \widehat{\Sigma} \widehat{\Theta}_{\mathrm{nodewise}}\right)_{j, j}}} = G_j + o_P(1).
    \end{align*}
    where $G_j$ converges weakly to a standard Gaussian distribution as $n\to\infty$.
\end{theorem}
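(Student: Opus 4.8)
The plan is to reduce the statement to a central limit theorem for the single leading linear term isolated in Lemma~\ref{lem:asymp_normal}, and then to apply a conditional (triangular-array) Lindeberg--Feller CLT. First I would invoke Lemma~\ref{lem:asymp_normal}, which gives $\sqrt{n}\left(\widehat{\bm{b}}^{\mathrm{WTDL}}_{n,\mathrm{nodewise}} - \bm{\beta}_0\right) = \widehat{\Theta}\bm{W}^\top\overline{\bm{\varepsilon}}^{\mathrm{Weight}}/\sqrt{n} - \widehat{\Delta}^{\mathrm{WDML}} + R$ with $\|R\|_1 = o_P(1)$ and $\|\widehat{\Delta}^{\mathrm{WDML}}\|_1 = o_P(1)$. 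Since $\|\cdot\|_\infty \le \|\cdot\|_1$, the $j$-th coordinate obeys $\sqrt{n}\left(\widehat{b}^{\mathrm{WTDL}}_{n,\mathrm{nodewise},j} - \beta_{0,j}\right) = n^{-1/2}\left(\widehat{\Theta}\bm{W}^\top\overline{\bm{\varepsilon}}^{\mathrm{Weight}}\right)_j + o_P(1)$. Writing $\widehat{\Theta}_j$ for the $j$-th row of $\widehat{\Theta}$, the leading term is $n^{-1/2}\sum_{i=1}^n \left(\widehat{\Theta}_j\bm{W}_i\right)\overline{\varepsilon}^{\mathrm{Weight}}_i$, so after dividing by $\sqrt{(\widehat{\Theta}^\top\widehat{\Sigma}\widehat{\Theta})_{jj}}$ (and checking this denominator is bounded away from $0$, so that the additive $o_P(1)$ remains negligible) it suffices to prove a CLT for this self-normalized sum.

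Second, I would fix the conditioning. Because $\mathbb{E}_{P_0}[\overline{\varepsilon}_i\mid X_i]=0$ and the $\overline{\varepsilon}_i$ are independent across $i$, I condition on the $\sigma$-field generated by the covariates $\{X_i\}$, the estimated weights $\widehat{\sigma}_{\bar{\varepsilon}}(\cdot)$, and the cross-fit nuisances, which by sample splitting are independent of the within-fold errors. Relative to this $\sigma$-field both $\widehat{\Theta}$ and $\bm{W}$ are measurable, the summands $\xi_{n,i} \coloneqq n^{-1/2}\left(\widehat{\Theta}_j\bm{W}_i\right)\overline{\varepsilon}^{\mathrm{Weight}}_i$ are conditionally independent with mean zero, and their conditional variance is $n^{-1}\sum_i \left(\widehat{\Theta}_j\bm{W}_i\right)^2 \sigma^2_{\bar{\varepsilon}}(X_i)/\widehat{\sigma}^2_{\bar{\varepsilon}}(X_i)$. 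Using consistency $\widehat{\sigma}_{\bar{\varepsilon}}\xrightarrow{\mathrm{p}}\sigma_{\bar{\varepsilon}}$, this equals $(\widehat{\Theta}^\top\widehat{\Sigma}\widehat{\Theta})_{jj}\left(1+o_P(1)\right)$, i.e.\ exactly the stated denominator; here the weighting is what normalizes the per-observation error variance to one and thereby makes the sandwich $(\widehat{\Theta}^\top\widehat{\Sigma}\widehat{\Theta})_{jj}$ the correct scaling.

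Third, I would verify the Lindeberg condition for $\{\xi_{n,i}\}$ conditionally on the design. Under the sub-exponential tail $\mathbb{E}_{P_0}[\exp(|\varepsilon|/L)]=O(1)$ and Assumption~\ref{asm:b4_vandegeer}, combined with negligibility of the maximal leverage $\max_i\left(\widehat{\Theta}_j\bm{W}_i\right)^2\big/\sum_i\left(\widehat{\Theta}_j\bm{W}_i\right)^2 \xrightarrow{\mathrm{p}} 0$ (which follows from the sub-Gaussian design of Assumption~\ref{asm:b1_vandegeer}, the nodewise-Lasso control of $\widehat{\Theta}$ under Assumptions~\ref{asm:b2_vandegeer}--\ref{asm:b3_vandegeer}, and $1/\Lambda^2_{\min}=O(1)$), the conditional Lindeberg--Feller CLT yields convergence of the self-normalized sum to $N(0,1)$ along almost every design sequence. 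Passing from conditional to unconditional weak convergence by bounded convergence applied to the conditional characteristic functions then gives that $G_j$ converges weakly to the standard Gaussian, which is the claim.

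The main obstacle will be the joint handling of the three data-dependent objects in the leading term---the approximate inverse $\widehat{\Theta}$, the estimated weights $\widehat{\sigma}_{\bar{\varepsilon}}$, and the oracle doubly-robust errors $\overline{\varepsilon}_i$---so as to preserve the conditional mean-zero and conditional-independence structure while simultaneously showing that (i) the sandwich $(\widehat{\Theta}^\top\widehat{\Sigma}\widehat{\Theta})_{jj}$ is bounded away from zero and consistently estimates the conditional variance, and (ii) the Lindeberg/maximal-leverage condition holds. Part (i) rests on the population-level convergence of the nodewise estimators, $\widehat{\tau}^2_j \to \tau^2_j > 0$, guaranteed by Assumptions~\ref{asm:b1_vandegeer}--\ref{asm:b3_vandegeer}, while part (ii) relies on the sub-Gaussianity of $\bm{W}$ to dominate the maximal term; throughout, the consistency $\widehat{\sigma}_{\bar{\varepsilon}}\xrightarrow{\mathrm{p}}\sigma_{\bar{\varepsilon}}$ is what reconciles the variance of the weighted oracle error with the stated normalization.
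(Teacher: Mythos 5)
Your proposal is correct and follows essentially the same route as the paper: linearize via Lemma~\ref{lem:asymp_normal}, discard the $o_P(1)$ remainder and bias terms coordinatewise, apply a CLT to the leading term $\widehat{\Theta}\bm{W}^\top\overline{\bm{\varepsilon}}^{\mathrm{Weight}}/\sqrt{n}$, and use Proposition~\ref{lem:random} to control the sandwich normalizer $\bigl(\widehat{\Theta}^\top_{\mathrm{nodewise}}\widehat{\Sigma}\widehat{\Theta}_{\mathrm{nodewise}}\bigr)_{j,j}$. The paper's own proof is only two lines (it invokes Proposition~\ref{lem:random} and then asserts the Gaussian limit), so your conditional Lindeberg--Feller verification and variance computation simply supply details the paper leaves implicit rather than constituting a different approach.
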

\begin{proof}
    Proposition~\ref{lem:random} gives us
\[\Big(\widehat{\Theta}^\top_{\mathrm{nodewise}}  \widehat{\Sigma} \widehat{\Theta}_{\mathrm{nodewise}}\Big)_{j, j}\xrightarrow{\mathrm{p}} \Sigma_{j,j}\ \ \ \mathrm{as}\ \ n\to\infty.\]
Then, 
\[\sqrt{n}\left(\widehat{\bm{b}}^{\mathrm{WTDL}}_{n, \mathrm{nodewise}, j} - \bm{\beta}_{0, j}\right) \xrightarrow{\mathrm{d}} \mathcal{N}\big(0, \Sigma_{j,j}\big)\qquad \mathrm{as}\ n\to\infty\]
holds. 
\end{proof}
This theorem gives us an asymptotic confidence interval for the WTDL estimator. Thus, we show the consistency and confidence intervals for the WTDL estimator.

\subsection{On the Convergence Rate Condition}
Assumption~\ref{asm:conv_rate} is difficult to guarantee when the dimension $p$ is large. For instance, it is common to deal with this by imposing sparsity or, in the case of using neural networks, by making assumptions about the structure of the neural network.

However, there are methods to ensure this condition without imposing sparsity. For example, one method is to use the benign-overfitting theory by \citet{Bartlett2020}. This approach demonstrates that appropriate convergence rates can be obtained even when $p$ is infinite, by using assumptions about the eigen space of the inverse matrix of the covariate 
$X$'s design matrix. Although confidence intervals are not provided, \citet{kato2022benignoverfitting} has discussed consistent estimation of CATE through linear regression in a setting where $p$ is infinite and sparsity is not imposed.

\section{Related Work}
\label{sec:related}
We denote $\widehat{\mu}(d)(x)$ and $\widehat{\pi}(d\mid x)$ as some estimators of $\mu_0(d)(x)$ and $\pi(d\mid x)$, respectively, to introduce related work. These estimators can vary depending on the methodology used. For instance, traditional approaches might employ parametric regression or kernel-based nonparametric regression to construct these estimators. In contrast, modern methodologies might utilize random forests or neural networks. The construction method impacts the confidence interval, especially due to the violation of the Donsker condition.

Pioneering studies on CATE estimation include those by \citet{Heckman1997} and \citet{Heckman2005}. Works by \citet{LeeWhang2009} and \citet{Hsu2017} discuss both estimation and hypothesis testing of CATEs. Studies by \citet{Cai2017,Cai2021} focus on confidence intervals in high-dimensional contexts. \citet{Abrevaya2015} addresses the nonparametric identification of CATEs and introduces a Nadaraya-Watson-based estimator.

The advent of machine learning algorithms has led to various proposed methods for estimating CATEs. \citet{Kunzel2019} summarize some of these methods as meta-learners, as shown in the subsequent section.

Furthermore, the literature on high-dimensional linear regression deserves mention. Beyond the Lasso estimator, numerous high-dimensional linear regression techniques have been proposed, including Ridge \citep{Hoerl1970} and Elastic Net \citep{ZouHastie2005} under sparsity conditions. High-dimensional regression methods not involving regularization, known as ridgeless estimation or interpolating estimators, have also been developed \citep{Bartlett2020}. \citet{Bartlett2020} introduces the benign-overfitting framework for interpolating estimators, and \citet{Tsigler2023} demonstrates benign overfitting in ridge regression. The debiased Lasso was introduced by \citet{zhangzhang2014}, with extensions by \citet{Javanmard14a} and \citet{vandeGeer2014}. Related studies include those by \citet{Belloni2014}, \citet{Belloni2014Post}, and \citet{Belloni2016}, with \citet{Belloni2014} considering treatment effect estimation in a similar vein to our work. 

In the literature of semiparametric analysis, we can consider an estimator including nuisance parameters estimated by some machine learning algorithms. If the nuisance estimators do not satisfy the Donsker condition, the estimator of the parameter of interest can be potentially biased. To debias the estimator,  \citet{ChernozhukovVictor2018Dmlf} proposes DML, which is a refinement of the sample-splitting methods used in early works such as \citet{klaassen1987} and \citet{ZhengWenjing2011CTME}. \citet{Fan2022} proposes Lasso with DML and multiplier-bootstrap to obtain a confidence interval. See also \citet{Chernozhukov2024book}.

Additionally, numerous methods have also been proposed for CATE estimation \citep{Li2017,Kallus2017,Powers2017,Subbaswamy2018,Zhao2019,Nie2020,Hahn2020,sugasawa2023bayesian,lan2023causal,kato2023cate,Jin2023}.


\section{Conclusion}
This study delved into th Lasso estimation of CATEs with sparsity, we introduced a three-stage estimation methodology: (i) estimating nuisance parameters, which include conditional expected outcomes and propensity scores; (ii) estimating the difference in outcomes, denoted as $\widehat{\mathbb{Q}}^{\mathrm{DML}}$, through cross-fitting and regressing covariates $\bm{W}$ on this difference using Lasso regularization; and (iii) applying the debiased Lasso technique to mitigate the bias introduced by the Lasso regularization. The estimator derived from this approach is termed the TDL estimator. We demonstrated the consistency and asymptotic normality of the TDL estimator and discuss the conditions under which asymptotic normality is achieved.

\bibliographystyle{tmlr}
\bibliography{arXiv.bbl}

\onecolumn
\appendix

\section{Details of the Estimation Strategy}
\label{sec:det_est_strategy}
This section explains how we derive our proposed TDL estimator with an intuitive explanation. 

\subsection{Lasso with Oracle Nuisance Parameters}
\label{sec:oracledrcate}
First, we start our explanation from the estimation of CATEs using the true nuisance parameters $\mu_0(d)(X) = \mathbb{E}_{P_0}[Y(d) \mid X]$ and $\pi_0(d\mid X) = \mathbb{E}_{P_0}[D = d \mid X]$ for each $d\in\{1, 0\}$. An estimator propsosed in this section is infeasible in practice but gives us important intuitions for constructing a feasible estimator in practice. 

Let us define 
\[ \overline{Q}_i \coloneqq Q\big(Y_i, D_i, X_i; \mu_0, \pi_0\big)\ \ \mathrm{and}\ \ \overline{\mathbb{Q}} \coloneqq \Big(\overline{Q}_i\ \overline{Q}_2\ \cdots \overline{Q}_n\Big)^\top.\] 
Then, we define the Oracle DR CATE Lasso (Oracle-DR-CATELasso) estimator $\overline{\bm{\beta}}_n$  as 
\begin{align}
\label{eq:Oracle-DRCATELasso}
&\overline{\bm{\beta}}_n \in \argmin_{\bm{\beta}\in\mathbb{R}^p} \left\{\left\|\overline{\mathbb{Q}} - \bm{X}\bm{\beta} \right\|^2_2 / n+ \lambda\left\|\bm{\beta}\right\|_1\right\}.
\end{align}

We can show the consistency by using the standard approaches in Lasso, as in \citet{vandeGeer2008}. However, the Oracle-DR-CATELasso estimator does not have the asymptotic normality. Therefore, we debiase the bias caused by Lasso using the debiased Lasso technique, such as \citet{vandeGeer2014}.

The Oracle-DR-CATELasso is infeasible in practice since $\mu_0$ and $\pi_0$ are unknown, which are replaced with some estimators. Furthermore, the conditional variance $\sigma^2_{\bar{\varepsilon}}(x)$ of the error term in the linear regression model \eqref{eq:cate_linear_model} varies across  $x\in\mathcal{X}$, which affects the efficiency of the estimator In construction of an estimator used in practice, we deal with these problems.

\subsection{DR CATE Lasso with Estimated Nuisance Parameters}
\label{sec:consistentdrcate}
In Section~\ref{sec:oracledrcate}. we defined Lass using the oracle nuisance parameters $\mu_P(d)(X)$ and $\pi_P(X)$. Since they are usually unknown, we consider replacing them with their estimators. In fact, if we have any consistent estimators of $\mu_P(d)(X)$ and $\pi_P(X)$, we can obtain consistent estimators $\bm{\beta}$ and $\tau(x)$.

\paragraph{Construction of a score function.} Let $\widehat{\mu}$ and $\widehat{\pi}$ be some estimators of $\mu_P(d)(X)$ and $\pi_P(X)$ under $P_0$. 
By replacing $\mu_0$ and $\pi_0$ in $\mathbb{Q}$ with the estimators $\widehat{\mu}$ and $\widehat{\pi}$, we define 
\[ \widehat{Q}_i \coloneqq Q\big(Y_i, D_i, X_i; \widehat{\mu}, \widehat{\pi}\big)\ \ \mathrm{and}\ \ \widehat{\mathbb{Q}} \coloneqq \Big(\widehat{Q}_i\ \widehat{Q}_2\ \cdots \widehat{Q}_n\Big)^\top.\]

Here, the following linear regression model holds under $P_0$:
\begin{align*}
   \widehat{\mathbb{Q}} = \bm{X}\bm{\beta} + \widehat{\bm{\varepsilon}},
\end{align*}
where $\widehat{\bm{\varepsilon}}$ is an unobservable $n$-dimensional vector whose $i$-th element $\widehat{{\varepsilon}}_i$ is equal to 
\begin{align*}
    \widehat{{\varepsilon}}_i \coloneqq \frac{\mathbbm{1}\big[D_i = 1\big]\big(Y_i - \widehat{\mu}(1)(X_i)\big)}{\widehat{\pi}(1\mid X_i)} - \frac{\mathbbm{1}\big[D_i = 0\big]\big(Y_i - \widehat{\mu}(0)(X_i)\big)}{\widehat{\pi}(0\mid X_i)} - \Big\{Y_i(1) - \widehat{\mu}(1)(X_i)\Big\} + \Big\{Y_i(0) - \widehat{\mu}(0)(X_i)\Big\} + \varepsilon_i(1) - \varepsilon_i(0).
\end{align*}
Note that without additional assumptions, $\mathbb{E}_{P_0}\left[\widehat{\varepsilon}_i \mid \mathbbm{X}_i\right] = 0$ does not holds in general due to the estimation error.

\paragraph{DR CATE Lasso.}
Then, we define the DR CATE Lasso (DR-CATELasso) $\widehat{\bm{\beta}}_n$ estimator as 
\begin{align*}
&\widehat{\bm{\beta}}_n\in \argmin_{\bm{\beta}\in\mathbb{R}^p} \left\{\left\|\overline{\mathbb{Q}} - \bm{X}\bm{\beta} \right\|^2_2 / n+ \lambda\left\|\bm{\beta}\right\|_1\right\},
\end{align*}
where $\lambda > 0$ is a penalty coefficient. This estimator is doubly robust in the sense that either $\widehat{\mu}$ or $\widehat{\pi}$ is consistent, $\widehat{\bm{\beta}}_n$ is also consistent. We show the proof in Lemma~\ref{lem:nongaussian}.

\subsection{Derivation of the WTDL Estimator}
Based on the arguments in Appendix~\ref{sec:det_est_strategy}, we design our proposed  WTDL estimator.

\paragraph{Two sources of the bias in the DR-CATELasso estimator.} In the DR-CATELasso estimator, there are two sources of the bias. First, even without using estimators of $\mu_0$ and $\pi_0$, the Oracle-DR-CATELasso estimator is biased due to the use of the Lasso regularization. Furthermore, when we estimate $\mu_0$ and $\pi_0$ using estimators that do not satisfy the Donsker condition, we cannot obtain the bias due to the use of such estimators remain in $\sqrt{n}$ approximation. We debias these biases by using the techniques of the debiased Lasso and DML, respectively. First, we debias the bias caused from the use of non-Donsker estimators for $\mu_0$ and $\pi_0$ using the DML. Then, we debias the bias caused from the Lasso regularization by using the debiased Lasso.

\paragraph{(A) DML for the CATE estimator.} To debias the bias caused in estimation of $\mu_0$ and $\pi_0$, we employ the cross-fitting in DML. The cross-fitting is a variant of the sample-splitting technique that has been utilized in semiparametric analysis, such as ..., and refined by \citet{ChernozhukovVictor2018Dmlf}. In this technique, we first split samples into $m$ subgroups with sample sizes $n_\ell = n / m$. For simplicity, without loss of generality, we assume that $n/m$ is an integer. For each $\ell \in [m]$, we define $\mathcal{D}^\ell \coloneqq \left\{(Y_{\ell, (i)}, D_{\ell, (i)}, X_{\ell, (i)})\right\}^{n_\ell}_{i=1}$ and $\overline{\mathcal{D}}^\ell \coloneqq \mathcal{D} \backslash \mathcal{D}^\ell$, where $(Y_{\ell, (i)}, D_{\ell, (i)}, X_{\ell, (i)})$ denotes an $i$-th element in the $\ell$-th subgroup. Let $\widehat{\mu}_{\ell, n}$ and $\widehat{\pi}_{\ell, n}$ be some estimators constructed only using $\overline{\mathcal{D}}^\ell$, without using $\mathcal{D}^\ell$. The estimators are assumed to satisfy a convergence rate condition described in Assumption~\ref{asm:conv_rate}. Then, we construct a score function as
\[ \widehat{Q}^{\mathrm{DML}}_i \coloneqq Q\big(Y_{\ell(i), i}, D_{\ell(i), i}, X_{\ell(i), i}; \widehat{\mu}_{\ell(i), n}, \widehat{\pi}_{\ell(i), n}\big)\ \ \mathrm{and}\ \ \widehat{\mathbb{Q}}^{\mathrm{DML}} \coloneqq \Big(\widehat{Q}^{\mathrm{DML}}_1\ \widehat{Q}^{\mathrm{DML}}_2\ \cdots \widehat{Q}^{\mathrm{DML}}_n\Big)^\top,\]
where $\ell(i)$ denotes a subgroup $\ell \in [m]$ such that sample $i$ belongs to it. 
Here, the following linear regression model holds under $P_0$:
\begin{align*}
   \widehat{\mathbb{Q}}^{\mathrm{DML}} = \bm{X}\bm{\beta}_0 + \widehat{\bm{\varepsilon}}^{\mathrm{DML}},
\end{align*}
where $\widehat{\bm{\varepsilon}}^{\mathrm{DML}}$ is an unobservable $n$-dimensional vector whose $i$-th element $\widehat{{\varepsilon}}^{\mathrm{DML}}_i$ is equal to 
\begin{align*}
     \widehat{{\varepsilon}}^{\mathrm{DML}}_i &\coloneqq \widehat{u}^{\mathrm{DML}}_i + \varepsilon_i(1) - \varepsilon_i(0),\\
     \widehat{u}^{\mathrm{DML}}_i & \coloneqq \frac{\mathbbm{1}\big[D_i = 1\big]\big(Y_i - \widehat{\mu}_{\ell(i), n}(1)(X_i)\big)}{\widehat{\pi}_{\ell(i), n}(1\mid X_i)} - \frac{\mathbbm{1}\big[D_i = 0\big]\big(Y_i - \widehat{\mu}_{\ell(i), n}(0)(X_i)\big)}{\widehat{\pi}_{\ell(i), n}(0\mid X_i)}\\
     &\ \ \ \ \ \ \ \ \  - \Big\{Y_i(1) - \widehat{\mu}_{\ell(i), n}(1)(X_i)\Big\}+ \Big\{Y_i(0) - \widehat{\mu}_{\ell(i), n}(0)(X_i)\Big\} 
\end{align*}
In Lemma~\ref{lem:conv_Q}, we show that for each $i\in[n]$, $\sqrt{n}| \widehat{u}^{\mathrm{DML}}_i - \overline{u}_i \mid = o_P(1)$ holds, which plays an important role for deriving a confidence interval. We refer to this estimator as the DML-CATELasso estimator. Note that this estimator does not have the asymptotic normality due to the bias caused from the use of the Lasso regularization. Furthermore, if we are only interested in the consistency, we do not have to apply the cross-fitting.

Then, we estimate the Lasso estimator as
\begin{align}
\label{eq:DML-CATELasso}
&\widehat{\bm{\beta}}^{\mathrm{DML}}_n \in \argmin_{\bm{\beta}\in\mathbb{R}^p} \left\{\left\|\widehat{\mathbb{Q}}^{\mathrm{DML}} - \bm{X}\bm{\beta} \right\|^2_2 / n+ \lambda\left\|\bm{\beta}\right\|_1\right\}.
\end{align}

As a result of optimization, the solution $\widehat{\bm{\beta}}^{\mathrm{DML}}_n$ of Lasso regression satisfies the following KKT conditions:
\begin{align*}
&\bm{W}^\top\left(\widehat{\mathbb{Q}}^{\mathrm{DML}} - \bm{W}\widehat{\bm{\beta}}^{\mathrm{DML}}_n\right) / n = \lambda \widehat{\bm{\kappa}}^{\mathrm{DML}},
\end{align*}
where $\widehat{\bm{\kappa}}^{\mathrm{DML}} \coloneqq \sign\left(\widehat{\bm{\beta}}^{\mathrm{DML}}_n\right) = \Big(\sign\left(\widehat{\beta}^{\mathrm{DML}}_{n, 1}\right),\cdots, \sign\left(\widehat{\beta}^{\mathrm{DML}}_{n, p}\right)\Big)^\top$, and for each $j\in[p]$, 
\begin{align}
    \widehat{\kappa}^{\mathrm{DML}}_j = \sign\left(\widehat{\beta}^{\mathrm{DML}}_{n, j}\right) \in \begin{cases}
    1 & \mathrm{if}\quad \widehat{\beta}^{\mathrm{DML}}_{n, j} > 0\\
    [-1, 1] & \mathrm{if}\quad \widehat{\beta}^{\mathrm{DML}}_{n, j} = 0\\
    - 1 & \mathrm{if}\quad \widehat{\beta}^{\mathrm{DML}}_{n, j} < 0
    \end{cases}.
\end{align}

\paragraph{(B) Debiased Lasso for the DML-CATELasso estimator.} 
Then, we consider debiasing the bias caused from the Lasso regularization in the WDML-CATELasso estimator to obtain a confidence interval bout $\bm{\beta}_0$. We first focus on the Oracle-DR-CATELasso estimator $\overline{\bm{\beta}}_n$. Let $\widehat{\Sigma}^{\bm{X}} = \bm{X}^\top\bm{X} / n$. Then, the KKT condition in \eqref{eq:Oracle-DRCATELasso} yields 
\begin{align*}
&\widehat{\Sigma}^{\bm{X}} \left(\overline{\bm{\beta}}_n - \bm{\beta}_0\right)
+ \lambda \overline{\bm{\kappa}} =  \bm{X}^\top \overline{\bm{\varepsilon}} / n,
\end{align*}
where we used $\overline{\mathbb{Q}} = \bm{X}\bm{\beta}_0 + \overline{\bm{\varepsilon}}$, and 
\begin{align*}
&\bm{X}^\top\left(\overline{\mathbb{Q}}  - \bm{X}\overline{\bm{\beta}}_n\right)/n = \bm{X}^\top\left(\bm{X}\bm{\beta}_0 + \overline{\bm{\varepsilon}}- \bm{X}\overline{\bm{\beta}}_n\right) / n = \widehat{\Sigma}^{\bm{X}}\left(\bm{\beta}_0 - \overline{\bm{\beta}}_n\right) + \bm{X}^\top\overline{\bm{\varepsilon}}/ n.
\end{align*}
Suppose that $\widehat{\Theta}^{\bm{X}}$ is a reasonable approximation for the inverse of $\widehat{\Sigma}^{\bm{X}}$. Then, it holds that
\begin{align*}
\overline{\bm{\beta}}_n + \widehat{\Theta}^{\bm{X}} \lambda \overline{\bm{\kappa}} - \bm{\beta}_0 & = \widehat{\Theta}^{\bm{X}} \bm{X}^\top \overline{\bm{\varepsilon}} / n - \overline{\Delta} / \sqrt{n},
\end{align*}
where $\overline{\Delta} \coloneqq \sqrt{n}\widehat{\Theta}^{\bm{X}}\widehat{\Sigma}^{\bm{X}} \left( \overline{\bm{\beta}}_n - \bm{\beta}_0\right)$. 
Here, $\overline{\Delta} / \sqrt{n}$ is the bias term that is expected to approaches zero as $n \to \infty$. Then, the debiased Lasso defines $\overline{\bm{b}} \coloneqq \overline{\bm{\beta}}_n + \widehat{\Theta}^{\bm{X}} \lambda\overline{\bm{\kappa}}$ as an estimator, where $\sqrt{n}\left(\overline{\bm{b}} - \bm{\beta}_0\right) = \widehat{\Theta}^{\bm{X}} \bm{X}^\top \overline{\bm{\varepsilon}} / n - \overline{\Delta} / \sqrt{n}$ holds. Furthermore, $\widehat{\Theta}^{\bm{X}} \bm{X}^\top \overline{\bm{\varepsilon}} / \sqrt{n}$ converges to a Gaussian distribution from the central limit theorem and $\Delta$ converges to zero as $n\to \infty$. 

Lastly, we relates the debiased Oracle-DR-CATELasso estimator $\overline{\bm{b}}$ to the debiased Lasso estimator based on the DML-CATELasso estimator. Based on the above arguments, we define the debiased Lasso estimator based on the DML-CATELasso estimator as
\begin{align*}
\widehat{\bm{b}}^{\mathrm{TDL}}_n
&\coloneqq \widehat{\bm{\beta}}^{\mathrm{DML}}_n + \widehat{\Theta}^{\bm{X}} \lambda \widehat{\bm{\kappa}}^{\mathrm{DML}} = \widehat{\bm{\beta}}^{\mathrm{DML}}_n + \widehat{\Theta}^{\bm{X}} \bm{X}^\top\left(\widehat{\mathbb{Q}}^{\mathrm{DML}} - \bm{X}\widehat{\bm{\beta}}^{\mathrm{DML}}_n\right) / n.
\end{align*}
We refer to this estimator as the TDL estimator since we apply the DML and debiased Lasso together. Here, it holds that 
\begin{align}
\sqrt{n}\left(\widehat{\bm{b}}^{\mathrm{TDL}}_n - \bm{\beta}_0\right) = \widehat{\Theta}^{\bm{X}} \bm{X}^\top \widehat{\bm{\varepsilon}}^{\mathrm{DML}} / \sqrt{n} - \widehat{\Delta}^{\mathrm{DML}} + o_P(1),
\end{align}
where $\widehat{\Delta}^{\mathrm{DML}} \coloneqq \sqrt{n}\widehat{\Theta}^{\bm{X}}\widehat{\Sigma}^{\bm{X}} \left( \widehat{\bm{\beta}}^{\mathrm{DML}}_n - \bm{\beta}_0\right)$ is the bias term that is expected to approaches zero as $n \to \infty$.
The detailed statement of this results are shown in Lemma~\ref{lem:conv_Q} with its proof. 
The term $\widehat{\Theta} \lambda \widehat{\bm{\kappa}} $ debiases the orignal Lasso estimator. The term comes from the KKT conditions. This term is similar to one in the debiased Lasso in \citet{vandeGeer2014} but different from it because we penalizes the difference $\bm{\beta}$.
As shown in Section~\ref{sec:theoretical}, we can develop the confidence intervals for the debiased CATE Lasso estimator if the inverses $\widehat{\Theta}$ is appropriately approximated. 

\paragraph{(D) Nodewise Lasso regression for consructing $\widehat{\Theta}$.}
Next, we turn our attention to the construction of the approximate inverses $\widehat{\Theta}$. Estimation of $\widehat{\Theta}$ is also a core interest in this literature because which sparsity variant is needed to establish asymptotic normality highly depends on whether we know $\widehat{\Theta}$ or not. \citet{vandeGeer2014}, \citet{Javanmard14a}, \citet{Cai2017}, and \citet{Javanmard2018} have addressed this issue. Following \citet{vandeGeer2014}, we employ the nodewise Lasso regression \citep{Meinshausen2006}. We modify the nodewise Lasso regression proposed in \citet{vandeGeer2014} for CATE estimation. Let $\bm{X}_{(-j)}$ be the design submatrix excluding the $j$-th column. In our construction, for each $j\in[p]$, we carry out the nodewise Lasso regression as follows:
\begin{align*}
\widehat{\bm{\gamma}}^{\bm{X}}_j = \argmin_{\bm{\gamma}\in \mathbb{R}^{p-1}} &\Bigg\{\left\| \mathbb{X}_j - \bm{X}_{(-j)} \bm{\gamma} \right\|^2_2 /n + 2\lambda_j \|\bm{\gamma}\|_1 \Bigg\}, 
\end{align*}
with components of $\widehat{\bm{\gamma}}^{\bm{X}}_j = \{\widehat{\gamma}^{\bm{X}}_{j,k}; k \in [p], k\neq j\}$. Denote by 
\begin{align*}
\widehat{C}^{\bm{X}} \coloneqq \begin{pmatrix}
    1 & - \widehat{\gamma}^{\bm{X}}_{1,2} & \cdots & -\widehat{\gamma}^{\bm{X}}_{1,p} \\
    - \widehat{\gamma}^{\bm{X}}_{2,1} & 1 & \cdots & - \widehat{\gamma}^{\bm{X}}_{2,p}\\
    \vdots & \vdots & \ddots & \vdots\\
    - \widehat{\gamma}^{\bm{X}}_{p,1} & - \widehat{\gamma}^{\bm{X}}_{p,2} & \cdots & 1
\end{pmatrix}.
\end{align*}
Let $\sign\left(\widehat{\bm{\gamma}}^{\bm{X}}_j\right) = \left(\sign\left(\widehat{\gamma}^{\bm{X}}_{j, 1}\right),\cdots, \sign\left(\widehat{\gamma}^{\bm{X}}_{j, p}\right)\right)^\top$. 
Then, we define
\begin{align*}
    \left(\widehat{T}^{\bm{X}}\right)^{-2} \coloneqq \mathrm{diag}\left(\left(\widehat{\tau}^{\bm{X}}_1\right)^2,\dots, \left(\widehat{\tau}^{\bm{X}}_p\right)^2\right)
\end{align*}
where for each $j\in[p]$ and $d\in\{1, 0\}$,
\begin{align*}
    \left(\widehat{\tau}^{\bm{X}}_j\right) \coloneqq \left( \mathbb{X}_{j} - \bm{X}_{(-j)}\widehat{\bm{\gamma}}^{\bm{X}}_j \right)^2 / n   + \lambda_j \left\|\widehat{\bm{\gamma}}^{\bm{X}}_j\right\|_1. 
\end{align*}
By using $\widehat{C}^{\bm{X}}$ and $\left(\widehat{T}^{\bm{X}}\right)^{-2}$, we construct the approximate inverses $\widehat{\Theta}^{\bm{X}}$ by 
\[\widehat{\Theta}^{\bm{X}}_{\mathrm{nodewise}} \coloneqq \left(\widehat{T}^{\bm{X}}\right)^{-2}\widehat{C}^{\bm{X}}. \]
Then, we define the TDL estimator with the nodewise regression as 
\begin{align}
\sqrt{n}\left(\widehat{\bm{b}}^{\mathrm{TDL}}_{n, \mathrm{nodewise}} - \bm{\beta}_0\right) = \widehat{\Theta}^{\bm{X}}_{\mathrm{nodewise}} \bm{X}^\top \widehat{\bm{\varepsilon}}^{\mathrm{DML}} / \sqrt{n} - \Delta + o_P(1),
\end{align}

\paragraph{(E) Weighted least squares.} Under the standard arguments in the debiased Lasso, we expect that the TDL estimator $\sqrt{n}\left(\widehat{\bm{b}}^{\mathrm{TDL}}_{n, \mathrm{nodewise}} - \bm{\beta}_0\right)$ converges to the Gaussian distribution. However, since the variance $\sigma^2_{\bar{\varepsilon}}(x)$ of the error term depends on $x$, we can still minimize the asymptotic varaince of an estimator of $\bm{\beta}_0$ by using the weighted least squares. Then. we define the WTDL estimator as in Section~\ref{sec:cateLasso}.

\section{Proof of Theorem~\ref{thm:nongaussian}}
\label{sec:theoretical_fix}
This section provides the proof of Theorem~\ref{thm:nongaussian}. Instead of the proof of the consistency of $\widehat{\bm{\beta}}^{\mathrm{WDML}}_n$, we show the consistency of the DR-CATELasso estimator $\widehat{\bm{\beta}}_n$, which includes the WDML-CATELasso estimator $\widehat{\bm{\beta}}^{\mathrm{WDML}}_n$ as a special case. 

We first consider a case where $\bm{X}$ and $\mathbb{D}$ are non-random variables, called a fixed design. Under the fixed design, we develop an oracle inequality (non-asymptotic estimation error bound) that holds for the DR-CATELasso estimator $\widehat{\bm{\beta}}_n$. 
\begin{lemma}[Oracle inequality]
\label{thm:oracle}
Assume a linear model in \eqref{eq:linear} with fixed design for $\bm{X}$ and $\mathbb{D}$, which satisfies Assumptions~\ref{asm:bounded_output}--\ref{asmp:coherent}, \ref{asm:eror} and \ref{asm:a1_vandegeer}. Also suppose that $\varepsilon(d)$ follows a centered sub-Gaussian distribution  with variance $\sigma^2_{\bar{\varepsilon}}$. Let $t>0$ be arbitrary. Consider the DR-CATELasso estimator $\widehat{\bm{\beta}}^{\mathrm{DML}}_n$ with regularization parameter $\lambda \geq 3 M \sigma_{\bar{\varepsilon}}\sqrt{\frac{2\left(t^2 + \log(p)\right)}{n}}$. Then, with probability at least $1 - 2\exp\left(-t^2\right)$, 
        \begin{align*}
            &\left\| \widehat{\bm{\beta}}_n- \bm{\beta}_0 \right\|_1 \leq C_1 \lambda s_0 / \phi^2_0 + S / \lambda,\mbox{~~and}\\
            &\left\|\bm{W}\left(\widehat{\bm{\beta}}_n- \bm{\beta}_0\right) \right\|^2_2 / n \leq C_2 \lambda^2 s_0 / \phi^2_0 + S,
        \end{align*}
        hold, where $C_1, C_2 > 0$ are some universal constants, and $S \coloneqq 2\left\|\overline{\bm{\varepsilon}}^\top \bm{W}\left(\widehat{\bm{\beta}}_n- \bm{\beta}_0\right) \right\|_{\infty}/ n$.
\end{lemma}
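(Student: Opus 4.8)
The plan is to follow the classical two–stage argument for Lasso oracle inequalities (in the style of \citet{Buhlmann2011} and \citet{vandeGeer2008}): first derive a \emph{basic inequality} from the optimality of $\widehat{\bm{\beta}}_n$, and then convert the resulting empirical-norm/$\ell_1$ estimate into the two stated bounds by invoking the compatibility condition of Assumption~\ref{asm:a1_vandegeer}. The only nonstandard feature is that the response is the doubly robust pseudo-outcome, so under the fixed-design, oracle-nuisance formulation the effective noise is $\overline{\bm{\varepsilon}}$ (with $\overline{\varepsilon}_i = \overline{u}_i + \varepsilon_i(1) - \varepsilon_i(0)$), and the cross term it generates is precisely the quantity recorded as the remainder $S$, which is carried through unabsorbed rather than bounded probabilistically.

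First I would set $\widehat{\bm{\delta}} \coloneqq \widehat{\bm{\beta}}_n - \bm{\beta}_0$ and use that $\widehat{\bm{\beta}}_n$ minimizes the penalized objective, substituting $\overline{\mathbb{Q}} = \bm{W}\bm{\beta}_0 + \overline{\bm{\varepsilon}}$ and cancelling the quadratic term in $\overline{\bm{\varepsilon}}$, to obtain the basic inequality
\[
\tfrac{1}{n}\|\bm{W}\widehat{\bm{\delta}}\|_2^2 + \lambda\|\widehat{\bm{\beta}}_n\|_1 \le \tfrac{2}{n}\,\overline{\bm{\varepsilon}}^\top\bm{W}\widehat{\bm{\delta}} + \lambda\|\bm{\beta}_0\|_1.
\]
The cross term on the right is at most $S$ in absolute value, while $\lambda(\|\widehat{\bm{\beta}}_n\|_1 - \|\bm{\beta}_0\|_1)$ is handled by the usual split over the active set: writing $\|\widehat{\bm{\beta}}_n\|_1 = \|\bm{\beta}_{0,\mathcal{S}_0} + \widehat{\bm{\delta}}_{\mathcal{S}_0}\|_1 + \|\widehat{\bm{\delta}}_{\mathcal{S}_0^c}\|_1$ and applying the reverse triangle inequality produces a cone-type relation between $\|\widehat{\bm{\delta}}_{\mathcal{S}_0}\|_1$ and $\|\widehat{\bm{\delta}}_{\mathcal{S}_0^c}\|_1$.

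Second, I would establish the probabilistic event on which the compatibility step runs. For fixed design, each coordinate $\bm{W}_j^\top\overline{\bm{\varepsilon}}/n$ is a linear form in the centered sub-Gaussian vector $\overline{\bm{\varepsilon}}$, with sub-Gaussian parameter controlled by $\sigma_{\bar{\varepsilon}}$ and by the column norm $\|\bm{W}_j\|_2/\sqrt{n} \le M$ (from $\max_j \widehat{\Sigma}_{j,j} \le M^2$). A union bound over $j\in[p]$ together with the sub-Gaussian maximal inequality gives $\Prob(2\|\bm{W}^\top\overline{\bm{\varepsilon}}\|_\infty/n \le 2M\sigma_{\bar{\varepsilon}}\sqrt{2(t^2+\log p)/n}) \ge 1 - 2\exp(-t^2)$, which is exactly the level that the stated threshold $\lambda \ge 3M\sigma_{\bar{\varepsilon}}\sqrt{2(t^2+\log p)/n}$ dominates; the factor $3$ here is what forces the cone condition $\|\widehat{\bm{\delta}}_{\mathcal{S}_0^c}\|_1 \le 3\|\widehat{\bm{\delta}}_{\mathcal{S}_0}\|_1$ with the same constant $3$ that appears in the compatibility definition.

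Finally, on this event the compatibility condition applies, yielding $\|\widehat{\bm{\delta}}_{\mathcal{S}_0}\|_1^2 \le s_0\,\widehat{\bm{\delta}}^\top\widehat{\Sigma}\widehat{\bm{\delta}}/\phi_0^2 = s_0\|\bm{W}\widehat{\bm{\delta}}\|_2^2/(n\phi_0^2)$. Feeding this back into the basic inequality and solving the resulting quadratic in $\|\bm{W}\widehat{\bm{\delta}}\|_2/\sqrt{n}$ (via $2ab \le a^2 + b^2$) produces the prediction bound $C_2\lambda^2 s_0/\phi_0^2 + S$, and, after dividing through, the $\ell_1$ bound $C_1\lambda s_0/\phi_0^2 + S/\lambda$, with $S$ retained. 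I expect the main obstacle to be the second step: verifying that the DR residual $\overline{\bm{\varepsilon}}$ is genuinely sub-Gaussian with variance proxy comparable to $\sigma_{\bar{\varepsilon}}^2$, since $\overline{\varepsilon}_i$ mixes the inverse-propensity reweighting—bounded in denominator by $\varphi$ through Assumption~\ref{asmp:coherent}—with the outcome noise $\varepsilon_i(1)-\varepsilon_i(0)$; tracking the numerical constants so that the factor $3$ in the stated $\lambda$ threshold genuinely suffices is the delicate bookkeeping.
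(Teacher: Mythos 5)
Your overall route is the same as the paper's: a basic inequality from optimality of the Lasso solution, a sub-Gaussian maximal inequality over the $p$ columns to put $2\|\bm{W}^\top\overline{\bm{\varepsilon}}\|_\infty/n$ below $\lambda/2$ (hence the cone condition with constant $3$), and the compatibility condition plus $2ab\le a^2+b^2$ to close the quadratic. The constants and the probability level match up to the usual bookkeeping, and your closing worry about sub-Gaussianity of $\overline{\bm{\varepsilon}}$ is resolved by Assumptions~\ref{asm:bounded_output} and~\ref{asmp:coherent}, which make the inverse-propensity part $\overline{u}_i$ bounded.

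The gap is in what you take the residual to be, and consequently in what $S$ is doing. You write the basic inequality with the oracle noise $\overline{\bm{\varepsilon}}$, i.e., for the regression of $\overline{\mathbb{Q}}=\bm{W}\bm{\beta}_0+\overline{\bm{\varepsilon}}$, and you identify $S$ with the oracle cross term $2|\overline{\bm{\varepsilon}}^\top\bm{W}\widehat{\bm{\delta}}|/n$. But the estimator in the lemma is the DR-CATELasso estimator built from pseudo-outcomes that use \emph{estimated} nuisance parameters, so the residual entering the basic inequality is $\widehat{\bm{\varepsilon}}$, not $\overline{\bm{\varepsilon}}$. The paper's proof decomposes $2\widehat{\bm{\varepsilon}}^\top\bm{W}\widehat{\bm{\delta}}/n \le 2\overline{\bm{\varepsilon}}^\top\bm{W}\widehat{\bm{\delta}}/n + S$ with $S$ the \emph{discrepancy} $2\|(\widehat{\bm{\varepsilon}}-\overline{\bm{\varepsilon}})^\top\bm{W}\widehat{\bm{\delta}}\|_\infty/n$; the oracle piece is then absorbed into $\lambda\|\widehat{\bm{\delta}}\|_1/2$ on the concentration event, while $S$ is the nuisance-estimation error that is deliberately carried forward and only shown to vanish later (Lemma~\ref{lem:nongaussian}) under consistency of $\widehat{\mu}$ or $\widehat{\pi}$. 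Under your reading, $S$ is redundant --- the quantity you call $S$ is already dominated by $\lambda_0\|\widehat{\bm{\delta}}\|_1$ on your event --- and, more importantly, the term that actually needs $S$, namely the error from replacing $(\mu_0,\pi_0)$ by $(\widehat{\mu},\widehat{\pi})$ in the pseudo-outcome, is never bounded, so your argument establishes the statement only for the Oracle-DR-CATELasso estimator of \eqref{eq:Oracle-DRCATELasso}. In fairness, the lemma's displayed definition of $S$ agrees with your reading and the paper silently redefines $S$ inside its own proof; but the proof the rest of the paper relies on is the one in which $S$ absorbs $\widehat{\bm{\varepsilon}}-\overline{\bm{\varepsilon}}$, and that decomposition is the one missing step in your proposal.
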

The proof is provided in the subsequent section. 

Based on Lemma~\ref{thm:oracle}, we show the following result about the consistency of the DR-CATELasso estimator $\widehat{\bm{\beta}}_n$, which includes Theorem~\ref{sec:theoretical_fix} as a special case with a specific choice of $\widehat{\mu}(d)$ and $\widehat{\pi}$.
\begin{lemma}
\label{lem:nongaussian}
Suppose that Assumptions~\ref{asmp:unconfounded}--\ref{asmp:coherent} hold. 
Assume a linear model in \eqref{eq:linear} with Assumptions~\ref{asm:bounded_output}--\ref{asmp:coherent}, \ref{asm:eror}, and \ref{asm:a1_vandegeer}. If $\bm{W}$ is sub-Gaussian and ${\Sigma}$ has a strictly positive smallest eigenvalue $\Lambda^2_{\min}$, satisfying $\max \Sigma_{j,j} = O(1)$  and $1/\Lambda^2_{\min} = O(1)$ as $n\to \infty$. Consider the DR-CATELasso estimator $\widehat{\bm{\beta}}$ and $\lambda \asymp \sqrt{\log(p) / n}$. If $s_0 = o(\sqrt{\log(p)/n})$ holds and either $\left\|\widehat{\mu}(d) - \mu_0(d)\right\|_2 = o_P(1)$ or $ \left\|\widehat{\pi} - \pi_0\right\|_2 = o_P(1)$ holds as $n\to\infty$, then we have the following as $n \to \infty$:
        \begin{align*}
            &\left\| \widehat{\bm{\beta}}_n- \bm{\beta}_0 \right\|_1 = o_P\left(s_0\sqrt{\log(p) / n}\right),\mbox{~~and}\\
            &\left\|\bm{W}\left(\widehat{\bm{\beta}}_n- \bm{\beta}_0\right) \right\|^2_2 / n = o_P\left(s_0\log(p) / n\right).
        \end{align*}
\end{lemma}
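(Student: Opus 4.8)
The strategy is to derive the two asymptotic bounds from the non-asymptotic oracle inequality of Lemma~\ref{thm:oracle} by (i) showing its hypotheses hold on an event of probability tending to one under the random design, (ii) calibrating the regularization parameter $\lambda$ and the deviation level $t$, and (iii) proving that the self-bounding residual $S$ is of strictly smaller order than the leading terms, with the doubly robust structure of the score doing the essential work. First I would condition on the design $(\bm{X},\mathbb{D})$ and apply Lemma~\ref{thm:oracle}. Taking $\lambda\asymp\sqrt{\log(p)/n}$ and $t\asymp\sqrt{\log(p)}$ makes the requirement $\lambda\ge 3M\sigma_{\bar\varepsilon}\sqrt{2(t^2+\log(p))/n}$ hold while the exceptional probability $2\exp(-t^2)$ vanishes.

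The only genuinely design-dependent hypothesis is the compatibility condition for $\widehat{\Sigma}$ (Assumption~\ref{asm:a1_vandegeer}). Since $\Sigma$ has smallest eigenvalue $\Lambda_{\min}^2$ bounded away from zero with $1/\Lambda_{\min}^2=O(1)$ and $\bm{W}$ has sub-Gaussian rows, I would transfer the compatibility condition from the population Gram matrix $\Sigma$ to $\widehat{\Sigma}$ via a sup-norm concentration bound for $\widehat{\Sigma}-\Sigma$, following the standard argument in \citet{Buhlmann2011} and \citet{vandeGeer2014}; this yields, on an event of probability $\to 1$, a compatibility constant $\phi_0^2$ bounded below and the diagonal control $\max_j\widehat{\Sigma}_{j,j}\le M^2$. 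On this event the two leading terms of the oracle inequality already have the claimed orders, namely $C_1\lambda s_0/\phi_0^2\asymp s_0\sqrt{\log(p)/n}$ and $C_2\lambda^2 s_0/\phi_0^2\asymp s_0\log(p)/n$, so it remains to establish $S=o_P(\lambda^2 s_0)$, equivalently $S/\lambda=o_P(s_0\sqrt{\log(p)/n})$.

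Bounding $S$ by Hölder's inequality by $2\|\bm{W}^\top\overline{\bm{\varepsilon}}/n\|_\infty\,\|\widehat{\bm{\beta}}_n-\bm{\beta}_0\|_1$, the empirical-process factor $\|\bm{W}^\top\overline{\bm{\varepsilon}}/n\|_\infty$ is $O_P(\sqrt{\log(p)/n})$ by a maximal inequality for the $p$ coordinate averages $\tfrac{1}{n}\sum_i W_{ij}\overline{\varepsilon}_i$, each of which is mean zero (since $\mathbb{E}_{P_0}[\overline{\varepsilon}_i\mid X_i]=0$) with sub-exponential tails arising from the product of a sub-Gaussian row of $\bm{W}$ and the sub-Gaussian $\overline{\varepsilon}_i$ of bounded conditional variance $\sigma^2_{\bar\varepsilon}(X_i)$ (Assumptions~\ref{asm:eror} and \ref{asmp:coherent}). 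Because $\lambda$ is chosen to dominate this factor, $S/\lambda$ is a vanishing fraction of $\|\widehat{\bm{\beta}}_n-\bm{\beta}_0\|_1$ and can be absorbed into the left-hand side, leaving the $\ell_1$ bound; substituting back then yields the prediction-error bound.

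The main obstacle is the contribution of the \emph{nuisance estimation error}, i.e.\ the discrepancy between the feasible error $\widehat{\bm{\varepsilon}}$ (built from $\widehat{\mu},\widehat{\pi}$) and the oracle error $\overline{\bm{\varepsilon}}$, since here I am allowed only the doubly robust ``either/or'' consistency hypothesis rather than the product-rate condition of Assumption~\ref{asm:conv_rate}. The key is the Neyman-orthogonal form of $Q$: writing $\widehat{u}_i-\overline{u}_i$ and taking its conditional expectation given $X_i$ under cross-fitting (so that $\widehat{\mu},\widehat{\pi}$ are independent of the evaluation fold), the systematic part factorizes into the product $\big(\pi_0(d\mid X_i)/\widehat{\pi}(d\mid X_i)-1\big)\big(\widehat{\mu}(d)(X_i)-\mu_0(d)(X_i)\big)$. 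The overlap Assumption~\ref{asmp:coherent} bounds $1/\widehat{\pi}$, so this product is $o_P(1)$ as soon as \emph{either} $\|\widehat{\mu}(d)-\mu_0(d)\|_2$ or $\|\widehat{\pi}-\pi_0\|_2$ is $o_P(1)$; this is precisely where double robustness, rather than a convergence rate, suffices for mere consistency. The remaining mean-zero fluctuation part of $\widehat{u}_i-\overline{u}_i$ is controlled by the same maximal inequality as the oracle noise. Combining these, the nuisance contribution is of smaller order than $s_0\sqrt{\log(p)/n}$ and does not disturb the leading behavior, so the two displayed bounds follow.
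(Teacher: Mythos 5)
Your proposal is correct and follows essentially the same route as the paper's proof: apply the fixed-design oracle inequality of Lemma~\ref{thm:oracle}, transfer the compatibility condition from $\Sigma$ to $\widehat{\Sigma}$ using the sub-Gaussian design (the paper cites the restricted-eigenvalue transfer of Zhou/Raskutti et al.\ where you invoke sup-norm concentration, which is the same device), calibrate $\lambda\asymp\sqrt{\log(p)/n}$ against the maximal inequality for $\bm{W}^\top\overline{\bm{\varepsilon}}/n$, and kill the nuisance-discrepancy term $S$ using the doubly robust structure under the either/or consistency hypothesis. If anything you are more explicit than the paper at the one delicate step — the paper merely writes out $\widehat{\varepsilon}_i-\overline{\varepsilon}_i$ and asserts it is $o_P(1)$, whereas you exhibit the product factorization $\bigl(\pi_0/\widehat{\pi}-1\bigr)\bigl(\widehat{\mu}-\mu_0\bigr)$ that justifies the claim; the paper instead closes the argument for $S$ with a self-bounding recursion $S=o_P\bigl(\{\lambda s_0+\sqrt{S}\}/\sqrt{n}\bigr)$ where you absorb $S/\lambda$ into the left-hand side, a cosmetic difference.
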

\begin{proof}[Proof of Lemma~\ref{lem:nongaussian}]
First, we show that $S / \lambda = o_P(1)$ as $n\to\infty$. We have
\begin{align*}
    S = 2\left\|\overline{\bm{\varepsilon}}^\top \bm{W}\left(\widehat{\bm{\beta}}_n- \bm{\beta}_0\right) \right\|_{\infty}/ n\leq 2\left\|\overline{\bm{\varepsilon}}\right\|_{\infty}\left\| \bm{W}\left(\widehat{\bm{\beta}}_n- \bm{\beta}_0\right) \right\|_1/ n.
\end{align*}

If either $\widehat{\mu}$ or $\widehat{\pi}$ converges to the true function in probability, we have
\begin{align*}
    &\widehat{\varepsilon}_i - \overline{\varepsilon}_i\\
    &=\frac{\mathbbm{1}\big[D_i = 1\big]\big(Y_i - \widehat{\mu}(1)(X_i)\big)}{\widehat{\pi}(1\mid X_i)} - \frac{\mathbbm{1}\big[D_i = 0\big]\big(Y_i - \widehat{\mu}(0)(X_i)\big)}{\widehat{\pi}(0\mid X_i)} - \Big\{Y_i(1) - \widehat{\mu}(1)(X_i)\Big\} + \Big\{Y_i(0) - \widehat{\mu}(0)(X_i)\Big\}\\
    &\ \ \ + \varepsilon_i(1) - \varepsilon_i(0)\\
    &\ \ \ - \frac{\mathbbm{1}\big[D_i = 1\big]\big(Y_i - \mu(1)(X_i)\big)}{\pi(1\mid X_i)} + \frac{\mathbbm{1}\big[D_i = 0\big]\big(Y_i - \mu(0)(X_i)\big)}{\pi(0\mid X_i)} + \Big\{Y_i(1) - \mu(1)(X_i)\Big\} - \Big\{Y_i(0) - \mu(0)(X_i)\Big\}\\
    &\ \ \ - \varepsilon_i(1) + \varepsilon_i(0)\\
    &= o_P(1),
\end{align*}
as $n\to\infty$. Here, we also have 
\[\left\| \bm{W}\left(\widehat{\bm{\beta}}_n - \bm{\beta}_0\right) \right\|_1/ (\lambda n) \leq \sqrt{n}\left\| \bm{W}\left(\widehat{\bm{\beta}}_n - \bm{\beta}_0\right) \right\|_2/ (\lambda n) \leq \left(\sqrt{C_2} s_0 / \phi_0 + \sqrt{S} / \lambda\right) / \sqrt{n}.\]
Therefore, we have 
\[S / \lambda = o_P\left(\left\{\sqrt{C_2} s_0 / \phi_0 + \sqrt{S} / \lambda\right\} / \sqrt{n}\right),\]
which implies $S  = o_P\left(\left\{\sqrt{C_2}\lambda s_0 / \phi_0 + \sqrt{S} \right\} / \sqrt{n}\right) = o_P(1)$

Second, for $C_1 \lambda s_0 / \phi^2_0$, Theorem~1.6 in \citet{zhou2009restricted} (a sub-Gaussian extension of Theorem~1 in \citet{Raskutti2010}) implies that there exists a constant $L=O(1)$ as $n\to \infty$, where $L$ depends on $\Lambda_{\min}$ and the compatibility condition holds with compatibility constant $\phi^2_0 > 1/L^2$ with probability approaches one. Combining this result and Theorem~\ref{thm:oracle} yields the statement.

\end{proof}

\subsection{Basic Inequality}
Consider a fixed design and show Lemma~\ref{thm:oracle}.
Following Lemma~6.1 in \citet{Buhlmann2011}, our analysis starts from the derivation of an oracle inequality, which plays an important role in subsequent analysis.
\begin{lemma}[Basic inequality. Corresponding to Lemma~6.1 in \citet{Buhlmann2011}] \label{lem:basic_ineq}
    \begin{align*}
        &\left\| \bm{W}\left(\widehat{\bm{\beta}}_n- \bm{\beta}_0\right) \right\|^2_2 / n  + \lambda  \left\| \widehat{\bm{\beta}}_n\right\|_1\leq 2\widehat{\bm{\varepsilon}} \bm{W}\left(\widehat{\bm{\beta}}_n- \bm{\beta}_0\right) / n  + \lambda \|\bm{\beta}_0\|_1.
    \end{align*}
\end{lemma}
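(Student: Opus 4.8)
The plan is to exploit the defining optimality of the (weighted) Lasso estimator $\widehat{\bm{\beta}}_n$ together with the linear model it is fit against; after that the inequality follows from a single algebraic expansion of a squared residual. No concentration bound, no compatibility condition, and no properties of the design matrix are needed at this stage. Those ingredients enter only in the subsequent passage from this basic inequality to the oracle inequality of Lemma~\ref{thm:oracle}.

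Concretely, I would first record that $\widehat{\bm{\beta}}_n$ minimizes the penalized least-squares objective $\left\|\widehat{\mathbb{Q}} - \bm{W}\bm{\beta}\right\|^2_2/n + \lambda\left\|\bm{\beta}\right\|_1$ over $\bm{\beta}\in\mathbb{R}^p$. Evaluating this objective at the feasible point $\bm{\beta} = \bm{\beta}_0$ and using optimality gives
\[
\left\|\widehat{\mathbb{Q}} - \bm{W}\widehat{\bm{\beta}}_n\right\|^2_2/n + \lambda\left\|\widehat{\bm{\beta}}_n\right\|_1 \leq \left\|\widehat{\mathbb{Q}} - \bm{W}\bm{\beta}_0\right\|^2_2/n + \lambda\left\|\bm{\beta}_0\right\|_1.
\]
Next I would substitute the linear model $\widehat{\mathbb{Q}} = \bm{W}\bm{\beta}_0 + \widehat{\bm{\varepsilon}}$ into both residuals, so that the left residual becomes $\widehat{\bm{\varepsilon}} - \bm{W}(\widehat{\bm{\beta}}_n - \bm{\beta}_0)$ while the right residual is exactly $\widehat{\bm{\varepsilon}}$. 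Expanding the left-hand squared norm yields
\[
\left\|\widehat{\bm{\varepsilon}}\right\|^2_2/n \;-\; 2\,\widehat{\bm{\varepsilon}}^\top\bm{W}(\widehat{\bm{\beta}}_n - \bm{\beta}_0)/n \;+\; \left\|\bm{W}(\widehat{\bm{\beta}}_n - \bm{\beta}_0)\right\|^2_2/n.
\]
The term $\left\|\widehat{\bm{\varepsilon}}\right\|^2_2/n$ appears on both sides of the inequality and cancels; moving the cross term to the right-hand side and leaving the quadratic form and the $\ell_1$ penalty on the left produces exactly the claimed bound.

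Since this is a purely algebraic identity driven by optimality, there is no genuine obstacle to overcome. The only point requiring care is notational: I must make sure the linear model $\widehat{\mathbb{Q}} = \bm{W}\bm{\beta}_0 + \widehat{\bm{\varepsilon}}$ invoked here is the one that actually matches the estimator in the lemma, namely the weighted design $\bm{W}$ paired with the correspondingly weighted noise $\widehat{\bm{\varepsilon}}$, rather than the unweighted $\bm{X}$, $\overline{\bm{\varepsilon}}$ versions that appear in neighboring derivations. I would also flag that the cross term on the right-hand side of the statement is to be read as the inner product $\widehat{\bm{\varepsilon}}^\top\bm{W}(\widehat{\bm{\beta}}_n - \bm{\beta}_0)$, consistent with the expansion above; this is the quantity that will later be controlled (after replacing $\widehat{\bm{\varepsilon}}$ by $\overline{\bm{\varepsilon}}$ plus a remainder) to bound the error term $S$ in Lemma~\ref{thm:oracle}.
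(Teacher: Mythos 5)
Your proposal is correct and follows essentially the same route as the paper's proof: invoke the optimality of $\widehat{\bm{\beta}}_n$ against the feasible point $\bm{\beta}_0$, substitute the linear model $\widehat{\mathbb{Q}} = \bm{W}\bm{\beta}_0 + \widehat{\bm{\varepsilon}}$, and expand the squared residual so that $\|\widehat{\bm{\varepsilon}}\|^2_2/n$ cancels. Your write-up is in fact more explicit than the paper's, which compresses the final expansion into ``a simple calculation,'' and your remark that the cross term should be read as the inner product $\widehat{\bm{\varepsilon}}^\top\bm{W}(\widehat{\bm{\beta}}_n-\bm{\beta}_0)$ is a worthwhile clarification of the statement's notation.
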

\begin{proof}
Recall that we estimate $\widehat{\bm{\beta}}$ as
\begin{align*}
\widehat{\bm{\beta}}_n&\coloneqq \argmin_{\bm{\beta}\in\mathbb{R}^p} \left\{
\left\|\widehat{\mathbb{Q}} - \bm{W}\bm{\beta} \right\|^2_2 + \lambda\left\|\bm{\beta}\right\|_1\right\}\\
&= \argmin_{\bm{\beta}\in\mathbb{R}^p} \left\{\left\|\widehat{\mathbb{Q}} - \bm{W}\bm{\beta} \right\|^2_2 + \lambda\left\|\bm{\beta}\right\|_1\right\}\\
&= \argmin_{\bm{\beta}\in\mathbb{R}^p} \left\{\left\| \bm{W}\bm{\beta}_0 + \widehat{\bm{\varepsilon}} - \bm{W}\bm{\beta} \right\|^2_2 + \lambda\left\|\bm{\beta}\right\|_1\right\}.
\end{align*}
Since $\widehat{\bm{\beta}}$ minimizes the objective function, we obtain the following inequality
\begin{align*}
&\left\| \bm{W}\bm{\beta}_0 + \widehat{\bm{\varepsilon}} - \bm{W}\widehat{\bm{\beta}}_n\right\|^2_2 + \lambda\left\|\widehat{\bm{\beta}}\right\|_1\leq \left\| \bm{W}\bm{\beta}_0 + \widehat{\bm{\varepsilon}} - \bm{W}\bm{\beta}_0 \right\|^2_2 + \lambda\left\|\bm{\beta}_0\right\|_1.
\end{align*}
Then, a simple calculation yields the statement.
\end{proof}

In the above statement, \citet{Buhlmann2011} refers to
\[2\widehat{\bm{\varepsilon}}^\top \bm{W}\left(\widehat{\bm{\beta}}_n- \bm{\beta}_0\right) / n,\]
as the \emph{empirical process} part. To develop an upper bound of $\| \bm{X}(\widehat{\bm{\beta}}_n- \bm{\beta}_0) \|^2_2$, we consider bounding the term. Let us decompose the empirical process term as 
\begin{align*}
    2\widehat{\bm{\varepsilon}}^\top \bm{W}\left(\widehat{\bm{\beta}}_n- \bm{\beta}_0\right) / n &= 2\overline{\bm{\varepsilon}}^\top \bm{W}\left(\widehat{\bm{\beta}}_n- \bm{\beta}_0\right) / n  - 2\overline{\bm{\varepsilon}}^\top \bm{W}\left(\widehat{\bm{\beta}}_n- \bm{\beta}_0\right) / n + 2\widehat{\bm{\varepsilon}}^\top \bm{W}\left(\widehat{\bm{\beta}}_n- \bm{\beta}_0\right) / n\\
    &\leq 2\overline{\bm{\varepsilon}}^\top \bm{W}\left(\widehat{\bm{\beta}}_n- \bm{\beta}_0\right) / n  + 2\left\|\overline{\bm{\varepsilon}}^\top \bm{W}\left(\widehat{\bm{\beta}}_n- \bm{\beta}_0\right) / n - \widehat{\bm{\varepsilon}}^\top \bm{W}\left(\widehat{\bm{\beta}}_n- \bm{\beta}_0\right) / n\right\|_{\infty}\\
    &\leq 2\overline{\bm{\varepsilon}}^\top \bm{W}\left(\widehat{\bm{\beta}}_n- \bm{\beta}_0\right) / n  + S,
\end{align*}
where recall that 
\begin{align*}
    S = 2\left\|\overline{\bm{\varepsilon}}^\top \bm{W}\left(\widehat{\bm{\beta}}_n- \bm{\beta}_0\right) / n - \widehat{\bm{\varepsilon}}^\top \bm{W}\left(\widehat{\bm{\beta}}_n- \bm{\beta}_0\right) / n\right\|_\infty.
\end{align*}
In the following part, we consider bounding $2\overline{\bm{\varepsilon}}^\top \bm{W}\left(\widehat{\bm{\beta}}_n- \bm{\beta}_0\right) / n$.

\subsection{Concentration Inequality Regarding the Oracle Error Term}
We consider bounding $(\max_{1\leq j \leq p} 2 |\overline{\bm{\varepsilon}}^\top \mathbb{W}_j|)\| \widehat{\bm{\beta}}_n- \bm{\beta}_0 \|_1$ for each $j\in[p]$.
We define the following event:
\begin{align*}
    \mathcal{F}_{\lambda_0} \coloneqq \left\{ \max_{1\leq j \leq p} 2 \left|\widehat{\bm{\varepsilon}}^\top \mathbb{W}_j\right| / n \leq \lambda_0\right\},
\end{align*}
for arbitrary $2\lambda_0 \leq \lambda$. Given appropriate $\lambda_0$ and sub-Gaussian errors $\overline{\bm{\varepsilon}} = (\widetilde{{\varepsilon}}_i)_{i=1}^n$, we prove that the event $\mathcal{F}_{\lambda_0}$ occurs with high probability. Let us define a Gram matrix as $\widehat{\Sigma} \coloneqq \bm{W}^\top \bm{W} / n$, and we denote its diagonal experiment by $\widehat{\sigma}^2_j$ for each $j\in[p]$. Then, we establish the following lemma.
\begin{lemma}[Corresponding to Lemma 6.2. in \citet{Buhlmann2011}]
\label{lem:concent}
Suppose that $\overline{\varepsilon}_i$ follows a sub-Gaussian distribution with a variance $\sigma^2_{\bar{\varepsilon}}$.
    Also suppose that $\widehat{\sigma}^2_j \leq M$ for all $j \in [p]$ and some $0 < M < \infty$. Then, for any $t > 0$ and for 
    \begin{align*}
        \lambda_0 \coloneqq 2M\sigma_{\bar{\varepsilon}}  \sqrt{\frac{t^2 + 2 \log(p)}{n}},
    \end{align*}
    we have
    \begin{align*}
        \mathbb{P}\left(\mathcal{F}_{\lambda_0}\right) \geq 1 - 2\exp\left( -t^2 / 2 \right).
    \end{align*}
\end{lemma}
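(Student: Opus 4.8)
The plan is to recognize the final statement (Lemma~\ref{lem:concent}) as a standard maximal sub-Gaussian concentration bound, proved by a union bound over the $p$ coordinates. Since the analysis is under fixed design, each column $\mathbb{W}_j$ is a deterministic vector, the only randomness lives in the oracle error vector $\overline{\bm{\varepsilon}} = (\overline{\varepsilon}_1,\dots,\overline{\varepsilon}_n)^\top$, and $\widehat{\sigma}^2_j = \|\mathbb{W}_j\|_2^2/n$ is simply the $j$-th diagonal entry of $\widehat{\Sigma} = \bm{W}^\top\bm{W}/n$. The first thing I would confirm is that the entries $\overline{\varepsilon}_i$ are independent, centered, and sub-Gaussian with proxy $\sigma^2_{\bar{\varepsilon}}$: independence follows from the i.i.d.\ sampling of $(Y_i,D_i,X_i)$, while the mean-zero property comes from the already-displayed identity $\mathbb{E}_{P_0}[\overline{\varepsilon}_i\mid X_i]=0$. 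In fact $\overline{\varepsilon}_i$ collapses to $\mathbbm{1}[D_i=1]\varepsilon_i(1)/\pi_0(1\mid X_i) - \mathbbm{1}[D_i=0]\varepsilon_i(0)/\pi_0(0\mid X_i)$, whose conditional mean given $(X_i,D_i)$ vanishes by unconfoundedness and $\mathbb{E}[\varepsilon_i(d)\mid X_i]=0$; this is what makes centering valid even under the fixed design, where $\mathbb{D}$ is conditioned on, and the overlap Assumption~\ref{asmp:coherent} keeps the $1/\pi_0$ factors bounded so the proxy stays finite.

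Next I would fix a single coordinate $j\in[p]$ and control $\overline{\bm{\varepsilon}}^\top\mathbb{W}_j = \sum_{i=1}^n W_{i,j}\,\overline{\varepsilon}_i$. As a fixed linear combination of independent centered sub-Gaussian variables, tensorization of the moment generating function shows it is itself sub-Gaussian with variance proxy $\sigma^2_{\bar{\varepsilon}}\|\mathbb{W}_j\|_2^2 = \sigma^2_{\bar{\varepsilon}}\,n\,\widehat{\sigma}^2_j \le \sigma^2_{\bar{\varepsilon}}\,n\,M^2$, using the diagonal bound from Assumption~\ref{asm:a1_vandegeer}. The sub-Gaussian tail inequality then gives, for any $a>0$,
\begin{align*}
\mathbb{P}\!\left(\left|\overline{\bm{\varepsilon}}^\top\mathbb{W}_j\right| > a\right) \le 2\exp\!\left(-\frac{a^2}{2\sigma^2_{\bar{\varepsilon}}\,n\,M^2}\right),
\end{align*}
and taking $a = n\lambda_0/2$ converts this into a bound on the single-coordinate event $\{2|\overline{\bm{\varepsilon}}^\top\mathbb{W}_j|/n > \lambda_0\}$.

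Finally I would apply a union bound over $j\in[p]$ and substitute $\lambda_0 = 2M\sigma_{\bar{\varepsilon}}\sqrt{(t^2+2\log p)/n}$. The exponent simplifies to
\begin{align*}
\frac{(n\lambda_0/2)^2}{2\sigma^2_{\bar{\varepsilon}}\,n\,M^2} = \frac{n\lambda_0^2}{8\sigma^2_{\bar{\varepsilon}}\,M^2} = \frac{t^2 + 2\log p}{2},
\end{align*}
so each coordinate contributes at most $2\exp\!\left(-(t^2+2\log p)/2\right) = 2p^{-1}\exp(-t^2/2)$, and summing over the $p$ coordinates yields $\mathbb{P}(\mathcal{F}_{\lambda_0}^c)\le 2\exp(-t^2/2)$, i.e.\ the claim. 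The cancellation of the factor $p$ against $\exp(-\log p)$ is precisely what dictates the $2\log p$ term inside $\lambda_0$.

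I do not expect a substantive obstacle here, as the argument is routine; the only care required is bookkeeping. First, one must track both the factor of $2$ in the definition of $\mathcal{F}_{\lambda_0}$ and the exact role of $M$: the clean constant emerges only when $M$ bounds the \emph{standard deviation} $\widehat{\sigma}_j$ (equivalently $\widehat{\sigma}^2_j\le M^2$), so I would read the hypothesis ``$\widehat{\sigma}^2_j\le M$'' in that spirit and treat the discrepancy as a harmless typo, since the literal reading would leave an extraneous factor of $M$ in the exponent. Second, although the event in the statement is written with $\widehat{\bm{\varepsilon}}$, the quantity entering the basic-inequality decomposition is the oracle error $\overline{\bm{\varepsilon}}$, which is where sub-Gaussianity is available; the discrepancy between $\widehat{\bm{\varepsilon}}$ and $\overline{\bm{\varepsilon}}$ is absorbed separately into the remainder term $S$, so this lemma is correctly understood as a statement about $\overline{\bm{\varepsilon}}$.
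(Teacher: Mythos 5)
Your proposal is correct and follows essentially the same route as the paper: the paper likewise treats $\overline{\bm{\varepsilon}}^\top\mathbb{W}_j$ as a fixed linear combination of independent centered sub-Gaussians (normalized by $\sqrt{n\sigma^2_{\bar{\varepsilon}}\widehat{\sigma}^2_j}$ rather than carrying the proxy explicitly), applies the sub-Gaussian tail bound at level $\sqrt{t^2+2\log p}$, and union-bounds over $j\in[p]$ so that the factor $p$ cancels against $\exp(-\log p)$. Your bookkeeping remarks (reading the hypothesis as $\widehat{\sigma}^2_j\le M^2$ and the event as concerning $\overline{\bm{\varepsilon}}$ rather than $\widehat{\bm{\varepsilon}}$) correctly identify and repair loose points in the paper's own statement and write-up.
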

\begin{proof}
    We first rewrite the term as
    \begin{align*}
        &\overline{\bm{\varepsilon}}^\top\mathbb{W}_j / \sqrt{n \sigma_{\bar{\varepsilon}}^2\widehat{\sigma}^2_j} = \sum^n_{i=1} \overline{\varepsilon}_iW_{i, j} / \sqrt{n \sigma_{\bar{\varepsilon}}^2\widehat{\sigma}^2_j}.
    \end{align*}
    Because $\overline{\varepsilon}_i$ follows a sub-Gaussian distribution with the fixed-design where recall that $D_i$ and $X_{i, j}$ are non-random,     
    the rewritten term also follows a sub-Gaussian distribution and achieve
    \begin{align*}
        &\mathbb{P}\left(\max_{1\leq j \leq p} \left|\overline{\bm{\varepsilon}}^\top \mathbb{W}_j\right| / \sqrt{n \sigma_{\bar{\varepsilon}}^2\widehat{\sigma}^2_j}\leq  \sqrt{t^2 + 2\log(p)}\right)\\
        &\leq \sum_{j\in[p]} \mathbb{P}\left(\left|\overline{\bm{\varepsilon}}^\top \mathbb{W}_j\right| / \sqrt{n \sigma_{\bar{\varepsilon}}^2\widehat{\sigma}^2_j}\leq  \sqrt{t^2 + 2\log(p)}\right)\\
        &\leq 2 p \exp\left( - \frac{t^2 + 2\log(p)}{2} \right) = 2 \exp\left(- \frac{t^2}{2}\right),
    \end{align*}
    where we used the definition of a sub-Gaussian random variable to derive the last inequality. 
\end{proof}

Here, from Lemma~\ref{lem:concent}, it holds that for any $t > 0$ and for 
    \begin{align*}
        \lambda \coloneqq M\sigma_{\bar{\varepsilon}} \sqrt{\frac{t^2 + 2 \log(p)}{n}}.
    \end{align*}
    with probability $1 - 2\exp(-t^2/2)$. Therefore,
\[\left\| \bm{W}\left(\widehat{\bm{\beta}}_n- \bm{\beta}_0\right) \right\|^2_2 / n \leq \left\| \bm{W}\left(\widehat{\bm{\beta}}_n- \bm{\beta}_0\right) \right\|^2_2 / n  + 2 \lambda  \left\| \widehat{\bm{\beta}}_n\right\|_1 \leq \lambda\left\|  \widehat{\bm{\beta}}_n- \bm{\beta}_0 \right\|_1 / 2 + \lambda \|\bm{\beta}_0\|_1,\]
with probability $1 - 2\exp(-t^2/2)$, which implies
\[\left\| \bm{W}\left(\widehat{\bm{\beta}}_n- \bm{\beta}_0\right) \right\|^2_2 / n \leq 3\lambda \|\bm{\beta}_0\|_1.\]

\subsection{Proof of Lemma~\ref{thm:oracle}}
\label{appdx:oracle_inequality}
In the proof of Lemma~\ref{thm:oracle}, the following lemma plays an important role. 
\begin{lemma}
\label{lem:oracle}
    Under the same conditions in Lemma~\ref{thm:oracle}, we have
    \begin{align*}
        \left\|\widehat{\bm{\beta}}_n- \bm{\beta}_0 \right\|^2_2 / n + \lambda\left\|\widehat{\bm{\beta}}_n- \bm{\beta}_0\right\|_1 \leq 4 \lambda^2 s_0 / \phi^2_0.
    \end{align*}
\end{lemma}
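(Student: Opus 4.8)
The plan is to run the classical Lasso oracle-inequality argument of \citet{Buhlmann2011} (their Theorem~6.1), adapted to the weighted design $\bm{W}$ and launched from the basic inequality already established in Lemma~\ref{lem:basic_ineq}. Writing $\bm{\delta} \coloneqq \widehat{\bm{\beta}}_n - \bm{\beta}_0$, the basic inequality together with the empirical-process decomposition reduces the task to controlling the \emph{oracle} term $2\overline{\bm{\varepsilon}}^\top\bm{W}\bm{\delta}/n$, the remainder being collected into $S$. First I would invoke the concentration event $\mathcal{F}_{\lambda_0}$ from Lemma~\ref{lem:concent}, on which $2\overline{\bm{\varepsilon}}^\top\bm{W}\bm{\delta}/n \le \lambda_0\|\bm{\delta}\|_1$, and then impose $2\lambda_0 \le \lambda$. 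This event holds with the high probability quantified in Lemma~\ref{lem:concent}, which supplies the high-probability qualifier of the statement. I also read the left-hand term $\|\widehat{\bm{\beta}}_n - \bm{\beta}_0\|_2^2/n$ in the lemma as the prediction (in-sample) norm $\|\bm{W}(\widehat{\bm{\beta}}_n - \bm{\beta}_0)\|_2^2/n$, which is what the algebra below controls.

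On $\mathcal{F}_{\lambda_0}$, the next step is the standard support-splitting. Since $\mathrm{supp}(\bm{\beta}_0) \subseteq \mathcal{S}_0$, the triangle inequality gives $\|\widehat{\bm{\beta}}_n\|_1 \ge \|\bm{\beta}_0\|_1 - \|\bm{\delta}_{\mathcal{S}_0}\|_1 + \|\bm{\delta}_{\mathcal{S}^c_0}\|_1$; substituting this into the basic inequality and cancelling $\lambda\|\bm{\beta}_0\|_1$ yields
\[
\left\|\bm{W}\bm{\delta}\right\|_2^2 / n + (\lambda - \lambda_0)\|\bm{\delta}_{\mathcal{S}^c_0}\|_1 \le (\lambda + \lambda_0)\|\bm{\delta}_{\mathcal{S}_0}\|_1 .
\]
Using $\lambda_0 \le \lambda/2$ this becomes $\|\bm{W}\bm{\delta}\|_2^2/n + (\lambda/2)\|\bm{\delta}_{\mathcal{S}^c_0}\|_1 \le (3\lambda/2)\|\bm{\delta}_{\mathcal{S}_0}\|_1$, from which dropping the nonnegative quadratic term gives the cone condition $\|\bm{\delta}_{\mathcal{S}^c_0}\|_1 \le 3\|\bm{\delta}_{\mathcal{S}_0}\|_1$. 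This is exactly the regime in which the compatibility condition of Assumption~\ref{asm:a1_vandegeer} applies.

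To close, I would add $(\lambda/2)\|\bm{\delta}_{\mathcal{S}_0}\|_1$ to both sides to obtain $\|\bm{W}\bm{\delta}\|_2^2/n + (\lambda/2)\|\bm{\delta}\|_1 \le 2\lambda\|\bm{\delta}_{\mathcal{S}_0}\|_1$, then apply the compatibility condition in the form $\|\bm{\delta}_{\mathcal{S}_0}\|_1 \le \sqrt{s_0}\,\|\bm{W}\bm{\delta}\|_2/(\sqrt{n}\,\phi_0)$, using $\bm{\delta}^\top\widehat{\Sigma}\bm{\delta} = \|\bm{W}\bm{\delta}\|_2^2/n$. A Young-type inequality $ab \le a^2/2 + b^2/2$ with $a = \|\bm{W}\bm{\delta}\|_2/\sqrt{n}$ and $b = 2\lambda\sqrt{s_0}/\phi_0$ then absorbs one half of the quadratic term into the left-hand side and leaves the clean bound $\|\bm{W}\bm{\delta}\|_2^2/n + \lambda\|\bm{\delta}\|_1 \le 4\lambda^2 s_0/\phi_0^2$. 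The main obstacle is not any single inequality but the bookkeeping at two points: tracking the additive $S$ term through the splitting so that it lands precisely where Lemma~\ref{thm:oracle} needs it (the statement here reflecting the oracle-error contribution), and calibrating the Young split so the quadratic term is \emph{halved} rather than fully consumed, since that is exactly what yields the sharp constant $4$ instead of a looser one.
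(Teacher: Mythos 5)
Your proposal is correct and follows essentially the same route as the paper's proof: the basic inequality of Lemma~\ref{lem:basic_ineq} on the event $\mathcal{F}_{\lambda_0}$ with $\lambda \geq 2\lambda_0$, the support splitting yielding the cone condition $\|\bm{\delta}_{\mathcal{S}^c_0}\|_1 \leq 3\|\bm{\delta}_{\mathcal{S}_0}\|_1$, the compatibility condition, and a Young-type split calibrated to absorb half the quadratic term (the paper writes this as $4uv \leq u^2 + 4v^2$, which is your split up to a factor of two). You also correctly flag the two points where the paper itself is loose --- the reading of $\|\widehat{\bm{\beta}}_n - \bm{\beta}_0\|^2_2/n$ as the prediction norm $\|\bm{W}(\widehat{\bm{\beta}}_n - \bm{\beta}_0)\|^2_2/n$, and the bookkeeping of the additive $S$ term, which the paper carries through its intermediate displays but silently drops from the statement of Lemma~\ref{lem:oracle}.
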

The statements in Lemma~\ref{thm:oracle} directly yield the statement.
\begin{proof}
Conditional on $\mathcal{F}_{\lambda_0}$, for $\lambda \geq 2 \lambda_0$, the basic inequality yields
\begin{align}
\label{eq:target1111}
    &2\left\| \widehat{\bm{\beta}}_n- \bm{\beta}_0 \right\|^2_2 / n + 2\lambda  \left\| \widehat{\bm{\beta}}_n\right\|_1 \leq \lambda\left\| \widehat{\bm{\beta}}_n- \bm{\beta}_0 \right\|_1 + 2\lambda \|\bm{\beta}_0\|_1 + S. 
\end{align}
For $\left\| \widehat{\bm{\beta}}_n\right\|_1$ in the LHS, from the triangle inequality and $\|\widehat{\bm{\beta}}_n\|_1 = \|\widehat{\bm{\beta}}_{n, \mathcal{S}_0}\|_1 + \|\widehat{\bm{\beta}}_{n, \mathcal{S}_0^c}\|_1$, the following holds:
\begin{align*}
    \left\| \widehat{\bm{\beta}}_n\right\|_1 &\geq \left\| \bm{\beta}_{0, \mathcal{S}_0}\right\|_1 - \left\| \widehat{\bm{\beta}}_{n, \mathcal{S}_0} - \bm{\beta}_{0, \mathcal{S}_0}\right\|_1 + \left\| \widehat{\bm{\beta}}_{n, \mathcal{S}^c_0}\right\|_1.
\end{align*}

For $\left\| \widehat{\bm{\beta}}_n- \bm{\beta}_0 \right\|_1$ in the RHS of \eqref{eq:target1111}, the following holds:
\begin{align*}
    \left\| \widehat{\bm{\beta}}_n- \bm{\beta}_0\right\|_1 &= \left\| \widehat{\bm{\beta}}_{n, \mathcal{S}_0} - \bm{\beta}_{0, \mathcal{S}_0} \right\|_1 + \left\| \widehat{\bm{\beta}}_{n, \mathcal{S}_0^c} \right\|_1.
\end{align*}
Here, from the definition of $\mathcal{S}^c_0$, $\|\bm{\beta}_{0, \mathcal{S}^c_0}\|_1 = 0$ holds. Therefore, we have
 \begin{align*}
        &2\left\| \widehat{\bm{\beta}}_n- \bm{\beta}_0 \right\|^2_2 / n +2\lambda\left\| \bm{\beta}_{0, \mathcal{S}_0}\right\|_1 - 2\lambda\left\| \bm{\beta}_{0, \mathcal{S}_0} - \widehat{\bm{\beta}}_{n, \mathcal{S}_0} \right\|_1  + 2\lambda\left\| \widehat{\bm{\beta}}_{n, \mathcal{S}^c_0}\right\|_1\\
        & \leq \lambda\left\| \widehat{\bm{\beta}}_{n, \mathcal{S}_0} - \bm{\beta}_{0, \mathcal{S}_0} \right\|_1 +  \lambda\left\|\widehat{\bm{\beta}}_{n, \mathcal{S}^c_0} \right\|_1 + 2\lambda\left\| \bm{\beta}_{0, \mathcal{S}_0}\right\|_1 + 2\lambda\left\| \bm{\beta}_{0, \mathcal{S}^c_0}\right\|_1 + S\\
        & = \lambda\left\| \widehat{\bm{\beta}}_{n, \mathcal{S}_0} - \bm{\beta}_{0, \mathcal{S}_0} \right\|_1 +  \lambda\left\|\widehat{\bm{\beta}}_{n, \mathcal{S}^c_0} \right\|_1 + 2\lambda\left\| \bm{\beta}_{0, \mathcal{S}_0}\right\|_1 + S.
\end{align*}
Then, the following holds:
  \begin{align*}
        &2\left\| \bm{W}\left(\widehat{\bm{\beta}}_n- \bm{\beta}_0\right) \right\|^2_2 / n + \lambda  \left\| \widehat{\bm{\beta}}_{n, \mathcal{S}^c_0}\right\|_1 \leq 3\lambda\left\| \widehat{\bm{\beta}}_{n, \mathcal{S}_0} - \bm{\beta}_{0, \mathcal{S}_0} \right\|_1 + S.
 \end{align*}

Next, we consider bounding $2\left\|\bm{W}\left(\widehat{\bm{\beta}}_n- \bm{\beta}_0\right) \right\|^2_2 / n + \lambda\left\|\widehat{\bm{\beta}}_n- \bm{\beta}_0\right\|_1$ as
    \begin{align*}
        &2\left\|\bm{W}\left(\widehat{\bm{\beta}}_n- \bm{\beta}_0\right) \right\|^2_2 / n + \lambda\left\|\widehat{\bm{\beta}}_n- \bm{\beta}_0\right\|_1\\
        &\leq 2\left\|\bm{W}\left(\widehat{\bm{\beta}}_n- \bm{\beta}_0\right) \right\|^2_2 / n + \lambda\left\|\widehat{\bm{\beta}}_{n, \mathcal{S}^c_0}\right\|_1 + \lambda\left\|\widehat{\bm{\beta}}_{n, \mathcal{S}_0} - \bm{\beta}_{0, \mathcal{S}_0}\right\|_1\\
        &\leq 4\lambda\left\|\widehat{\bm{\beta}}_{n, \mathcal{S}_0} - \bm{\beta}_{0, \mathcal{S}_0}\right\|_1.
    \end{align*}
Lastly, from the compatibility condition $\left\|\widehat{\bm{\beta}}_{n, \mathcal{S}_0} - \bm{\beta}_{0, \mathcal{S}_0}\right\|^2_1 \leq s_0 \left(\widehat{\bm{\beta}}_{n, \mathcal{S}_0} - \bm{\beta}_{0, \mathcal{S}_0}\right)^\top \widehat{\Sigma} \left(\widehat{\bm{\beta}}_{n, \mathcal{S}_0} - \bm{\beta}_{0, \mathcal{S}_0}\right) / \phi^2_0$, we have
\begin{align*}
    &4\lambda\left\|\widehat{\bm{\beta}}_{n, \mathcal{S}_0} - \bm{\beta}_{0, \mathcal{S}_0}\right\|_1 + S\\
    &\leq  4\lambda \sqrt{s_0}\left\|\bm{W}\left(\widehat{\bm{\beta}} - \bm{\beta}_0\right) \right\|_2  / \left( \sqrt{n} \phi_0 \right) + S\\
    &\leq  \left\|\bm{W}\left(\widehat{\bm{\beta}}_n- \bm{\beta}_0\right) \right\|^2_2  / n + 4\lambda^2 s_0 / \phi^2_0 + S,
\end{align*}
where in the last inequality, we used $4 uv \leq u^2 + 4v^2$. Therefore, we obtain 
\begin{align*}
    2\left\|\bm{W}\left(\widehat{\bm{\beta}}_n- \bm{\beta}_0\right) \right\|^2_2 / n + \lambda\left\|\widehat{\bm{\beta}}_n- \bm{\beta}_0\right\|_1 \leq \left\|\bm{W}\left(\widehat{\bm{\beta}}_n- \bm{\beta}_0\right) \right\|^2_2  / n + 4\lambda^2 s_0 / \phi^2_0 + S.
\end{align*}
Thus, from this inequality, the statement holds. 
\end{proof}

\section{Proof of Lemma~\ref{lem:conv_Q}}
\label{appdx:lem:conv_Q}
Recall that we defined 
\begin{align*}
    \overline{u}_i &\coloneqq \frac{\mathbbm{1}\big[D_i \coloneqq 1\big]\big(Y_i - \mu_0(1)(X_i)\big)}{\pi_0(1\mid X_i)} - \frac{\mathbbm{1}\big[D_i = 0\big]\big(Y_i - \mu_0(0)(X_i)\big)}{\pi_0(0\mid X_i)} - \Big\{Y_i(1) - \mu(1)(X_i)\Big\} + \Big\{Y_i(0) - \mu(0)(X_i)\Big\},\\
     \widehat{u}^{\mathrm{DML}}_i & \coloneqq \frac{\mathbbm{1}\big[D_i = 1\big]\big(Y_i - \widehat{\mu}_{\ell(i), n}(1)(X_i)\big)}{\widehat{\pi}_{\ell(i), n}(1\mid X_i)} - \frac{\mathbbm{1}\big[D_i = 0\big]\big(Y_i - \widehat{\mu}_{\ell(i), n}(0)(X_i)\big)}{\widehat{\pi}_{\ell(i), n}(0\mid X_i)}\\
     &\ \ \ \ \ \ \ \ \  - \Big\{Y_i(1) - \widehat{\mu}_{\ell(i), n}(1)(X_i)\Big\}+ \Big\{Y_i(0) - \widehat{\mu}_{\ell(i), n}(0)(X_i)\Big\}. 
\end{align*}
In the following lemma, we show that $\sqrt{n}\big(\overline{u}_i - \widehat{u}^{\mathrm{DML}}_i\big) = o_P(1)$ holds. This lemma directly yields Lemma~\ref{lem:conv_Q}. 
\begin{lemma}
\label{lem:nuisance}
The following holds:
\begin{align}
\label{eq:target2}
    \sqrt{n}\left(\overline{u}_i - \widehat{u}^{\mathrm{DML}}_i\right) = o_P(1).
\end{align}
\end{lemma}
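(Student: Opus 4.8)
The plan is to reduce the statement to the \emph{doubly robust} (Neyman-orthogonal) structure of the score $Q$ together with the independence supplied by cross-fitting. First I would split the difference by treatment arm, writing $\widehat{u}^{\mathrm{DML}}_i-\overline{u}_i=\Delta_i(1)-\Delta_i(0)$, where for $d\in\{1,0\}$ the term $\Delta_i(d)$ collects the arm-$d$ contributions, namely the difference between the augmented-IPW summand built from $(\widehat{\mu}_{\ell(i),n},\widehat{\pi}_{\ell(i),n})$ and the one built from $(\mu_0,\pi_0)$. The essential device is that, by construction, $\widehat{\mu}_{\ell(i),n}$ and $\widehat{\pi}_{\ell(i),n}$ depend only on the out-of-fold data $\overline{\mathcal{D}}^{\ell(i)}$, and since the sample is i.i.d.\ and $i\notin\overline{\mathcal{D}}^{\ell(i)}$, the fitted nuisances are fixed measurable functions independent of $(Y_i,D_i,X_i)$. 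This lets me treat them as frozen when conditioning on $(X_i,\overline{\mathcal{D}}^{\ell(i)})$.

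Second, I would compute the conditional mean of each $\Delta_i(d)$ given $(X_i,\overline{\mathcal{D}}^{\ell(i)})$. Using unconfoundedness (Assumption~\ref{asmp:unconfounded}) in the form $\mathbb{E}_{P_0}[\mathbbm{1}[D_i=d]\big(Y_i-g(X_i)\big)\mid X_i]=\pi_0(d\mid X_i)\big(\mu_0(d)(X_i)-g(X_i)\big)$, a direct calculation collapses all first-order terms and leaves only the product of the two nuisance errors,
\[
\mathbb{E}_{P_0}\!\left[\Delta_i(d)\mid X_i,\overline{\mathcal{D}}^{\ell(i)}\right]=\big(\mu_0(d)(X_i)-\widehat{\mu}_{\ell(i),n}(d)(X_i)\big)\,\frac{\pi_0(d\mid X_i)-\widehat{\pi}_{\ell(i),n}(d\mid X_i)}{\widehat{\pi}_{\ell(i),n}(d\mid X_i)}.
\]
This is exactly the mixed-bias property that makes the estimator doubly robust. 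The denominator is controlled by the overlap assumption (Assumption~\ref{asmp:coherent}): since $\pi_0$ is bounded away from $0$ and $1$ and $\|\widehat{\pi}_\ell-\pi_0\|_2=o_P(1)$ by Assumption~\ref{asm:conv_rate}, the estimated propensity is eventually bounded away from the boundary with probability tending to one, so $1/\widehat{\pi}_{\ell(i),n}(d\mid X_i)=O_P(1)$.

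Third, I would bound the two pieces separately. By Cauchy--Schwarz and the product-rate condition $\|\widehat{\mu}_\ell(d)-\mu_0(d)\|_2\,\|\widehat{\pi}_\ell-\pi_0\|_2=o_P(n^{-1/2})$ of Assumption~\ref{asm:conv_rate}, the systematic (conditional-mean) part is $o_P(n^{-1/2})$ in the $L_2$/averaged sense, so it survives multiplication by $\sqrt{n}$ only as $o_P(1)$. The remaining, mean-zero, fluctuation of $\Delta_i(d)$ has $L_2$-size governed by $\|\widehat{\mu}_\ell(d)-\mu_0(d)\|_2$ and $\|\widehat{\pi}_\ell-\pi_0\|_2$, each $o_P(1)$; because of the cross-fitting independence these centered terms behave, conditionally on $\overline{\mathcal{D}}^{\ell(i)}$, like a sum of independent mean-zero summands, so they are controlled by a conditional second-moment/concentration argument once aggregated. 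I expect the main obstacle to lie precisely here: the product-bias identity only tames the \emph{conditional mean}, while the mean-zero fluctuation is not itself $o_P(n^{-1/2})$ at the level of a single observation, so the per-sample statement must be understood --- and is ultimately used in Lemma~\ref{lem:conv_Q} --- through its aggregate against the design $\bm{W}$ (equivalently the $\ell_1$ aggregate $\sqrt{n}\|\widehat{\bm{v}}-\overline{\bm{v}}\|_1$), where cross-fitting independence, the $o_P(1)$ rates, and the boundedness of both $1/\widehat{\pi}$ and the weights $\widehat{\sigma}_{\bar{\varepsilon}}$ combine to finish the bound. Verifying that uniform control of these weights and of $1/\widehat{\pi}$ holds on a high-probability event is the step demanding the most care.
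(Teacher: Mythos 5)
Your proposal follows essentially the same route as the paper's proof: center the difference at its conditional expectation given the out-of-fold data, kill the systematic part via the doubly robust product-of-errors identity together with H\"older/Cauchy--Schwarz and the product-rate condition of Assumption~\ref{asm:conv_rate}, and control the centered fluctuation via cross-fitting independence and an $o_P(1)$ conditional-variance bound followed by Chebyshev. Your caveat that the centered fluctuation is only $o_P(1)$ per observation --- so the $\sqrt{n}$-scaled claim can only be true in the aggregated sense --- is well taken: the paper's own proof silently switches from the per-sample statement to the normalized fold-sum $\mathbb{G}_{n_\ell}$, which is exactly where the missing $1/\sqrt{n_\ell}$ factor is recovered.
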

\begin{proof}
To show \eqref{eq:target2}, we use the following decomposition:
\begin{align*}
    &\sqrt{n}\left(\overline{u}_i - \widehat{u}^{\mathrm{DML}}_i\right)\\
    &= \sqrt{n}\Bigg\{\frac{\mathbbm{1}\big[D_i \coloneqq 1\big]\big(Y_i - \mu_0(1)(X_i)\big)}{\pi_0(1\mid X_i)} - \frac{\mathbbm{1}\big[D_i = 0\big]\big(Y_i - \mu_0(0)(X_i)\big)}{\pi_0(0\mid X_i)} - \Big\{Y_i(1) - \mu_0(1)(X_i)\Big\} + \Big\{Y_i(0) - \mu_0(0)(X_i)\Big\}\\
    &\ \ \ \ \ \ \ \ \ - \frac{\mathbbm{1}\big[D_i = 1\big]\big(Y_i - \widehat{\mu}_{\ell(i), n}(1)(X_i)\big)}{\widehat{\pi}_{\ell(i), n}(1\mid X_i)} + \frac{\mathbbm{1}\big[D_i = 0\big]\big(Y_i - \widehat{\mu}_{\ell(i), n}(0)(X_i)\big)}{\widehat{\pi}_{\ell(i), n}(0\mid X_i)}\\
     &\ \ \ \ \ \ \ \ \ + \Big\{Y_i(1) - \widehat{\mu}_{\ell(i), n}(1)(X_i)\Big\} - \Big\{Y_i(0) - \widehat{\mu}_{\ell(i), n}(0)(X_i)\Big\}\Bigg\}\\
     &= \sqrt{n}\Bigg\{\frac{\mathbbm{1}\big[D_i \coloneqq 1\big]\big(Y_i - \mu_0(1)(X_i)\big)}{\pi_0(1\mid X_i)} - \frac{\mathbbm{1}\big[D_i = 0\big]\big(Y_i - \mu_0(0)(X_i)\big)}{\pi_0(0\mid X_i)}\\
    &\ \ \ \ \ \ \ \ \ - \frac{\mathbbm{1}\big[D_i = 1\big]\big(Y_i - \widehat{\mu}_{\ell(i), n}(1)(X_i)\big)}{\widehat{\pi}_{\ell(i), n}(1\mid X_i)} + \frac{\mathbbm{1}\big[D_i = 0\big]\big(Y_i - \widehat{\mu}_{\ell(i), n}(0)(X_i)\big)}{\widehat{\pi}_{\ell(i), n}(0\mid X_i)}\\
     &\ \ \ \ \ \ \ \ \ + \Big\{\mu_0(1)(X_i) - \widehat{\mu}_{\ell(i), n}(1)(X_i)\Big\} - \Big\{\mu_0(0)(X_i) - \widehat{\mu}_{\ell(i), n}(0)(X_i)\Big\}\Bigg\}\\
     &= \sqrt{n}\Bigg\{\frac{\mathbbm{1}\big[D_i \coloneqq 1\big]\big(Y_i - \mu_0(1)(X_i)\big)}{\pi_0(1\mid X_i)} - \frac{\mathbbm{1}\big[D_i = 0\big]\big(Y_i - \mu_0(0)(X_i)\big)}{\pi_0(0\mid X_i)}\\
    &\ \ \ \ \ \ \ \ \ - \frac{\mathbbm{1}\big[D_i = 1\big]\big(Y_i - \widehat{\mu}_{\ell(i), n}(1)(X_i)\big)}{\widehat{\pi}_{\ell(i), n}(1\mid X_i)} + \frac{\mathbbm{1}\big[D_i = 0\big]\big(Y_i - \widehat{\mu}_{\ell(i), n}(0)(X_i)\big)}{\widehat{\pi}_{\ell(i), n}(0\mid X_i)}\\
     &\ \ \ \ \ \ \ \ \ + \Big\{\mu_0(1)(X_i) - \widehat{\mu}_{\ell(i), n}(1)(X_i)\Big\} - \Big\{\mu_0(0)(X_i) - \widehat{\mu}_{\ell(i), n}(0)(X_i)\Big\}\Bigg\}\\
     &\ \ \ - \sqrt{n}\mathbb{E}\Bigg[\frac{\mathbbm{1}\big[D_i \coloneqq 1\big]\big(Y_i - \mu_0(1)(X_i)\big)}{\pi_0(1\mid X_i)} - \frac{\mathbbm{1}\big[D_i = 0\big]\big(Y_i - \mu_0(0)(X_i)\big)}{\pi_0(0\mid X_i)}\\
    &\ \ \ \ \ \ \ \ \ - \frac{\mathbbm{1}\big[D_i = 1\big]\big(Y_i - \widehat{\mu}_{\ell(i), n}(1)(X_i)\big)}{\widehat{\pi}_{\ell(i), n}(1\mid X_i)} + \frac{\mathbbm{1}\big[D_i = 0\big]\big(Y_i - \widehat{\mu}_{\ell(i), n}(0)(X_i)\big)}{\widehat{\pi}_{\ell(i), n}(0\mid X_i)}\\
     &\ \ \ \ \ \ \ \ \ + \Big\{\mu_0(1)(X_i) - \widehat{\mu}_{\ell(i), n}(1)(X_i)\Big\} - \Big\{\mu_0(0)(X_i) - \widehat{\mu}_{\ell(i), n}(0)(X_i)\Big\}\Bigg]\\
     &\ \ \ + \sqrt{n}\mathbb{E}\Bigg[\frac{\mathbbm{1}\big[D_i \coloneqq 1\big]\big(Y_i - \mu_0(1)(X_i)\big)}{\pi_0(1\mid X_i)} - \frac{\mathbbm{1}\big[D_i = 0\big]\big(Y_i - \mu_0(0)(X_i)\big)}{\pi_0(0\mid X_i)}\\
    &\ \ \ \ \ \ \ \ \ - \frac{\mathbbm{1}\big[D_i = 1\big]\big(Y_i - \widehat{\mu}_{\ell(i), n}(1)(X_i)\big)}{\widehat{\pi}_{\ell(i), n}(1\mid X_i)} + \frac{\mathbbm{1}\big[D_i = 0\big]\big(Y_i - \widehat{\mu}_{\ell(i), n}(0)(X_i)\big)}{\widehat{\pi}_{\ell(i), n}(0\mid X_i)}\\
     &\ \ \ \ \ \ \ \ \ + \Big\{\mu_0(1)(X_i) - \widehat{\mu}_{\ell(i), n}(1)(X_i)\Big\} - \Big\{\mu_0(0)(X_i) - \widehat{\mu}_{\ell(i), n}(0)(X_i)\Big\}\Bigg].
\end{align*}

For a measurable function $\psi_1:\mathcal{Y}\times\{1, 0\}\mathcal{X}\times \mathcal{M}\times\Pi\to\mathbb{R}$ and $\psi_2:\mathcal{X}\times \mathcal{M}\to\mathbb{R}$, let us define the empirical expectations over the labeled and unlabeled data as
\begin{align*}
&\widehat{\mathbb{E}}_{n_{\ell}}[\psi_1(Y, D, X)] \coloneqq \frac{1}{n_\ell}\sum_{i\in\mathcal{D}^\ell}\psi_1(Y_i, D_i, X_i),\qquad \widehat{\mathbb{E}}_{n_{\ell}}[\psi_2(X)] \coloneqq \frac{1}{n_\ell}\sum_{i\in\mathcal{D}^\ell}\psi_2(X_i).
\end{align*} 

We define 
\begin{align*}
\phi_1(Y, D, X; \mu, \pi) &\coloneqq \frac{\mathbbm{1}\big[D = 1\big]\big(Y - \mu(1)(X)\big)}{\pi(1\mid X)} - \frac{\mathbbm{1}\big[D = 0\big]\big(Y - \mu(0)(X)\big)}{\pi(0\mid X)},\\
\phi_2(X; f) &\coloneqq \mu(1)(X) - \mu(0)(X).
\end{align*}

Let $\mathbb{G}_{n_\ell}$ be an empirical process for functions $\psi_1:\mathcal{Y}\times\{1, 0\}\mathcal{X}\times \mathcal{M}\times\Pi\to\mathbb{R}$ and $\psi_2:\mathcal{X}\times \mathcal{M}\to\mathbb{R}$ defined as 
\begin{align*}
&\mathbb{G}_{n_\ell}(\psi_1(Y, D, X; \mu, \pi))=\sqrt{n_\ell}\Big(\widehat{\mathbb{E}}_{n_{\ell}}[\psi_1(Y, D, X; \mu, \pi)] - \mathbb{E}_{P_0}[\psi_1(X, Y; \mu, \pi)]\Big)\\
&\mathbb{G}_{n_\ell}(\psi_2(X_i; \mu))=\sqrt{n_\ell}\Big(\widehat{\mathbb{E}}_{n_{\ell}}[\psi_2(X; \mu)] - \mathbb{E}_{P_0}[\psi_2(X; \mu)]\Big).
\end{align*}

Then, we have
\begin{align*}
    &\sqrt{n}\sum_{i\in\mathcal{D}^\ell}\left(\overline{u}_i - \widehat{u}^{\mathrm{DML}}_i\right)\\
    &= \frac{\sqrt{n}}{\sqrt{n_\ell}}\mathbb{G}_{n_\ell}\left(\phi_1\Big(Y, D, X; \mu_0, \pi_0\Big) - \phi_1\Big(Y, D, X; \widehat{\mu}_{\ell, n}, \widehat{\pi}_{\ell, n}\Big) \right) +\frac{\sqrt{n}}{\sqrt{n_\ell}}\mathbb{G}_{n_\ell}\left( \phi_2\Big(X; \mu_0\Big) - \phi_2\Big(X; \widehat{\mu}_{\ell(i), i}\Big)\right)\\
    &\ \ \ + \sqrt{n}\Bigg\{\mathbb{E}\Big[\phi_1\Big(Y, D, X; \widehat{\mu}_{\ell, n}, \widehat{\pi}_{\ell, n}\Big)\mid \widehat{\mu}_{\ell, n}, \widehat{\pi}_{\ell, n}\Big] - \mathbb{E}\Big[\phi_1\Big(Y, D, X; \mu_0, \pi_0\Big)\mid \widehat{\mu}_{\ell, n}, \widehat{\pi}_{\ell, n}\Big]\nonumber\\
&\ \ \ \ \ \ \ \ \ \ \ \ \ \ \ \ \ \ \ \ \ \ \ \ \ \ \ \ \ \ \ \ \ \ \ \ \ \ \ \ \ \ \ \ \ \ \ \ \ \ \ \ \ \ \ \ \ \ \ \ \ \ \ \ + \mathbb{E}\Big[\phi_2\Big(X; \mu_0\Big)\mid \widehat{\mu}_{\ell, n}\Big] - \mathbb{E}\Big[\phi_2\Big(X; \widehat{\mu}_{\ell(i), i}\Big)\mid \widehat{\mu}_{\ell, n}\Big]\Bigg\}
\end{align*}

Then, in the following proof, we separately show the following two inequalities:
\begin{align}
\label{eq:target4}
&\frac{\sqrt{n}}{\sqrt{n_\ell}}\mathbb{G}_{n_\ell}\left(\phi_1\Big(Y, D, X; \mu_0, \pi_0\Big) - \phi_1\Big(Y, D, X; \widehat{\mu}_{\ell, n}, \widehat{\pi}_{\ell, n}\Big) \right)\nonumber\\
&\ \ \ \ \ \ \ \ \ \ \ \ \ \ \ \ \ \ \ \ \ \ \ \ \ \ \ \ \ \ \ \ \ \ \ \ \ \ \ \ \ +\frac{\sqrt{n}}{\sqrt{n_\ell}}\mathbb{G}_{n_\ell}\left( \phi_2\Big(X; \mu_0\Big) - \phi_2\Big(X; \widehat{\mu}_{\ell(i), i}\Big)\right) = o_P(1),
\end{align}
and 
\begin{align}
\label{eq:target5}
&\sqrt{n}\Bigg\{\mathbb{E}\Big[\phi_1\Big(Y, D, X; \widehat{\mu}_{\ell, n}, \widehat{\pi}_{\ell, n}\Big)\mid \widehat{\mu}_{\ell, n}, \widehat{\pi}_{\ell, n}\Big] - \mathbb{E}\Big[\phi_1\Big(Y, D, X; \mu_0, \pi_0\Big)\mid \widehat{\mu}_{\ell, n}, \widehat{\pi}_{\ell, n}\Big]\nonumber\\
&\ \ \ \ \ \ \ \ \ \ \ \ \ \ \ \ \ \ \ \ \ \ \ \ \ \ \ \ \ \ \ \ \ \ \ \ \ \ \ \ \ \ \ \ \ \ \ \ \ \ \ \ \ \ \ \ \ \ \ \ \ \ \ \ + \mathbb{E}\Big[\phi_2\Big(X; \widehat{\mu}_{\ell(i), i}\Big)\mid \widehat{\mu}_{\ell, n}\Big]- \mathbb{E}\Big[\phi_2\Big(X; \mu_0\Big)\mid \widehat{\mu}_{\ell, n}\Big]\Bigg\}\\
&= o_P(1)\nonumber.
\end{align}

\paragraph{Step~1: Proof of \eqref{eq:target4}.}

If we can show that for any $\epsilon>0$, 
\begin{align}\
\label{eq:part}
 &\lim_{n\to \infty}\mathbb{P}\Bigg[\Bigg|\frac{\sqrt{n}}{\sqrt{n_\ell}}\mathbb{G}_{n_\ell}\big(\phi_1(Y, D, X; \widehat{\mu}_{\ell, n}, \widehat{\pi}_{\ell, n}) - \phi_1\Big(Y, D, X; \mu_0, \pi_0\Big)\big)\nonumber\\
&\ \ \ \ \ \ \ \ \ \ \ \ \ \ \ \ \ \ \ \ \ \ \ \ +\frac{\sqrt{n}}{\sqrt{n_\ell}}\mathbb{G}_{n_\ell}\big( \phi_2\Big(X; \mu_0\Big) - \phi_2\Big(X; \widehat{\mu}_{\ell(i), i}\Big)\big) \Bigg| > \varepsilon \mid D_{(2)} \Bigg]=0,
\end{align}
then by the bounded convergence theorem, we would have 
\begin{align*}
&\lim_{n \to \infty}P\Bigg[\Bigg|\frac{\sqrt{n}}{\sqrt{n_\ell}}\mathbb{G}_{n_\ell}\big(\phi_1\Big(Y, D, X; \mu_0, \pi_0\Big) - \phi_1\Big(Y, D, X; \widehat{\mu}_{\ell, n}, \widehat{\pi}_{\ell, n}\Big) \big)\nonumber\\
&\ \ \ \ \ \ \ \ \ \ \ \ \ \ \ \ \ \ \ \ \ \ \ \ +\frac{\sqrt{n}}{\sqrt{n_\ell}}\mathbb{G}_{n_\ell}\big( \phi_2\Big(X; \mu_0\Big) - \phi_2\Big(X; \widehat{\mu}_{\ell(i), i}\Big)\big) \Bigg| > \varepsilon \mid D_{(2)} \Bigg]=0,
\end{align*}
yielding the statement. 

To show \eqref{eq:part}, we show that the conditional mean is $0$ and the conditional variance is $o_P(1)$. Then, \eqref{eq:part} is proved by the Chebyshev inequality following the proof of \citep[Theorem 4]{KallusUehara2019}. 
The conditional mean is 
\begin{align*}
&\mathbb{E}\Bigg[\frac{\sqrt{n}}{\sqrt{n_\ell}}\mathbb{G}_{n_\ell}\big(\phi_1(Y, D, X; \widehat{\mu}_{\ell, n}, \widehat{\pi}_{\ell, n}) - \phi_1\Big(Y, D, X; \mu_0, \pi_0\Big)\big)\\
&\ \ \ \ \ +\frac{\sqrt{n}}{\sqrt{n_\ell}}\mathbb{G}_{n_\ell}\big(\phi_2(X; \widehat{\mu}_{\ell, n}) - \phi_2\Big(X; \mu_0\Big)\big) \mid \overline{D}^{\ell}\Bigg] \\
&= \mathbb{E}\Bigg[\frac{\sqrt{n}}{\sqrt{n_\ell}}\mathbb{G}_{n_\ell}\big(\phi_1(Y, D, X; \widehat{\mu}_{\ell, n}, \widehat{\pi}_{\ell, n}) - \phi_1\Big(Y, D, X; \mu_0, \pi_0\Big)\big),\\
&\ \ \ \ \ +\frac{\sqrt{n}}{\sqrt{n_\ell}}\mathbb{G}_{n_\ell}\big(\phi_2(X; \widehat{\mu}_{\ell, n}) - \phi_2\Big(X; \mu_0\Big)\big) \mid \widehat{\mu}_{\ell, n}, \widehat{\pi}_{\ell, n}\Bigg] \\
&=0. 
\end{align*}

The conditional variance is bounded as 
\begin{align*}
&\mathrm{Var}\Bigg[\frac{\sqrt{n}}{\sqrt{n_\ell}}\mathbb{G}_{n_\ell}\big(\phi_1(Y, D, X; \widehat{\mu}_{\ell, n}, \widehat{\pi}_{\ell, n}) - \phi_1\Big(Y, D, X; \mu_0, \pi_0\Big)\big),\\
&\ \ \ \ \ +\frac{\sqrt{n}}{\sqrt{n_\ell}}\mathbb{G}_{n_\ell}\big(\phi_2(X; \widehat{\mu}_{\ell, n}) - \phi_2\Big(X; \mu_0\Big)\big) \mid \overline{D}^{\ell}\Bigg]\\
&=\frac{n}{n_\ell}\mathrm{Var}\Bigg[\phi_1(Y, D, X; \widehat{\mu}_{\ell, n}, \widehat{\pi}_{\ell, n}) - \phi_1\Big(Y, D, X; \mu_0, \pi_0\Big)\mid \overline{D}^{\ell}\Bigg]\\
&\ \ \ \ \ +\frac{n}{n_\ell}\mathrm{Var}\Bigg[\phi_2(X; \widehat{\mu}_{\ell, n}) - \phi_2\Big(X; \mu_0\Big) \mid \overline{D}^{\ell}\Bigg]\\
&=\frac{n}{n_\ell}\mathbb{E}\Bigg[\left\{\phi_1(Y, D, X; \widehat{\mu}_{\ell, n}, \widehat{\pi}_{\ell, n}) - \phi_1\Big(Y, D, X; \mu_0, \pi_0\Big)\right\}^2 \mid \overline{D}^{\ell}\Bigg]\\
&\ \ \ \ \ +\frac{n}{n_\ell}\mathbb{E}\Bigg[\left\{\phi_2(X; \widehat{\mu}_{\ell, n}) - \phi_2\Big(X; \mu_0\Big)\right\}^2  \mid \overline{D}^{\ell}\Bigg]\\
&=o_P(1)
\end{align*}

Here, we used
\begin{align}
\label{eq:first_e}
    \frac{n}{n_\ell}\mathbb{E}\Bigg[\left\{\phi_1\Big(Y, D, X; \mu_0, \pi_0\Big) - \phi_1\Big(Y, D, X; \widehat{\mu}_{\ell, n}, \widehat{\pi}_{\ell, n}\Big) \right\}^2 \mid \overline{D}^{\ell}\Bigg]=o_P(1),
\end{align}
and 
\begin{align} 
\label{eq:second_e}
    \frac{n}{n_\ell}\mathbb{E}\Bigg[\left\{\phi_2(X_i; \widehat{\mu}_{\ell, n}) - \phi_2\Big(X; \mu_0\Big)\right\}^2  \mid \overline{D}^{\ell}\Bigg]=o_P(1). 
\end{align}
The first equation \eqref{eq:first_e} is proved by 
\begin{align*}
    &\mathbb{E}\Bigg[\Bigg\{\left(\frac{\mathbbm{1}\big[D = 1\big]\big(Y - \widehat{\mu}_{\ell, n}(1)(X)\big)}{\widehat{\pi}_{\ell, n}(1\mid X)} - \frac{\mathbbm{1}\big[D = 0\big]\big(Y - \widehat{\mu}_{\ell, n}(0)(X)\big)}{\widehat{\pi}_{\ell, n}(0\mid X)}\right)\\
    &\ \ \ \ \ - \left(\frac{\mathbbm{1}\big[D = 1\big]\big(Y - \mu_0(1)(X)\big)}{\pi_0(1\mid X)} - \frac{\mathbbm{1}\big[D = 0\big]\big(Y - \mu_0(0)(X)\big)}{\pi_0(0\mid X)}\right)\Bigg\}^2 \mid \overline{D}^{\ell}\Bigg]\\
    &=\mathbb{E}\Bigg[\Bigg\{\left(\frac{\mathbbm{1}\big[D = 1\big]\big(Y - \widehat{\mu}_{\ell, n}(1)(X)\big)}{\widehat{\pi}_{\ell, n}(1\mid X)} - \frac{\mathbbm{1}\big[D = 0\big]\big(Y - \widehat{\mu}_{\ell, n}(0)(X)\big)}{\widehat{\pi}_{\ell, n}(0\mid X)}\right)\\
    &\ \ \ \ \ - \left(\frac{\mathbbm{1}\big[D = 1\big]\big(Y - \mu_0(1)(X)\big)}{\widehat{\pi}_{\ell, n}(1\mid X)} - \frac{\mathbbm{1}\big[D = 0\big]\big(Y - \mu_0(0)(X)\big)}{\widehat{\pi}_{\ell, n}(0\mid X)}\right)\\
    &\ \ \ \ \ + \left(\frac{\mathbbm{1}\big[D = 1\big]\big(Y - \mu_0(1)(X)\big)}{\widehat{\pi}_{\ell, n}(1\mid X)} - \frac{\mathbbm{1}\big[D = 0\big]\big(Y - \mu_0(0)(X)\big)}{\widehat{\pi}_{\ell, n}(0\mid X)}\right)\\
    &\ \ \ \ \ - \left(\frac{\mathbbm{1}\big[D = 1\big]\big(Y - \mu_0(1)(X)\big)}{\pi_0(1\mid X)} - \frac{\mathbbm{1}\big[D = 0\big]\big(Y - \mu_0(0)(X)\big)}{\pi_0(0\mid X)}\right)\Bigg\}^2 \mid \overline{D}^{\ell}\Bigg]\\
    &\leq C_1 \big\| \mu_0(a)(X) - \widehat{\mu}_{\ell, n}(a)(X)\big\|^2_2 + C_2 \big\|\widehat{\pi}_{\ell, n}(a\mid X) - \pi_0(a\mid X) \big\|^2_2,
\end{align*}
where $C_1, C_2$ are some constants independent of $n$. 
Here, we have used $\widetilde{C}_1 <\widehat \pi_0$ and $|Y|, |\widehat{\mu}_{\ell, n}|<\widetilde{C}_2$ for some constants $\widetilde{C}_1, \widetilde{C}_2 > 0$ according to the Assumptions \ref{asm:bounded_output} and \ref{asmp:coherent}. Then, from the convergence rate condition (Assumption~\ref{asm:conv_rate}), we have \eqref{eq:first_e}.

The second equation \eqref{eq:second_e} directly holds from Assumption~\ref{asm:conv_rate}.

\paragraph{Step~2: Proof of \eqref{eq:target5}.}

We have 
\begin{align*}
&\Bigg|\mathbb{E}\Big[\phi_1\Big(Y, D, X; \mu_0, \pi_0\Big) - \mathbb{E}\Big[\phi_1\Big(Y, D, X; \widehat{\mu}_{\ell, n}, \widehat{\pi}_{\ell, n}\Big)\mid \widehat{\mu}_{\ell, n}, \widehat{\pi}_{\ell, n}\Big]\mid \widehat{\mu}_{\ell, n}, \widehat{\pi}_{\ell, n}\Big]\\
&\ \ \ \ \ + \mathbb{E}\Big[\phi_2\Big(X; \mu_0\Big)\mid \widehat{\mu}_{\ell, n}\Big] - \mathbb{E}\Big[\phi_2(X_i; \widehat{\mu}_{\ell, n})\mid \widehat{\mu}_{\ell, n}\Big]\Bigg|\\
&= \Bigg|\mathbb{E}\Bigg[\left(\frac{\mathbbm{1}\big[D = 1\big]\big(Y - \mu_0(1)(X)\big)}{\pi_0(1\mid X)} - \frac{\mathbbm{1}\big[D = 0\big]\big(Y - \mu_0(0)(X)\big)}{\pi_0(0\mid X)}\right)\\
    &\ \ \ \ \ - \left(\frac{\mathbbm{1}\big[D = 1\big]\big(Y - \widehat{\mu}_{\ell, n}(1)(X)\big)}{\widehat{\pi}_{\ell, n}(1\mid X)} - \frac{\mathbbm{1}\big[D = 0\big]\big(Y - \widehat{\mu}_{\ell, n}(0)(X)\big)}{\widehat{\pi}_{\ell, n}(0\mid X)}\right)\mid \widehat{\mu}_{\ell, n}, \widehat{\pi}_{\ell, n}\Bigg]\\
&\ \ \ \ \ + \mathbb{E}\Big[\Big\{\mu_0(1)(X_i) - \widehat{\mu}_{\ell(i), n}(1)(X_i)\Big\} - \Big\{\mu_0(0)(X_i) - \widehat{\mu}_{\ell(i), n}(0)(X_i)\Big\}\mid \widehat{\mu}_{\ell, n}\Big]\Bigg|\\
&= \Bigg| - \mathbb{E}\Bigg[\left(\frac{\pi_0(1\mid X)\big(\mu_0(1)(X) - \widehat{\mu}_{\ell, n}(1)(X)\big)}{\widehat{\pi}_{\ell, n}(1\mid X)} - \frac{\pi_0(0\mid X)\big(\mu_0(0)(X) - \widehat{\mu}_{\ell, n}(0)(X)\big)}{\widehat{\pi}_{\ell, n}(0\mid X)}\right)\mid \widehat{\mu}_{\ell, n}, \widehat{\pi}_{\ell, n}\Bigg]\\
&\ \ \ \ \ + \mathbb{E}\Big[\Big\{\mu_0(1)(X_i) - \widehat{\mu}_{\ell(i), n}(1)(X_i)\Big\} - \Big\{\mu_0(0)(X_i) - \widehat{\mu}_{\ell(i), n}(0)(X_i)\Big\}\mid \widehat{\mu}_{\ell, n}\Big]\Bigg|\\
&= C\sum_{d\in\{1, 0\}}\Bigg|\mathbb{E}\Big[\big(\mu_0(d)(X)-\widehat{\mu}_{\ell, n}(d)(X)\big)\big(\widehat{\pi}_{\ell, n}(d)(X) - \pi_0(d\mid X)\big)\Big]\Bigg|\\
&=\mathrm{o}_p(n^{-1/2}),
\end{align*}
where $C > 0$ is some constant. 
All elements of the vector is bounded by a universal constant.
Here, we have used \Holder's inequality $ \|fg \|_1 \leq  \|f \|_2  \|g \|_2$.

\end{proof}

\section{Proof of Lemma~\ref{lem:asymp_normal}}
\label{appdx:nongaussian}
Let $\tau^2_j$ be the population value of estimates $\widehat{\tau}^2_j$ in the nodewise Lasso, defined as
\begin{align*}
    \tau^2_j = \mathbb{E}_{P_0}\left[\left(W_{1, j}- \sum_{k\neq j} \gamma_{0,j,k}W_{1, k}\right)^2\right],
\end{align*}
where $\{\gamma_{0,j,k}\}_{k\neq j}$ is the population regression coefficient of $W_{1, j}$ versus $\{W_{1, k}\}_{k\neq j}$ that satisfy the following conditions: uniformly in $j\in[p]$, $\tau^2_j = 1 / \Theta_{j, j} \geq \Lambda^2_{\min} > 0$ and $\tau^2_j \leq \mathbb{E}_{P_0}\left[W_{1, j}\right] = \Sigma_{j, j} = O(1)$. 

In the proof, we use the following proposition from \citet{vandeGeer2014}. 
\begin{proposition}[From Theorem~2.4 in \citet{vandeGeer2014}]
\label{lem:random}
        Denote by $\Theta \coloneqq \Theta_{\mathrm{nodewise}}$.
Suppose that Assumptions~\ref{asm:b1_vandegeer}--\ref{asm:b4_vandegeer} hold. Then, for suitable tuning parameters $\lambda_j \asymp K_0 \sqrt{\log(p)/n}$ uniformly in $j$, 
    \begin{align*}
        \left\| \Theta_{\mathrm{nodewise}, j} - \Theta_j \right\|_1 &= o_P\left( K_0s_j \sqrt{\frac{\log(p)}{n}} \right),\\
        \left\| \Theta_{\mathrm{nodewise}, j} - \Theta_j \right\|_2 &= o_P\left( K_0s_j \sqrt{\frac{\log(p)}{n}} \right),\\
        \left| \widehat{\tau}^2_j - \tau^2_j \right| &= o_P\left( K_0\sqrt{\frac{s_j \log(p)}{n}} \right),\qquad j \in [p].
    \end{align*}
Furthermore, we have
\begin{align*}
    &\left| \Theta_{\mathrm{nodewise}, j} \Sigma \left(\Theta_{\mathrm{nodewise}, j}\right)^\top - \Theta^d_{j, j} \right|\\
    &\leq \left\| \Sigma \right\|_{\infty} \left\| \Theta_{\mathrm{nodewise}, j} - \Theta_j \right\|^2_1 \lor \Lambda^2_{\max}\left\| \Theta_{\mathrm{nodewise}, j} - \Theta_j \right\|^2_2 + 2\left| \widehat{\tau}^2_j - \tau^2_j \right|,\qquad j \in[p],
\end{align*}
where $\Lambda^2_{\max}$ is the maximal eigenvalue of $\Sigma$. In the sub-Gaussian or strongly bounded case the results are uniform in $j$. 
\end{proposition}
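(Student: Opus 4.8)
The plan is to recognize Proposition~\ref{lem:random} as the specialization, to the weighted design $\bm{W}$, of the nodewise-Lasso precision-matrix estimate analyzed in Theorem~2.4 of \citet{vandeGeer2014}; accordingly, I would prove it by reducing the four displayed bounds to $p$ separate high-dimensional Lasso analyses and then assembling them. Fix $j\in[p]$ and view the $j$-th nodewise regression as an ordinary Lasso fit of $\mathbb{W}_j$ on $\bm{W}_{(-j)}$. Its population target is the best linear predictor coefficient $\gamma_{0,j}$, and the population residual $\eta_j \coloneqq W_{1,j} - \sum_{k\neq j}\gamma_{0,j,k}W_{1,k}$ is mean-zero and, by the defining normal equations of the projection, uncorrelated with the regressors $\{W_{1,k}\}_{k\neq j}$; moreover $\tau_j^2 = \mathbb{E}_{P_0}[\eta_j^2] = 1/\Theta_{j,j} \geq \Lambda_{\min}^2 > 0$ by Assumption~\ref{asm:b3_vandegeer}. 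Sub-Gaussianity (Assumption~\ref{asm:b1_vandegeer}) makes $\eta_j$ sub-Gaussian, so the same empirical-process step used in Lemma~\ref{lem:concent} controls $\max_j\|\bm{W}_{(-j)}^\top \eta_j\|_\infty/n$ at the order $K_0\sqrt{\log(p)/n}$, justifying the choice $\lambda_j\asymp K_0\sqrt{\log(p)/n}$.

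Given that bound on the empirical-process term, I would apply the oracle-inequality machinery of Lemma~\ref{thm:oracle} and Lemma~\ref{lem:oracle} verbatim to the nodewise design: under the restricted-eigenvalue (compatibility) property that follows from the strictly positive smallest eigenvalue of $\Sigma$ (Assumption~\ref{asm:b3_vandegeer}) via concentration of $\widehat\Sigma$ around $\Sigma$, one obtains $\|\widehat{\bm{\gamma}}_j - \gamma_{0,j}\|_1 = O_P(K_0 s_j\sqrt{\log(p)/n})$ and the in-sample prediction error $\|\bm{W}_{(-j)}(\widehat{\bm{\gamma}}_j - \gamma_{0,j})\|_2^2/n = O_P(K_0^2 s_j\log(p)/n)$, where $s_j$ is the sparsity of $\gamma_{0,j}$. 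The convergence of $\widehat{\tau}_j^2$ then follows by expanding $\widehat{\tau}_j^2 = \|\mathbb{W}_j - \bm{W}_{(-j)}\widehat{\bm{\gamma}}_j\|_2^2/n + \lambda_j\|\widehat{\bm{\gamma}}_j\|_1$ around $\tau_j^2 = \mathbb{E}_{P_0}[\eta_j^2]$: the cross term is handled with the KKT conditions of the nodewise Lasso and the $\ell_1$-rate just derived, while the replacement of the empirical second moment of $\eta_j$ by $\tau_j^2$ is a concentration statement (using Assumption~\ref{asm:b4_vandegeer} for the fourth moment in the bounded case, or sub-Gaussianity otherwise), yielding $|\widehat{\tau}_j^2 - \tau_j^2| = o_P(K_0\sqrt{s_j\log(p)/n})$.

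The two displayed norm bounds on $\widehat\Theta_{\mathrm{nodewise},j} - \Theta_j$ follow by writing $\widehat\Theta_{\mathrm{nodewise},j} = \widehat{C}_j/\widehat{\tau}_j^2$ and $\Theta_j = C_j/\tau_j^2$, with $C_j$ and $\widehat{C}_j$ the rows carrying $\gamma_{0,j}$ and $\widehat{\bm{\gamma}}_j$, and splitting the difference into a numerator term controlled by $\|\widehat{\bm{\gamma}}_j - \gamma_{0,j}\|_1$ (resp.\ $\|\cdot\|_2$) and a term from $1/\widehat{\tau}_j^2 - 1/\tau_j^2$, the latter controlled by $|\widehat{\tau}_j^2-\tau_j^2|$ together with the uniform lower bound $\tau_j^2\geq\Lambda_{\min}^2$. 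Finally, the quadratic-form inequality is purely algebraic once the rates are in hand: starting from the exact population identity $\Theta_j\Sigma\Theta_j^\top = \Theta_{j,j}$, one adds and subtracts $\Theta_j$, bounds the resulting bilinear remainder in $\widehat\Theta_{\mathrm{nodewise},j}-\Theta_j$ by $\|\Sigma\|_\infty\|\cdot\|_1^2 \lor \Lambda_{\max}^2\|\cdot\|_2^2$, and absorbs the normalization discrepancy into $2|\widehat{\tau}_j^2-\tau_j^2|$, giving exactly the stated bound.

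I expect the main obstacle to be the uniform-in-$j$ claims rather than any single fixed-$j$ estimate. Porting the fixed-$j$ oracle inequality to hold simultaneously across all $p$ nodewise regressions requires (i) verifying the restricted-eigenvalue/compatibility condition for every submatrix design $\bm{W}_{(-j)}$ at once, which one gets from a single high-probability control of $\|\widehat\Sigma - \Sigma\|_\infty$ combined with $1/\Lambda_{\min}^2 = O(1)$, and (ii) a union bound over the $p$ empirical-process events, affordable precisely because sub-Gaussianity and Assumption~\ref{asm:b2_vandegeer} keep the sparsity--dimension product $s_j\sqrt{\log(p)/n}$ (or $K^2 s_j\sqrt{\log(p)/n}$ in the bounded case) negligible. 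This is the delicate bookkeeping that \citet{vandeGeer2014} carry out, and since the only change here is the reweighting by $\widehat{\sigma}_{\bar{\varepsilon}}$, which is bounded away from $0$ and $\infty$ under Assumptions~\ref{asm:bounded_output} and \ref{asmp:coherent}, the proposition follows by invoking their Theorem~2.4 with the design $\bm{W}$ in place of the unweighted design.
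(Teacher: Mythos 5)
Your proposal is correct in outline, but it is worth noting that the paper itself offers \emph{no} proof of this proposition: it is imported verbatim as Theorem~2.4 of \citet{vandeGeer2014}, applied to the weighted design $\bm{W}$, and the label ``From Theorem~2.4 in \citet{vandeGeer2014}'' is the entirety of the paper's justification. Your closing paragraph --- invoke their theorem with $\bm{W}$ in place of the unweighted design, after checking that the reweighting by $\widehat{\sigma}_{\bar{\varepsilon}}$ preserves Assumptions~\ref{asm:b1_vandegeer}--\ref{asm:b4_vandegeer} --- is therefore exactly the paper's route. What you add on top is a faithful reconstruction of the underlying argument (nodewise oracle inequalities, the $\widehat{\tau}^2_j$ expansion via the KKT conditions, the numerator/denominator split for $\widehat{\Theta}_{\mathrm{nodewise},j}-\Theta_j$, and the purely algebraic quadratic-form bound), and your identification of the uniform-in-$j$ bookkeeping as the delicate part matches where the real work lies in \citet{vandeGeer2014}. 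Two small caveats. First, your oracle-inequality step delivers $O_P\bigl(K_0 s_j\sqrt{\log(p)/n}\bigr)$ rates, whereas the proposition as stated claims $o_P$ of the same quantities; this mismatch is inherited from the paper's transcription (the cited theorem states $\mathcal{O}_P$ rates), so it is not a defect of your argument, but you should not claim to have derived the $o_P$ form from an $O_P$ bound. Second, the one genuinely new issue in this setting --- that the rows $X_i/\widehat{\sigma}_{\bar{\varepsilon}}(X_i)$ are not literally i.i.d.\ when $\widehat{\sigma}_{\bar{\varepsilon}}$ is estimated from the same data --- is assumed away by Assumption~\ref{asm:b1_vandegeer} in both your write-up and the paper; your remark that the weights are bounded away from $0$ and $\infty$ addresses boundedness but not this dependence, and a fully rigorous transfer would need to say more there.
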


Then, we prove Lemma~\ref{lem:asymp_normal} as follows.
\begin{proof}
The KKT condition in \eqref{eq:DML-CATELasso} yields 
\begin{align*}
&\widehat{\Sigma} \left(\widehat{\bm{\beta}}^{\mathrm{WDML}}_n - \bm{\beta}_0\right)
+ \lambda \widehat{\bm{\kappa}}^{\mathrm{WDML}} =  \bm{W}^\top \widehat{\bm{\varepsilon}}^{\mathrm{WDML}} / n,
\end{align*}
where we used $\widehat{\mathbb{Q}}^{\mathrm{DML}} = \bm{W}\bm{\beta}_0 + \widehat{\bm{\varepsilon}}^{\mathrm{DML}}$, and 
\begin{align*}
&\bm{W}^\top\left(\widehat{\mathbb{Q}}^{\mathrm{WDML}}  - \bm{W}\widehat{\bm{\beta}}^{\mathrm{WDML}}_n\right)/n = \bm{W}^\top\left(\bm{X}\bm{\beta}_0 + \widehat{\bm{\varepsilon}}^{\mathrm{WDML}} - \bm{W}\widehat{\bm{\beta}}^{\mathrm{WDML}}_n\right) / n = \widehat{\Sigma}\left(\bm{\beta}_0 - \widehat{\bm{\beta}}^{\mathrm{WDML}}_n\right) + \bm{W}^\top\widehat{\bm{\varepsilon}}^{\mathrm{WDML}}/ n.
\end{align*}
Suppose that $\widehat{\Theta}$ is a reasonable approximation for the inverse of $\widehat{\Sigma}$. Then, it holds that
\begin{align*}
\widehat{\bm{\beta}}^{\mathrm{WDML}}_n - \bm{\beta}_0 &+ \widehat{\Theta} \lambda \widehat{\kappa}^{\mathrm{WDML}} = \widehat{\Theta} \bm{W}^\top \widehat{\bm{\varepsilon}}^{\mathrm{WDML}} / n - \widehat{\Delta}^{\mathrm{WDML}} / \sqrt{n},
\end{align*}
where $\widehat{\Delta}^{\mathrm{WDML}} \coloneqq \sqrt{n}\widehat{\Theta}\widehat{\Sigma} \left( \widehat{\bm{\beta}}^{\mathrm{WDML}}_n - \bm{\beta}_0\right)$. 

Furthermore, we defined $R$ as $R = \sqrt{n}\widehat{\Theta} \bm{W}^\top\Big(\widehat{\bm{\varepsilon}}^{\mathrm{WDML}} - \overline{\bm{\varepsilon}}^{\mathrm{Weight}}\Big)$. Therefore, we obtain 
\begin{align*}
\widehat{\bm{\beta}}^{\mathrm{WDML}}_n - \bm{\beta}_0 &+ \widehat{\Theta} \lambda \widehat{\kappa}^{\mathrm{WDML}} = \widehat{\Theta} \bm{W}^\top \overline{\bm{\varepsilon}}^{\mathrm{Weight}} / n - \widehat{\Delta}^{\mathrm{WDML}} / \sqrt{n} - R,
\end{align*}

Here, we can show that $\left\|\widehat{\Delta}^{\mathrm{WDML}}\right\|_1 = o_P(1)$ and $\|R\|_1 = o_P(1)$ hold.

This result yields the debiased Lasso estimator, defined as
\[\widehat{\bm{b}}^{\mathrm{WTDL}}_n = \widehat{\bm{\beta}}^{\mathrm{WDML}}_n + \widehat{\Theta} \bm{W}^\top\left(\widehat{\mathbb{Q}}^{\mathrm{WDML}} - \bm{W}\widehat{\bm{\beta}}^{\mathrm{WDML}}_n\right) / n.\]
Therefore, we define the debiased DML-CATELasso estimator as $\widehat{\bm{b}}^{\mathrm{WTDL}} \coloneqq \overline{\bm{\beta}}^{\mathrm{WDML}}_n + \widehat{\Theta}  \lambda\widehat{\bm{\kappa}}$, which is equal to
\begin{align*}
\widehat{\bm{b}}^{\mathrm{WTDL}}
&= \widehat{\bm{\beta}}^{\mathrm{WDML}}_n + \widehat{\Theta} \bm{W}^\top\left(\widehat{\mathbb{Q}}^{\mathrm{WDML}} - \bm{W}\widehat{\bm{\beta}}^{\mathrm{WDML}}_n\right) / n,
\end{align*}
where we used \eqref{eq:KKTcond}. 
Here, it holds that 
\begin{align*}
\sqrt{n}\left(\widehat{\bm{b}}^{\mathrm{WTDL}} - \bm{\beta}_0\right) = \widehat{\Theta} \bm{W}^\top \overline{\bm{\varepsilon}}^{\mathrm{Weight}} / \sqrt{n} - \widehat{\Delta}^{\mathrm{WDML}},
\end{align*}
where $\widehat{\Theta} \bm{X}^\top \overline{\bm{\varepsilon}}^{\mathrm{Weight}} / \sqrt{n}$ converges to a Gaussian distribution from the central limit theorem and $\Delta$ converges to zero as $n\to \infty$. Lastly, we consider how the debiased Oracle-DR-CATELasso estimator $\overline{\bm{b}}$ relates to the debiased Lasso for WDML-CATELasso estimator. Based on the above arguments, we define the debiased Lasso estimator based on the WDML-CATELasso estimator as
\begin{align*}
\widehat{\bm{b}}^{\mathrm{WTDL}}_n
&= \widehat{\bm{\beta}}^{\mathrm{WDML}}_n + \widehat{\Theta} \bm{W}^\top\left(\widehat{\mathbb{Q}}^{\mathrm{WDML}} - \bm{W}\widehat{\bm{\beta}}^{\mathrm{WDML}}_n\right) / n.
\end{align*}
We refer to this estimator as the TDL estimator since we apply the DML and debiased Lasso together. Here, it holds that 
\begin{align*}
\sqrt{n}\left(\widehat{\bm{b}}^{\mathrm{WTDL}}_n - \bm{\beta}_0\right) = \widehat{\Theta} \bm{W}^\top \overline{\bm{\varepsilon}}^{\mathrm{Weight}} / \sqrt{n} - \widehat{\Delta}^{\mathrm{WDML}} + o_P(1),
\end{align*}
The detailed statement of this results are shown in Lemma~\ref{lem:conv_Q} with its proof. 
The term $\widehat{\Theta} \lambda \widehat{\bm{\kappa}} $ debiases the orignal Lasso estimator. The term comes from the KKT conditions. This term is similar to one in the debiased Lasso in \citet{vandeGeer2014} but different from it because we penalizes the difference $\bm{\beta}$.
As shown in Section~\ref{sec:theoretical}, we can develop the confidence intervals for the debiased CATE Lasso estimator if the inverses $\widehat{\Theta}$ is appropriately approximated. 

    From Proposition~\ref{lem:random}, it holds that 
    \begin{align*}
        \frac{1}{\widehat{\tau}^2_j} = \frac{1}{\tau^2_j + o_P(1)}.
    \end{align*}
    From the assumption, it holds that $\frac{1}{\widehat{\tau}^2_j} = o_P(1)$. Therefore, from Theorem~\ref{thm:nongaussian}, it holds that
    \begin{align*}
        \left\|\widehat{\Delta}^{\mathrm{WDML}}\right\|_1 = o_P(s_0\log(p)/\sqrt{n}) = o_P(1).
    \end{align*}
\end{proof}

\end{document}